%% file: main.tex
\documentclass[12pt]{article}
\usepackage{amsmath}
\usepackage{graphicx}
\usepackage{enumerate}
\usepackage{natbib}
\usepackage{url} 
\usepackage[flushleft]{threeparttable}
\usepackage{multirow}
\usepackage{verbatim}
\newcommand{\blind}{1}

\usepackage{varwidth}
\usepackage{amsmath}
\usepackage{numprint}
\usepackage{algorithm}
\usepackage{algpseudocode}
\usepackage{multirow}
\usepackage[pdfborder={0 0 0}, colorlinks=false]{hyperref}
\usepackage{mdframed}

\algrenewcommand\algorithmicrequire{\textbf{Input:}}
\algrenewcommand\algorithmicensure{\textbf{Output:}}
\input{definition}

\theoremstyle{definition}

\makeatletter

\newcommand{\Rmnum}[1]{\expandafter\@slowromancap\romannumeral #1@}
\makeatother

\usepackage{chngcntr}
\usepackage{apptools}
\AtAppendix{\counterwithin{lemma}{section}}

\usepackage{xr}

\begin{document}

\def\spacingset#1{\renewcommand{\baselinestretch}%
{#1}\small\normalsize} \spacingset{1}
\if1\blind
{
  \title{\bf \Large Model Checking for Regressions Based on Weighted Residual Processes with Diverging Number of Predictors}
  \author{Yue Hu\\
    Center for Data Science, Zhejiang University\\
    and \\
    Haiqi Li\\
    College of Finance and Statistics, Hunan University\\
    and \\
    Xintao Xia\\
    Center for Data Science, Zhejiang University}
  \maketitle
} \fi

\if0\blind
{
  \bigskip
  \bigskip
  \bigskip
  \begin{center}
    {\LARGE\bf Model Checking for Regressions Based on Weighted Residual Processes with Diverging Number of Predictors}
\end{center}
  \medskip
} \fi

\bigskip

\begin{abstract}
The integrated conditional moment (ICM) test is a classical and widely used method for assessing the adequacy of regression models. Although it performs well in fixed-dimension settings, its behavior changes dramatically when the predictor dimension diverges: in such regimes, the limiting null and alternative distributions of the ICM statistic degenerate to fixed constants. Moreover, when the number of predictors diverges, the commonly used wild bootstrap no longer approximates the null distribution of the ICM statistic well, leading to size distortion and substantial power loss. To address these challenges, we propose a new specification test based on weighted residual processes for evaluating the parametric form of the regression mean function in high-dimensional settings where the number of predictors increases with the sample size. We establish the asymptotic properties of the test statistic under the null hypothesis and under global and local alternatives. The proposed test maintains the nominal significance level and can detect local alternatives that deviate from the null hypothesis at the parametric rate $1/\sqrt{n}$. Furthermore, we propose a smooth residual bootstrap to approximate the limiting null distribution and establish its validity in high-dimensional settings. Two simulation studies and a real-data example are conducted to evaluate the finite-sample performance of the proposed test. 
\end{abstract}

\noindent%
{\it Keywords:} diverging number of predictors, model checking, smooth residual bootstrap, weighted residual processes.
\vfill

\newpage
\spacingset{1.9} 

\section{Introduction}
Regression models are fundamental tools for characterizing the relationship between a response variable and its predictors \citep{fan2014challenges}. Parametric regression models are particularly appealing due to their interpretability, computational efficiency, and well-established theoretical properties. However, when the parametric form is misspecified, subsequent statistical analysis and inference may be invalid, leading to misleading scientific conclusions. This highlights the importance of rigorously evaluating whether the specified parameter form is consistent with the data.

To formally examine model adequacy, we consider the following regression framework:
\begin{equation}\label{eq:model}
Y=m(X)+\varepsilon
\end{equation}
where $Y\in\mathbb{R}^{1}$ denotes the response variable associated with a $d$-dimensional predictor vector $X \in \mathbb{R}^{d}$, the regression function $m(\cdot)$ satisfies $m(x)=E(Y|X=x)$, and $\varepsilon$ is the error term satisfying $E(\varepsilon\mid X)=0$. We further assume that $\varepsilon$ is independent of $X$. Our goal is to test whether the unknown regression function $m(x)$ belongs to a prespecified parametric family $\mathcal{M} = \{m(\cdot, \beta) : \beta \in \Theta \subset \mathbb{R}^p  \}$, where $\Theta$ is the parameter space. Formally, we consider the following hypothesis testing problem:
\begin{equation}
\label{eq:test}
    \begin{split}
        H_0 &: \ \mathbb{P}\{ m(X) = m(X,\beta_0)\} = 1 \quad \text{for some } \beta_0 \in \Theta, \\
H_1 &: \ \mathbb{P}\{ m(X) = m(X,\beta)\} < 1 \quad \text{for all } \beta \in \Theta.
    \end{split}
\end{equation}
This formulation states that, under $H_0$, the regression mean function $m(\cdot)$ is correctly specified, in the sense that there exists $\beta_0\in\Theta$ such that $m(\cdot,\beta_0)$ coincides with the conditional expectation almost surely. Under $H_1$, no such parameter value exists, so the model is misspecified. Importantly, the hypothesis concerns only the specification of the conditional mean and not the full data-generating mechanism; the distribution of the error term $\varepsilon$ remains unrestricted. 

Moreover, modern statistical analyses increasingly involve settings where the predictor dimension $d$ and the parameter dimension $p$ grow with the sample size $n$ \citep{fan2010selective,hastie2015statistical}. The corresponding model specification problem is particularly important and comparatively less well studied. Consequently, developing rigorous and reliable goodness-of-fit tests in high-dimensional parametric regression has become a central problem in modern statistical methodology. In this paper, we consider a regime where both the predictor dimension $d$ and the parameter dimension $p$ diverge with the sample size $n$.

\subsection{Related literature}

There is a large literature on model specification tests when the predictor dimension is fixed. Existing methods can be broadly classified into two main categories. The first category contains \emph{locally smoothed tests} \citep{hardle1993comparing,horowitz1994testing,dette1999consistent}. For example, \cite{zheng1996consistent} utilized conditional moment restrictions with nonparametric kernel estimation to construct moment-based diagnostics. \citet{koul2004minimum} proposed a kernel-based test that used the minimized $L_2$ distance between a nonparametric estimator of the regression function and the fitted parametric model. \citet{guerre2005rate} proposed a data-driven smoothing-parameter selection criterion under which the local smoothing test is adaptively rate-optimal and consistent against Pitman local alternatives. \cite{van2008goodness} proposed to quantify the discrepancy between the empirical distribution functions of the parametric and nonparametric residuals using kernel regression. \citet{lavergne2012one} proposed a kernel-based test against a sequence of directional nonparametric alternatives to enhance power. \citet{guo2016model} proposed a dimension-reduction, model-adaptive local smoothing test for generalized linear models, which alleviates the curse of dimensionality.

The second category consists of \textit{global smoothing tests}, which replace the conditional mean independence condition $E(\varepsilon|X)=0$ with a family of parametric unconditional orthogonality conditions:
\begin{equation*}
 \mathbb{E}(\varepsilon|X)=0 \quad \Leftrightarrow \quad  \mathbb{E}\{\varepsilon w(X,t)\}=0,
\quad {\rm \forall \ t \in \Omega },
\end{equation*}
where $\Omega\subseteq\mathbb{R}$ is an index set and $w(\cdot,t)$ is a parametric family of weight functions. \cite{stinchcombe1998consistent} and \cite{escanciano2006consistent} provided conditions ensuring that the class $w(\cdot,t)$ is rich enough to guarantee the above equivalence. Common choices include the indicator weights $I(X\leq t)$ and the characteristic-function weight $\exp(it^{\top}X)$, where $i=\sqrt{-1}$ denotes the imaginary unit. \cite{bierens1982consistent,bierens1990consistent} proposed the \textit{integrated conditional moment} (ICM) test, which constructs a scalar statistic by integrating the resulting unconditional moment restrictions over $t$. See \cite{stute1997nonparametric}, \cite{bierens1997asymptotic}, \cite{stute1998model}, \cite{khmaladze2004martingale} for further developments on global smoothing tests.

Although all the above methods perform well when the predictor dimension is fixed, their power can deteriorate markedly as the dimension increases, reflecting the well-known \textit{curse of dimensionality}. Local smoothing–based tests typically rely on kernel smoothing or kernel regression, and it is well known that the performance of kernel-based methods can deteriorate even in moderate dimensions. For global smoothing tests, the test statistics often admit a pairwise-sum representation with kernel-type weights depending on $\|X_j - X_k\|^2$, as illustrated in \eqref{eq:test_icm}. In high dimensions, interpoint distances tend to concentrate, leading to severe data sparsity, reduced discrimination in the kernel weights, and substantial loss of testing power. 

In high-dimensional settings, the model specification test is comparatively less well studied, and only a limited number of test statistics have been developed to assess the adequacy of parametric regression models. \citet{tan2019adaptive, tan2025weighted} proposed tests built on sufficient dimension reduction (SDR) for single-index and multi-index models, respectively. These approaches are most effective when the regression function admits a low-dimensional index representation. However, because such constructions are intrinsically tied to dimension-reduction structures, they do not extend naturally to general parametric models. \cite{tan2025weighted} proposed weighted residual empirical process-based tests for general parametric regression models using indicator weights. In this work, we propose a novel test statistic that controls type I error and achieves reasonable power against alternatives in high-dimensional settings.

\subsection{Our contributions}

To address the degeneracy of classical ICM tests and the failure of the standard wild bootstrap in high dimensions \citep{bierens1982consistent,tan2022integrated}, we develop a new goodness-of-fit procedure that remains valid as the dimension diverges. The proposed test avoids degeneracy under both the null and alternative hypotheses and remains effective in high-dimensional settings. We further show that the proposed test remains consistent against fixed alternatives and can detect local alternatives approaching the null at the parametric rate $1/\sqrt{n}$ in the diverging dimension setting. Since the limiting null distribution is not asymptotically distribution-free, we further introduce a smooth residual bootstrap and establish its validity for consistently approximating the null distribution in high-dimensional settings. Extensive simulations verify the theoretical findings, and an application to a real dataset illustrates the practical utility of our method.

\subsection{Organization}
The rest of the paper is organized as follows. In Section~\ref{sec:prelim}, we review the background of the ICM test and explain why it encounters difficulties in high-dimensional settings. In Section~\ref{sec:method}, we introduce the proposed test based on weighted residual processes and develop a smooth residual bootstrap procedure to approximate the null distribution of the test statistic. Section~\ref{sec:theory} establishes the asymptotic properties of both the test statistic and the bootstrap distribution under the null and alternative hypotheses. We also propose a data-driven strategy for selecting the weight function under directional and nonparametric alternatives. In Section~\ref{sec:num}, we report simulation results and a real-data example to evaluate the finite-sample performance of the proposed method. Section~\ref{sec:con} concludes with a discussion of possible future directions. All technical proofs are deferred to the appendix.

\section{Preliminary}
\label{sec:prelim}
In this section, we briefly review the integrated conditional moment (ICM) approach for testing the adequacy of the regression model. Let $(Y,X)\in \mathbb{R}\times\mathbb{R}^d$ be a random vector that satisfies the regression model \eqref{eq:model}. We observe \textit{independent and identically distributed} (i.i.d.) samples $\{(Y_i,X_i)\}_{i=1}^n$ from the joint distribution of $(Y,X)$. Let $\tilde{\beta}_0$ denote the population least-squares estimator of the regression function onto the parametric model class $\Theta$,
\begin{equation}\label{eq:population_ols}
\widetilde{\beta}_0=\operatorname{argmin}_{\beta\in \Theta} \mathbb{E}\{Y-m(X,\beta)\}^2 = \operatorname{argmin}_{\beta\in \Theta}\mathbb{E}\{m(X)-m(X,\beta)\}^2
\end{equation}
and we define the approximation residual by $e=Y-m(X,\tilde{\beta}_0)$. 

Under the null hypothesis, we have $m(X,\widetilde{\beta}_0)=m(X)$, and therefore $e=\varepsilon$, which satisfies $E(e|X)=0$. The central idea of ICM \citep{bierens1982consistent} is to convert the conditional moment restriction $E(e|X)=0$ into a continuum of unconditional moment restrictions through an appropriate class of weight functions. Specifically, \citet{bierens1982consistent} proposed using the characteristic function as the weight function, $w(X,t)=\exp(i t^\top X)$, where $i=\sqrt{-1}$ denotes the imaginary unit. Let $\widehat{\beta}_n$ denote the least squares estimator of $\widetilde{\beta}_0$, defined by
\begin{equation*}
\widehat{\beta}_n =\operatorname{argmin}_{\beta\in \Theta} \sum_{i=1}^{n}\{Y_i-m(X_i,\beta)\}^2.
\end{equation*}
and we define the fitted residuals by $\widehat{e}_i=Y_i-m(X_i,\widehat{\beta}_n)$. The ICM statistic is defined as
\begin{equation}
\label{eq:icm}
\text{ICM}_n = \int_{\Gamma} \left| \frac{1}{\sqrt{n}} \sum_{j=1}^{n} \widehat{e}_j \exp(i t^\top \Phi(X_j)) \right|^2 d\mu(t),
\end{equation}
where $\widehat{e}_j$ denotes the fitted residual, $\Gamma=\prod_{i=1}^{d}[-\epsilon_i,\epsilon_i]$ for some $\epsilon_i>0$, $\mu(\cdot)$ is the Lebesgue measure on $\Gamma$, and $\Phi(\cdot)$ is a bounded smoothing transformation from $\mathbb{R}^d$ to $\mathbb{R}^d$. The limiting distribution of $\text{ICM}_n$ differs under the null and alternative hypotheses, which allows the test to discriminate between them.

Although effective, implementing the ICM statistic involves numerical integration over a $d$-dimensional region, leading to a rapidly increasing computational burden as $d$ increases. To overcome such difficulty, \cite{escanciano2006consistent, lavergne2008breaking} proposed replacing the Lebesgue measure on a compact set with the standard normal measure over $\mathbb{R}^d$. Specifically, when $\mu$ is taken to be the standard normal distribution on $\Gamma=\mathbb{R}^d$, and $\Phi(\cdot)$ is the identity map, the ICM statistic admits the simplified form
\begin{equation}
\begin{split}
\label{eq:test_icm}
\text{ICM}_n &= \int_{\mathbb{R}^d} \left| \frac{1}{\sqrt{n}} \sum_{j=1}^{n} \widehat{e}_j \exp(i t^\top X_j) \right|^2 \phi(t) dt \\
&= \frac{1}{n} \sum_{j,k=1}^{n} \widehat{e}_j \widehat{e}_k \exp \Big(-\frac{1}{2} \|X_j - X_k\|^2 \Big),
\end{split}
\end{equation}
where $\phi(t)$ denotes the standard normal density on $\mathbb{R}^d$.

When the dimension is fixed, the asymptotic properties of the ICM test are well-established \citep{bierens1997asymptotic}. Under the null hypothesis, the test statistic converges in distribution to a nondegenerate distribution, while under the fixed alternative, it diverges to infinity in probability. However, the asymptotic null distribution depends on the unknown joint distribution of $(X,Y)$ and is thus not asymptotically pivotal. Nevertheless, it can be consistently approximated by the wild bootstrap; see, for example, \citet{stute1998bootstrap} and \citet{dominguez2005power}.

When the dimension diverges as the sample size increases, the asymptotic behavior of the ICM statistic changes substantially. To see this, note that the pairwise-sum representation form of ICM in \eqref{eq:test_icm} contains exponential weights of the form, $\exp (-\|X_j - X_k\|^2/2)$, whose expectation decays exponentially fast in $p$ under mild regularity conditions. Consequently, the random fluctuation of the test statistic vanishes as the dimension increases, and the asymptotic distribution of the ICM test statistics becomes degenerate. Formally, \cite{tan2025weighted} showed that, under the conditions $p\to\infty$, $p^2/n\to 0$ and $\log(n)/p\to 0$, the ICM statistic converges in probability to a finite constant under both the null and alternative hypotheses. Similarly, when the dimension diverges from the sample size, the wild bootstrap version of the ICM statistic converges to the same finite constant as the original statistic. Thus, the bootstrap critical values are no longer valid, and the test fails to maintain its nominal size.

To address the degeneracy of the classical ICM statistic and the invalidity of the wild bootstrap in high-dimensional settings, we develop a new test based on weighted residual processes. Unlike the classical ICM approach, our construction is built on a one-dimensional function of the residuals rather than on high-dimensional functions of the covariates. With a suitable choice of weight function, the resulting statistic depends asymptotically only on one-dimensional quantities, thereby substantially avoiding the curse of dimensionality and ensuring a nondegenerate null limiting distribution even as the dimension diverges with the sample size. Since the null limiting distribution is not asymptotically distribution-free, we propose a smoothed residual bootstrap procedure for approximating critical values and establish its validity under standard regularity conditions.

\section{Construction of the Test Statistic and Bootstrap Approximation}
 \label{sec:method}
In this section, we introduce the proposed test statistic, explain its construction, and develop a smooth residual bootstrap procedure to approximate the critical values. By the definition of the population least-squares estimator $\tilde{\beta}_0$ in \eqref{eq:population_ols}, the null hypothesis in \eqref{eq:test} can be equivalently written as
\begin{eqnarray*}
H_0:\ \mathbb{P}\{m(X)=m(X,\tilde{\beta}_0)\}=1.
\end{eqnarray*}
Under the null hypothesis, we have $\beta_0=\widetilde{\beta}_0$ and the regression residual $e:=Y-m(X,\tilde{\beta}_0)=Y-m(X,\beta_0)=\varepsilon$ is independent of $X$. The classical ICM test statistic in \eqref{eq:icm} is motivated by the moment condition
\begin{equation*}
    \mathbb{E}\{\varepsilon \exp(it^{\top}X)\}
    =\mathbb{E}\bigl\{\mathbb{E}(\varepsilon\mid X)\exp(it^{\top}X)\bigr\}
    =0
\end{equation*}
under the null hypothesis. However, when $X$ is high-dimensional, the resulting procedure requires integration over a $p$-dimensional domain and suffers from the curse of dimensionality.

To overcome the difficulties of the classical ICM test in high-dimensional settings, we consider a real-valued weight function $g(X):\mathbb{R}^{d}\to\mathbb{R}$ and write $g_0(X)=g(X)-\mathbb{E}\{g(X)\}$. We consider the random variable $g_0(X)\exp(it e).$ Since the covariate enters only through the real-valued weight function $g_0(X)$, rather than through the high-dimensional transform $\exp(it^{\top}X)$, the resulting procedure avoids the curse of dimensionality and the integration over a $p$-dimensional space. We now study the behavior of this quantity under the null and alternative hypotheses. Under $H_0$, we have $e=Y-m(X,\widetilde{\beta}_0)=\varepsilon$ is independent of $X$, by \eqref{eq:model}. Hence,
\begin{eqnarray*}
\mathbb{E}\{g_0(X)\exp(it e)\}=\mathbb{E}\{g_0(X)\}\mathbb{E}\{\exp(it\varepsilon)\}=0,\quad \text{for all } t\in\mathbb{R}.
\end{eqnarray*}
Under the alternative hypothesis $H_1$, we have $e=m(X)-m(X,\tilde{\beta}_0)+\varepsilon $. We can choose a proper weight function $g(X)$ such that
\begin{eqnarray*}
\mathbb{E}\{g_0(X) \exp(ite)\} = \mathbb{E}\big[g_0(X) \exp[it\{ m(X)-m(X,\tilde{\beta}_0)+\varepsilon\}]\big] \neq 0,
\end{eqnarray*}
for some $t \in \mathbb{R}$. These different behaviors under the null and alternative hypotheses provide the key motivation for the proposed test statistic.

We then provide details of the proposed test statistics. Let $\{(X_i,Y_i)\}_{i=1}^n$ be independent and identically distributed observations from $(X,Y)$. 
Note that the constructions based on $\exp(it\varepsilon)$ and $\cos(t\varepsilon)+\sin(t\varepsilon)$ yield the same test statistic in \eqref{eq:pro_test} whenever $\varphi(t)$ is even. To avoid working with the complex-valued function $\exp(it\varepsilon)$, we therefore assume that $\varphi(t)$ is even and use the real-valued transformation $\cos(t\varepsilon)+\sin(t\varepsilon)$. This leads to the residual empirical process
\begin{equation}\label{eq:hatu}
    \widehat{U}_n(t)=\frac{1}{\sqrt{n}}\sum_{i=1}^{n}\{g(X_i)-\bar{g}\}\{\cos(t\widehat{e}_i)+\sin(t\widehat{e}_i)\},
\end{equation}
where
\begin{equation*}
\bar{g}=n^{-1}\sum_{i=1}^{n} g(X_i), \qquad
\widehat{e}_i=Y_i-m(X_i,\widehat{\beta}_n),
\end{equation*}
with $\widehat{\beta}_n$ being the least squares estimator of $\tilde{\beta}_0$. However, $\widehat{U}_n(t)$ cannot be used directly as a test statistic, because it is a process indexed by $t\in\mathbb{R}$. We therefore aggregate the information in $\widehat{U}_n(t)$ over $t$ using an even weight function $\varphi(t)$, and define the test statistic as
\begin{equation}\label{eq:pro_test}
\text{WICM}_n =  \int_{\mathbb{R}}	|\widehat{U}_{n}(t)|^2 \varphi(t)dt.
\end{equation}

As established in Theorem \ref{theorem:1}, the limiting null distribution of the proposed test statistic $\text{WICM}_n$ is not asymptotically pivotal because it depends on the unknown residual distribution. We therefore propose a smooth residual bootstrap procedure to approximate the null distribution of $\text{WICM}_n$ and compute the corresponding critical values; see \citet{dette2007new}, \citet{neumeyer2009smooth}, and \citet{tan2025weighted}. The procedure is summarized below.

\begin{enumerate}
\item In the $j$th bootstrap replication, generate the bootstrap errors 
\begin{equation*}    \varepsilon_{i,j}^{*}=\tilde{\varepsilon}_{i,j}^{*}+v_n z_{i,j}, \qquad 1\le i\le n,
\end{equation*}
where $\tilde{\varepsilon}_{i,j}^*$ are sampled with replacement from the centered residuals $\tilde{\varepsilon}_i=\widehat{e}_i-\sum_{i=1}^{n}\widehat{e}_i/n$, $v_n$ is a smoothing parameter, and $z_{1,j}, \cdots, z_{n,j}$ are independent and identical centered random variables with continuous density $l(\cdot)$.

\item Generate the corresponding bootstrap responses by $Y_{i,j}^* = \widehat{m}(X_i,\widehat{\beta}_n) + \varepsilon_{i,j}^*$. Let $\widehat{\beta}_{n,j}^{*}$ be the least squares estimator based on the bootstrap sample $\{(Y^*_{i,j},X_i)\}_{i=1}^n$.

\item Define the bootstrap version of the test statistic by
$$\text{WICM}_{n,j}^*=\int_{\mathbb{R}}	|U^*_{n,j}(t)|^2 \varphi(t)dt,$$
where 
$$U^*_{n,j}(t)=\frac{1}{\sqrt{n}}\sum_{i=1}^{n}(g(X_i)-\bar{g})\exp(it\widehat{\varepsilon}^*_{i,j})$$ with $\widehat{\varepsilon}^{*}_{i,j}=Y_{i,j}^* - \widehat{m}(X_i,\widehat{\beta}^*_{n,j})$.

\item Repeat Steps 1--3 independently for $B$ bootstrap replications, where $B$ is chosen to be sufficiently large. Let $\text{WICM}_{n}^*$ denote the resulting bootstrap distribution of the test statistic. For a given significance level $\alpha$, define the critical value $\widehat{c}_{\alpha}$ as the upper $\alpha$-quantile of $\text{WICM}_{n}^*$. The null hypothesis is rejected when $\text{WICM}_n \ge \widehat{c}_{\alpha}$.
\end{enumerate}

Although the test statistic $\text{WICM}_n$ involves only one-dimensional numerical integral and thus avoids the curse of dimensionality, the integral admits a further simplification. For any even function $\varphi(t)$, we have
\begin{equation*}
    \begin{split}
        \text{WICM}_n &= \frac{1}{n}\sum_{j,k=1}^{n} \{g(X_j)-\bar{g}\}\{g(X_k)-\bar{g}\}  \int_{\mathbb{R}} \cos\{(\widehat{e}_j- \widehat{e}_k)t \} \varphi(t) dt\\
        & = \frac{1}{n}\sum_{j,k=1}^{n} \{g(X_j)-\bar{g}\}\{g(X_k)-\bar{g}\}  M_{\varphi}\{(\widehat{e}_j- \widehat{e}_k)\},
    \end{split}
\end{equation*}
where $M_{\varphi}(\varepsilon)= \int_{\mathbb{R}} \cos( \varepsilon t)\varphi(t)dt$. For a suitable choice of $\varphi(t)$, $M_{\varphi}(\widehat{e}_j-\widehat{e}_k)$ can be evaluated in closed form. Many weight functions $\varphi(t)$ are available for this purpose. In the numerical studies, we choose $\varphi(t)$ to be the standard normal density; see \citet{tan2025weighted} and \citet{escanciano2006consistent} for related discussions. It then follows that
\begin{equation*}
\text{WICM}_n = \frac{1}{n}\sum_{j,k=1}^{n}\{g(X_j)-\bar{g}\}\{g(X_k)-\bar{g}\}\exp\bigg\{-\frac{1}{2}(\widehat{e}_j-\widehat{e}_k)^2\bigg\}.
\end{equation*} 

\section{Theoretical Results}
\label{sec:theory}
This section studies the asymptotic behavior of the proposed test statistic under the null and alternative hypotheses and establishes the validity of the bootstrap approximation.

\subsection{Asymptotic properties of the test statistic}

We first establish the asymptotic properties of the test statistic $\text{WICM}_n$ under the null hypothesis, fixed alternatives, and local alternatives. In the assumptions below, we use $F(X)$ to denote a nonnegative measurable envelope function such that $\mathbb{E}\{F(X)^4\}<\infty$. We use the following regularity conditions.

\begin{assumption}
\label{as:model}
The function $m(X,\beta)$ is twice continuously differentiable in $\beta$. Let
$$\dot{m}(X,\beta)=\frac{\partial m(X,\beta) }{\partial \beta}, \ddot{m}(X,\beta)=\frac{\partial \dot{m}(X,\beta) }{\partial \beta^{\top}}.$$
Let $\dot{m}_k(X,\beta)$ denote the $k$th component of $\dot{m}(X,\beta)$ and $\ddot{m}_{kl}(X,\beta)$ denote the $(k,l)$th entry of $\ddot{m}(X,\beta)$, for $1\le k,l\le p$. Assume that for all $\beta\in U(\tilde\beta_0)$ and $1\le k,l\le p$, $|m(X,\beta)|\le F(X),
|\dot{m}_k(X,\beta)|\le F(X),
|\ddot{m}_{kl}(X,\beta)|\le F(X),
|g_0(X)|\le F(X)$, where $U(\tilde\beta_0)$ is a neighborhood of $\tilde\beta_0$. In addition, assume that $\mathbb{E}(\varepsilon^2)<\infty$.
\end{assumption}

\begin{assumption}
\label{as:var}
Let $\phi(X,\beta)=\{m(X)-m(X_i,\beta)\}\dot{m}(X,\beta)$ and $\dot{\phi}(X,\beta)=\partial\phi(X,\beta)/\partial\beta$. We use $\phi_k(X,\beta)$ to denote the $k$th component of $\phi(X,\beta)$ and $\dot\phi_{kl}(X,\beta)$ to denote the $(k,l)$th entry of $\dot\phi(X,\beta)$, where $1\le k,l\le p$. Then, $|\phi_k(X,\widetilde{\beta}_0)|\leq F(X)$ and $|\dot{\phi}_{kl}(X,\beta)|\leq F(X)$ for all $\beta\in U(\tilde\beta_0)$ and $1\le k,l\le p$, where $U(\tilde\beta_0)$ is a neighborhood of $\tilde\beta_0$.
\end{assumption}

\begin{assumption}
\label{as:eigen}
Let $\Sigma(\beta) = -\mathbb{E}\{\dot{\phi}(X,\beta)\}$. Assume that $\Sigma(\beta)$ is nonsingular and that there exist constants $0<C_1<C_2<\infty$ such that
$$0< C_1 \leq \lambda_{\min}\{\Sigma(\beta)\}\leq \lambda_{\max}\{\Sigma(\beta)\} \leq C_2 < \infty ,$$
where $\lambda_{\min}\{\Sigma(\beta)\}$ and $\lambda_{\max}\{\Sigma(\beta)\}$ denote the smallest and largest eigenvalues of $\Sigma(\beta)$, respectively, for all $\beta\in U(\tilde\beta_0)$.

Let $\Sigma_l(\beta)$ denote the $l$th row of $\Sigma(\beta)$, and let $\dot{\Sigma}_l(\beta)=\partial\Sigma_l(\beta)/\partial\beta$. Assume that there exists a constant $C>0$ such that
$$ \max_{1\leq l,k \leq p}|\lambda_k\{\dot{\Sigma}_l(\beta)\}| \leq C,$$
for all $\beta\in U(\tilde\beta_0)$, where $\{\lambda_k(\dot{\Sigma}_l(\beta)):1 \leq k \leq p \}$ are the eigenvalues of the matrix $\dot{\Sigma}_l(\beta)$.    
\end{assumption}

\begin{assumption}
\label{as:lip}
Let $\dot{\phi}_{kl}(X,\beta)$ and $\ddot{m}_{kl}(X,\beta)$ denote the $(k,l)$th entries of $\dot{\phi}(X,\beta)$ and $\ddot{m}(X,\beta)$, respectively, for $1\le k,l\le p$. Assume that, for any $\beta_1,\beta_2\in U(\tilde{\beta}_0)$,
\begin{eqnarray*}
	|\dot{\phi}_{\beta_{kl}}(x,\beta_1)-\dot{\phi}_{\beta_{kl}}(x,\beta_2)|&\leq& \sqrt{p}||\beta_1-\beta_2||F(x),\\
	|\ddot{m}_{\beta_{kl}}(x,\beta_1)-\ddot{m}_{\beta_{kl}}(x,\beta_2)|&\leq& \sqrt{p}||\beta_1-\beta_2||F(x),
\end{eqnarray*}    
for all $1\le k,l\le p$.
\end{assumption}

\begin{assumption}
    \label{as:inter}
    The vector $\widetilde{\beta}_0$ lies in the interior of the compact subset $\Theta$ and is the unique minimizer of \eqref{eq:population_ols}.
\end{assumption}

\begin{assumption}
\label{as:eigen_F}
    Assume that there exists a constant \(C>0\), such that
\begin{eqnarray*}
	&&\lambda_{\max}[\mathbb{E}\{F^2(X)\dot{m}(X,\beta)\dot{m}(X,\beta)^{\top}\}] \leq C,\\
	&&\lambda_{\max}[\mathbb{E}\{F(X)\ddot{m}(X,\beta) \} ] \leq C,
\end{eqnarray*}
 for all $\beta\in U(\tilde\beta_0)$.
\end{assumption} 

\begin{assumption}
    \label{as:weight}
    The weight function $\varphi(t)$ is positive and satisfies $\varphi(t)=\varphi(-t)$, $\int_{\mathbb{R}} \varphi(t)dt=O(1)$ and $\int_{\mathbb{R}}t^4 \varphi(t)dt=O(1)$.
\end{assumption}

Assumption \ref{as:model} requires that the regression function $m(x,\beta)$ is twice differentiable and the corresponding derivatives have bounded fourth-moment. This condition is weaker than that in \cite{tan2025weighted}, which assumes the existence of third-order derivatives of $m(x, \beta)$. Assumption \ref{as:var} is standard in the literature for establishing the convergence of the least squares estimator; see, for example, \cite{fan2004nonconcave,tan2025weighted}. Assumptions \ref{as:eigen} and \ref{as:lip} are imposed to control the error terms in the estimation of $\tilde{\beta}_0$. Assumption \ref{as:eigen} guarantees local nonsingularity and smoothness of the Jacobian matrix, whereas Assumption \ref{as:lip} imposes local Lipschitz continuity on $\dot{\phi}_{kl}(X,\beta)$ and $\ddot{m}_{kl}(X,\beta)$. Assumption \ref{as:inter} is standard in the M-estimation literature; see, for example, \cite{van2000asymptotic}. Taken together, Assumptions \ref{as:model}--\ref{as:inter} serve as regularity conditions for establishing the $\ell_2$-consistency and asymptotic linear expansion of $\widehat{\beta}_n - \widetilde{\beta}_0$ in high-dimensional settings \citep{tan2025weighted}. Assumption \ref{as:eigen_F} requires that the largest eigenvalues of second-moment matrices associated with the score function are uniformly bounded. This condition controls the variability of the estimating components in high-dimensional settings and is useful for bounding the error terms; see, for example, \cite{fan2004nonconcave,tan2025weighted}. Assumption \ref{as:weight} imposes mild regularity conditions on the weight function. Specifically, it requires symmetry, integrability, and finite higher-order moments. These conditions ensure that the weighted integrals are well-defined and control higher-order terms in the asymptotic analysis.

Under Assumptions \ref{as:model}--\ref{as:weight}, we first establish an asymptotic expansion for $\widehat{U}_{n}(t)$ under the null hypothesis $H_0$. Specifically, we show that
\begin{equation}\label{eq:asymp_U}
	\begin{split}
	\widehat{U}_{n}(t)
	&= \frac{1}{\sqrt{n}}\sum_{i=1}^{n}\bigg(g_0(X_i)[\cos(t\varepsilon_i)+\sin(t\varepsilon_i) -  \mathbb{E}\{\cos(t \varepsilon_i)+\sin(t \varepsilon_i)\}]  \\    
	&+W(t)^{\top}\Sigma^{-1}\dot{m}(X_i,\beta_0)\varepsilon_it \bigg)+R_n(t)  \\
	&=: U_n(t)+R_n(t),
	\end{split}
\end{equation}
where 
$$W(t)=\mathbb{E}[g_0(X)\dot{m}(X,\beta_0)\{\sin(t\varepsilon)-\cos(t\varepsilon)\}]$$
and the remainder $R_n(t)$ satisfies $$\int_{\mathbb{R}}	|R_n(t)|^2 \varphi(t)dt =o_p(1).$$
Thus, $\widehat{U}_{n}(t)$ admits a uniformly asymptotically linear representation. This decomposition separates the leading stochastic component from the negligible remainder term and provides the basis for characterizing the asymptotic behavior of $\text{WICM}_n$ under $H_0$. The proof of \eqref{eq:asymp_U} is given in the appendix. We then have the following result.

\begin{theorem}\label{theorem:1}
Suppose that Assumptions \ref{as:model}--\ref{as:weight} hold. If $\{p^3 \log (n)\}/n \rightarrow 0$, then, under the null hypothesis, $\text{WICM}_n$ converges in distribution as
\begin{equation}\label{eq:asymp_WICM}
	\text{WICM}_n \overset{d}{\longrightarrow}\int_{\mathbb{R}}|U_\infty(t)|^2 \varphi(t)dt,
\end{equation}
where $U_\infty(t)$ is a mean-zero Gaussian process with covariance function $K(s,t)$, defined as the pointwise limit of $K_n(s,t)$. Here $K_n(s,t)$ denotes the covariance function of $U_{n}(t)$,
\begingroup
\allowdisplaybreaks
\begin{align*}
	&K_n (s,t)\\
	=& \mathbb{E} \big( g^2_0(X) [\cos(s\varepsilon)+\sin(s\varepsilon) -  \mathbb{E}\{\cos(s \varepsilon)+\sin(s \varepsilon)\} ]\\
    &\times[\cos(t\varepsilon)+\sin(t\varepsilon) -  \mathbb{E}\{\cos(t \varepsilon)+\sin(t \varepsilon)\} ] \big) \\
	& +W(s)^{\top}\Sigma^{-1} \mathbb{E}\big( g_0(X)[\cos(t\varepsilon)+\sin(t\varepsilon) -  \mathbb{E}\{\cos(t \varepsilon)+\sin(t \varepsilon)\} ] \varepsilon  \dot{m}(X,\beta_0) \big)s \\
	&  +W(t)^{\top}\Sigma^{-1} \mathbb{E}\big( g_0(X)[\cos(s\varepsilon)+\sin(s\varepsilon) -  \mathbb{E}\{\cos(s \varepsilon)+\sin(s \varepsilon)\} ] \varepsilon  \dot{m}(X,\beta_0) \big)t\\
	&  +W(s)^{\top}\Sigma^{-1} \mathbb{E}\{ \varepsilon^2   \dot{m}(X,\beta_0) \dot{m}(X,\beta_0)^{\top}\}\Sigma^{-1} W(t)st.
\end{align*}
\endgroup
We assume further that the limiting covariance function $K(s,t)$ is not identically zero.
\end{theorem}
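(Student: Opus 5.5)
The plan has two parts. First, take the expansion \eqref{eq:asymp_U} as given. Second, prove that the leading process $U_n$ converges weakly in the Hilbert space $L^2(\mathbb{R},\varphi(t)dt)$, and conclude with the continuous mapping theorem.

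\textbf{Step 1: reducing to $U_n$.} Write $\|f\|_\varphi^2=\int_{\mathbb{R}}|f(t)|^2\varphi(t)dt$, so that $\text{WICM}_n=\|U_n+R_n\|_\varphi^2$. By the triangle inequality,
\[
\bigl|\|U_n+R_n\|_\varphi-\|U_n\|_\varphi\bigr|\le \|R_n\|_\varphi=o_p(1).
\]
It therefore suffices to show $\|U_n\|_\varphi^2\overset{d}{\to}\|U_\infty\|_\varphi^2$. For this we need $\|U_n\|_\varphi=O_p(1)$, which will follow from the moment bound in Step 3.

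Establishing \eqref{eq:asymp_U} itself is where the rate $p^3\log n/n\to0$ is used. This involves three ingredients:
\begin{itemize}
\item the $\ell_2$ consistency $\|\widehat\beta_n-\beta_0\|=O_p(\sqrt{p/n})$;
\item the linear expansion $\widehat\beta_n-\beta_0=\Sigma^{-1}n^{-1}\sum_i\dot m(X_i,\beta_0)\varepsilon_i+r_n$, built from Assumptions \ref{as:model}--\ref{as:inter};
\item a second-order Taylor expansion of $\cos(t\widehat e_i)+\sin(t\widehat e_i)$ around $\varepsilon_i$, whose quadratic term is bounded by $t^2 n^{-1/2}\sum_i F(X_i)^3\|\widehat\beta_n-\beta_0\|^2$, of order $t^2 p^2/\sqrt n$ up to the envelope. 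Assumptions \ref{as:eigen_F} and \ref{as:weight} are what make this remainder integrable against $\varphi$.
\end{itemize}
The paper states this expansion is proved in the appendix, so I will take it as given.

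\textbf{Step 2: $U_n$ is a normalized i.i.d.\ sum.} We have $U_n(t)=n^{-1/2}\sum_i \xi_i(t)$ with
\[
\xi_i(t)=g_0(X_i)\,\zeta_i(t)+tW(t)^\top\Sigma^{-1}\dot m(X_i,\beta_0)\varepsilon_i,
\]
where $\zeta_i(t)$ is the centered trigonometric term. Under $H_0$, $\varepsilon_i$ is independent of $X_i$, $\mathbb{E}\varepsilon=0$ and $\mathbb{E}g_0=0$, so $\mathbb{E}\xi_i(t)=0$. The covariance of $\xi_i(s)$ and $\xi_i(t)$ is exactly $K_n(s,t)$. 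Note that $\xi_i$ depends on $n$ through $p$, so the $\xi_i$ form a triangular array of i.i.d.\ random elements of $L^2(\varphi)$.

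\textbf{Step 3: moment bound in $L^2(\varphi)$.} I will show $\mathbb{E}\|\xi_1\|_\varphi^2\le C$ uniformly in $n$. The first term is bounded because $|\zeta|\le 4$, $\mathbb{E}F^2<\infty$ and $\int\varphi<\infty$. For the second term, Cauchy--Schwarz and Assumption \ref{as:eigen} give
\[
\mathbb{E}\{W^\top\Sigma^{-1}\dot m\,\varepsilon\}^2=\mathbb{E}\varepsilon^2\; W^\top\Sigma^{-1}\mathbb{E}(\dot m\dot m^\top)\Sigma^{-1}W .
\]
Here $W(t)$ is a vector of the form $\mathbb{E}(\dot m\, h)$ with $|h|\le 2F$. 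Writing $W(t)=\mathbb{E}(\dot m h)$ as the projection coefficient, the quantity $W^\top \mathbb{E}(\dot m\dot m^\top)^{-1}W\le\mathbb{E}h^2$ is bounded. Combined with the eigenvalue bounds of Assumptions \ref{as:eigen} and \ref{as:eigen_F}, this keeps the quadratic form $O(1)$ uniformly in $p$ and $t$. Then $\int t^2\varphi<\infty$, which is implied by Assumption \ref{as:weight}, gives the bound. Consequently $\mathbb{E}\|U_n\|_\varphi^2=\mathbb{E}\|\xi_1\|_\varphi^2\le C$.

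\textbf{Step 4: weak convergence in $L^2(\varphi)$ (the main obstacle).} The difficulty is that the summands change with $p$, so a fixed-dimensional CLT does not apply directly. I will use a Lindeberg-type CLT for triangular arrays in separable Hilbert spaces, for example a Hilbert-space version of the results in Kundu, Majumdar and Mukherjee, or in Chen and White. This requires three things.

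\begin{enumerate}
\item \emph{Convergence of finite-dimensional projections.} For each $h\in L^2(\varphi)$, the scalar $\langle U_n,h\rangle$ satisfies a Lindeberg CLT. The variance converges to $\iint K(s,t)h(s)h(t)\varphi(s)\varphi(t)\,ds\,dt$, by pointwise convergence of $K_n\to K$ and dominated convergence, using $|K_n(s,t)|\le C(1+|s|)(1+|t|)$. The Lindeberg condition holds via a uniformly bounded $2+\delta$ moment, or more simply via uniform integrability of $\|\xi_1\|_\varphi^2$, which follows from the fourth-moment envelope $\mathbb{E}F^4<\infty$ and $\int t^4\varphi<\infty$.
\item \emph{Convergence of the covariance operator trace.} We need $\mathbb{E}\|U_n\|_\varphi^2=\int K_n(t,t)\varphi(t)dt\to\int K(t,t)\varphi(t)dt$, again by dominated convergence.
\item \emph{Tightness.} Items 1 and 2 together yield tightness in the Hilbert space. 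Concretely, take an orthonormal basis $\{e_k\}$ of $L^2(\varphi)$. The tails satisfy
\[
\sum_{k>N}\mathbb{E}\langle U_n,e_k\rangle^2=\int K_n(t,t)\varphi(t)dt-\sum_{k\le N}\langle K_n e_k,e_k\rangle ,
\]
which is uniformly small for large $N$ because each term converges to its limiting value.
\end{enumerate}

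Hence $U_n\Rightarrow U_\infty$ in $L^2(\varphi)$, where $U_\infty$ is Gaussian with covariance $K$. Since $K$ is assumed not identically zero, the limit is nondegenerate. The continuous mapping $f\mapsto\|f\|_\varphi^2$ then gives \eqref{eq:asymp_WICM}.

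I expect the only delicate point to be justifying the uniform-in-$p$ bounds on $W(t)^\top\Sigma^{-1}\dot m$ needed for the Lindeberg and uniform integrability conditions. The rest of the argument is routine once \eqref{eq:asymp_U} is available.
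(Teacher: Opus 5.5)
\textbf{The genuine gap is your Step 1.} The expansion \eqref{eq:asymp_U} is not a separately established result you can cite. The paper proves it as Step 1 of the proof of this very theorem. It is also where essentially all of the high-dimensional work is done, and where the rate $p^3\log(n)/n\to0$ is needed (through the $O_p(\sqrt{p^3/n})$ remainder in Lemma~\ref{lem:beta}). What remains after it is a CLT for an i.i.d.\ array in $L^2(\varphi)$, so as written you have proved only that \eqref{eq:asymp_U} implies \eqref{eq:asymp_WICM}.

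Moreover, the route you sketch for \eqref{eq:asymp_U} fails under the stated rate. The coordinatewise envelope gives only $\|\dot m(X,\beta)\|_2\le\sqrt p\,F(X)$. Bounding the second-order Taylor term pointwise in absolute value therefore yields $O_p(t^2p^2/\sqrt n)$, as you say. But $p^2/\sqrt n\to\infty$ for, e.g., $p\asymp n^{3/10}$, even though then $p^3\log(n)/n\to0$.

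The paper instead keeps the quadratic-form structure. The leading quadratic term is $\sqrt n\,t^2(\widehat\beta_n-\beta_0)^\top\{n^{-1}\sum_i g_0(X_i)\cos(t\varepsilon_i)\dot m(X_i,\beta_0)\dot m(X_i,\beta_0)^\top\}(\widehat\beta_n-\beta_0)$.
\begin{itemize}
\item The population matrix $W_3(t)$ has bounded spectral norm by Assumption~\ref{as:eigen_F}.
\item The empirical fluctuation is $O_p(p/\sqrt n)$ in Frobenius norm, so the term is of order $(p/\sqrt n)(1+p/\sqrt n)=o_p(1)$.
\item The correction from $\dot m(X_i,\tilde\beta)-\dot m(X_i,\beta_0)$ is handled with Assumption~\ref{as:lip}.
\end{itemize}
A complete proof also has to supply two further steps. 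First, merge the first-order sine and cosine terms into $\sqrt n(\widehat\beta_n-\beta_0)^\top tW(t)$ plus an empirical-minus-population fluctuation of order $p/\sqrt n$ in $L^2(\varphi)$. Lemma~\ref{lem:beta}, together with the $p$-free bound $\sup_t\|W(t)\|_2\le C$, then produces the linear term of $U_n$. Second, with $h_t(u)=\cos(tu)+\sin(tu)$, show the centering cross terms $(\bar g-\mathbb{E}g(X))\,n^{-1/2}\sum_i\{h_t(\widehat e_i)-h_t(\varepsilon_i)\}$ are negligible.

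\textbf{Steps 2--4 are sound and take a different route from the paper.} The paper works on compacts: it proves tightness in $\mathbb{C}(\Pi)$ via the Kolmogorov criterion $\mathbb{E}|U_n(s)-U_n(t)|^2\le C|s-t|^2$ and gets finite-dimensional convergence from Lindeberg--Feller. It then passes from $\Pi$ to $\mathbb{R}$ by a separate tail truncation based on $\int_{\Pi^c}K_n(t,t)\varphi(t)dt$. You work directly in $L^2(\varphi)$, where tightness follows from convergence of $\langle T_ne_k,e_k\rangle$ and of the trace $\int K_n(t,t)\varphi(t)dt$. This avoids both the increment bound and the truncation. Your projection bound $W^\top A^{-1}W\le\mathbb{E}h^2$, with $A=\mathbb{E}\{\dot m\dot m^\top\}=\Sigma$ under $H_0$, is a clean $p$-free way to get $K_n(t,t)\le C(1+t^2)$.

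One correction: the Lindeberg condition does not follow from $\mathbb{E}F^4<\infty$ and $\int t^4\varphi<\infty$. Only $\mathbb{E}\varepsilon^2<\infty$ is assumed, and the envelope gives no $p$-free fourth moment of $a^\top\dot m(X)$. The condition still holds. Take $Z_n=a_n^\top\dot m(X,\beta_0)$ with $\|a_n\|_2$ bounded. Then $|Z_n|\le\sqrt p\,\|a_n\|_2F(X)$, so $\mathbb{E}\{Z_n^2I(|Z_n|>\eta\sqrt n)\}\le Cp^2\mathbb{E}F^4/(\eta^2 n)\to0$. Truncating the independent $\varepsilon$ completes the argument; note that this step uses $p^2/n\to0$.
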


Theorem~\ref{theorem:1} characterizes the asymptotic distribution of the proposed test statistic under the null hypothesis. The covariance function $K_n(s,t)$ consists of three components: the first reflects the variation induced by the transformed error terms, the second captures the covariance between the score function and the transformed errors, and the third arises from the asymptotic variation of the parameter estimator. This decomposition makes explicit how parameter estimation affects the limiting distribution, thereby providing the theoretical foundation for asymptotically valid specification tests in high-dimensional regression models. 

This result also improves upon existing rate conditions in the literature. For single-index null models, \cite{tan2019adaptive} established weak convergence of the residual empirical process under the condition $ p = o(n^{1/5}) $. For more general multiple-index models, \cite{tan2025weighted} strengthened this condition to $ p = o( ( n/ \log (n) )^{1/3}) $. More recently, \cite{tan2025weighted} further improved the rate for general regression models to $ p = o(( n/ \log(n)^6 )^{1/3}) $. The rate condition in Theorem \ref{theorem:1} is weaker than that in previous work, thereby broadening the scope of the asymptotic theory.

After establishing the limiting null distribution, we next study the behavior of $\text{WICM}_n$ under alternatives. We consider a sequence of alternative models of the form
\begin{equation}\label{eq:alter_model}
H_{1n}:Y=m(X,\widetilde{\beta}_0)+r_nS(X)+\varepsilon,
\end{equation}
where $\widetilde{\beta}_0$ is the population least square estimator defined in \eqref{eq:population_ols}, $S(\cdot)$ is a real-valued nonconstant function satisfying $\mathbb{E}\{S(X)\}=0$. Let $r_n=n^{-\alpha}$, where $\alpha \in [0,1/2]$. When $\alpha=0$, model \eqref{eq:alter_model} reduces to a global alternative $H_1$; when $\alpha>0$, it corresponds to a sequence of local alternatives approaching the null.

The formulation in \eqref{eq:alter_model} is very general and can accommodate a broad class of alternative regression functions. In particular, for any regression function $m(X)$ satisfying suitable moment conditions, and $\widetilde{\beta}_0$ defined as the population least-squares estimator in \eqref{eq:population_ols}, we can write
$$Y=m(X)+\varepsilon=m(X;\widetilde{\beta}_0)+\underbrace{\{m(X)-m(X;\widetilde{\beta}_0)\}}_{=:r_nS(X)}+\varepsilon.$$
Thus, \eqref{eq:alter_model} can be viewed as a general local perturbation of the best population least-squares estimator $m(X;\widetilde{\beta}_0)$. In addition, since we focus on the least-squares estimator throughout the paper, the perturbation function $S(X)$ must satisfy an orthogonality condition. Specifically, under the alternative model \eqref{eq:alter_model}, the first-order optimality condition for $\widetilde{\beta}_0$, defined in \eqref{eq:population_ols}, implies
\begin{align*}
0&=\mathbb{E}[\{Y-m(X,\widetilde\beta_0)\}\dot{m}(X,\widetilde\beta_0)]\\
&=\mathbb{E}[\{m(X,\beta_0)+r_nS(X)+\varepsilon-m(X,\widetilde\beta_0)\}\dot{m}(X,\widetilde\beta_0)]\\
&=\mathbb{E}[\{m(X,\beta_0)+r_nS(X)-m(X,\widetilde\beta_0)\}\dot{m}(X,\widetilde\beta_0)]=\mathbb{E}[S(X)\dot{m}(X,\widetilde\beta_0)].
\end{align*}
Thus, the perturbation $S(X)$ must satisfy an orthogonality condition with respect to the score function $\dot{m}(X,\widetilde\beta_0)$. The following theorem shows that the proposed test is consistent against fixed alternatives and can detect local alternatives converging to the null at the rate $1/\sqrt{n}$. 

\begin{theorem}\label{thm:alt}
Suppose that Assumptions \ref{as:model}--\ref{as:weight} hold and that \((p^3\log n)/n \to 0\). Then:
\begin{enumerate}
\item If $\alpha=0$, corresponding to the fixed alternative $H_{1}$, and
$$
\int_{\mathbb{R}}|K^{(1)}(t)|^2\varphi(t)\,dt> C,
$$
for a constant $C>0$, then
\begin{equation*}
    \frac{1}{n}\int_{\mathbb{R}}|\widehat{U}_n(t)|^2\varphi(t)dt\stackrel{p}{\to}\int_{\mathbb{R}}|K^{(1)}(t)|^2\varphi(t)dt,
\end{equation*}
where
$$K^{(1)}(t):=\mathbb{E}[g_0(X_i)\{\cos(te_i)-\cos(t\varepsilon_i)+\sin(te_i)-\sin(t\varepsilon_i)\}],$$
and $e_i=\varepsilon_i+S(X_i)$.

\item If $\alpha\in(0,1/2)$, corresponding to a sequence of local alternatives $H_{1n}$, and
$$
\int_{\mathbb{R}}|K^{(2)}(t)|^2t^2\varphi(t)\,dt>C,
$$
for a constant $C>0$, then
\begin{equation*}
    \frac{1}{r_n^2n}\int_{\mathbb{R}}|\widehat{U}_n(t)|^2\varphi(t)dt\stackrel{p}{\to}\int_{\mathbb{R}}|K^{(2)}(t)|^2t^2\varphi(t)dt,
\end{equation*}
where
$$K^{(2)}(t):=\mathbb{E}[g_0(X_i)S(X_i)\{\cos(t\varepsilon_i)-\sin(t\varepsilon_i)\}].$$

\item If $\alpha=1/2$ and there exists a constant $C>0$ such that
$$\lambda_{\max}[\mathbb{E}\{g_0(X)^2S(X)^2\dot{m}(X,\Tilde{\beta}_0)\dot{m}(X,\Tilde{\beta}_0)^{\top}\}] \leq C,$$ then, under the local alternatives $H_{1n}$,
\begin{eqnarray*}
	\int_{\mathbb{R}}|\widehat{U}_{n}(t)|^2 \varphi(t)dt \stackrel{d}{\to}\int_{\mathbb{R}}|U_{\infty}(t)+K^{(2)}(t)t|^2 \varphi(t)dt,
\end{eqnarray*}
where $U_{\infty}(t)$ is the zero-mean Gaussian process defined in Theorem \ref{theorem:1}.
\end{enumerate}
\end{theorem}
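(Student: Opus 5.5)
The plan is to decompose $\widehat U_n(t)$ under $H_{1n}$ the same way as in the null expansion \eqref{eq:asymp_U}. The extra ingredient is that the population residual is now $e_i=\varepsilon_i+r_nS(X_i)$ rather than $\varepsilon_i$. Write $\psi_t(u)=\cos(tu)+\sin(tu)$, so that $\psi_t'(u)=t\{\cos(tu)-\sin(tu)\}$. We have $\widehat e_i=e_i-\{m(X_i,\widehat\beta_n)-m(X_i,\widetilde\beta_0)\}$. Taylor expanding $\psi_t(\widehat e_i)$ around $e_i$ gives three pieces:
\[
\widehat U_n(t)=\frac{1}{\sqrt n}\sum_{i=1}^n\{g(X_i)-\bar g\}\psi_t(e_i)-\frac{1}{\sqrt n}\sum_{i=1}^n\{g(X_i)-\bar g\}\psi_t'(e_i)\dot m(X_i,\widetilde\beta_0)^{\top}(\widehat\beta_n-\widetilde\beta_0)+\widetilde R_n(t).
\]
The remainder $\widetilde R_n(t)$ involves $t^2\|\widehat\beta_n-\widetilde\beta_0\|^2$ times envelope terms, as in the proof of \eqref{eq:asymp_U}. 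Under $H_{1n}$, the orthogonality $\mathbb{E}\{S(X)\dot m(X,\widetilde\beta_0)\}=0$ implies that the least squares score at $\widetilde\beta_0$ is still a centered i.i.d.\ sum, namely $\sum_i\{\varepsilon_i+r_nS(X_i)\}\dot m(X_i,\widetilde\beta_0)$. So the same $\ell_2$-consistency and Bahadur expansion used under the null (Assumptions \ref{as:model}--\ref{as:inter}, with $p^3\log n/n\to0$) should give
\[
\widehat\beta_n-\widetilde\beta_0=\Sigma^{-1}n^{-1}\sum_i e_i\dot m(X_i,\widetilde\beta_0)+o_p(n^{-1/2}),\qquad \|\widehat\beta_n-\widetilde\beta_0\|=O_p(\sqrt{p/n}).
\]
The first step is therefore to re-verify this expansion with $e_i$ in place of $\varepsilon_i$. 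This is where the extra eigenvalue condition in part 3 enters: it bounds the variance of $r_n\sum_iS(X_i)\dot m(X_i,\widetilde\beta_0)$ (and its $g_0$-weighted analogue) in the $p$-dimensional norm.

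For parts 1 and 2 I would divide by $\sqrt n$ and by $r_n\sqrt n$ respectively, and identify the deterministic limit of the leading term. Replacing $\bar g$ by $\mathbb{E}g$ costs $O_p(n^{-1/2})$ uniformly in $t$ because $|\psi_t|\le 2$. Then a law of large numbers gives $n^{-1/2}\widehat U_n(t)\to\mathbb{E}\{g_0(X)\psi_t(e)\}$, which equals $K^{(1)}(t)$ since $\mathbb{E}\{g_0(X)\psi_t(\varepsilon)\}=0$ by independence of $\varepsilon$ and $X$. In the local case, expand $\psi_t(e_i)-\psi_t(\varepsilon_i)=r_nS(X_i)\psi_t'(\varepsilon_i)+O(r_n^2t^2S^2)$. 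This yields the mean $r_n\,t\,K^{(2)}(t)$ plus a centered fluctuation of order $n^{-1/2}$, which is negligible relative to $r_n$ because $\alpha<1/2$.

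The estimation term is $O_p(\sqrt{p/n})$ times a vector whose norm is $O(\sqrt p\,|t|)$. Here one needs more care: the mean vector $\mathbb{E}\{g_0\psi_t'(e)\dot m\}$ is not small, but its contraction against $\widehat\beta_n-\widetilde\beta_0$ is $O_p(|t|/\sqrt n)$ by the linear expansion (a scalar CLT-scale quantity). After dividing by $r_n\sqrt n$ it is therefore negligible, and the remainder is $O_p(t^2p/\sqrt n)$. All bounds are pointwise in $t$ with polynomial dependence, so integrating against $\varphi$ using $\int t^4\varphi<\infty$ (Assumption \ref{as:weight}) and dominated convergence gives the $L^2(\varphi)$ convergence of the rescaled process. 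The continuous mapping theorem then yields the stated limits.

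For part 3, with $r_n=n^{-1/2}$, the decomposition becomes $\widehat U_n(t)=U_n(t)+tK^{(2)}(t)+R_n'(t)$ with $\int|R_n'|^2\varphi=o_p(1)$. The drift comes from $n^{-1/2}\sum_i g_0(X_i)r_nS(X_i)\psi_t'(\varepsilon_i)\to tK^{(2)}(t)$ in probability. The score-term perturbation $r_n n^{-1}\sum_iS(X_i)\dot m(X_i,\widetilde\beta_0)$ is mean zero with covariance of order $n^{-2}$ controlled by the $\lambda_{\max}$ condition, so it does not shift the limit. Then I would invoke the weak convergence of $U_n$ to $U_\infty$ in $L^2(\varphi)$ already established in the proof of Theorem \ref{theorem:1}, together with Slutsky and continuous mapping.

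The main obstacle is the uniform-in-$t$, high-dimensional control of the estimation and remainder terms under the drifting data-generating process. One needs the $L^2(\varphi)$ norm of the difference between the empirical $p$-vector $n^{-1}\sum_i g_0\psi_t'(\widehat e_i)\dot m$ and its mean, contracted with $\widehat\beta_n-\widetilde\beta_0$, to be $o_p(n^{-1/2})$ under only $p^3\log n/n\to0$. I would handle this as in the null proof, bounding by $\|\cdot\|\,\|\widehat\beta_n-\widetilde\beta_0\|=O_p(p/n)\cdot|t|$ and using Assumption \ref{as:eigen_F} for the eigenvalue control.
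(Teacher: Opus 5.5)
Your proposal is correct and follows essentially the same route as the paper. You use the same Taylor split of $\cos(t\widehat e_i)+\sin(t\widehat e_i)$ into an estimation-error piece around $e_i$ (controlled by Lemma~\ref{lem:beta}, orthogonality of $S$, and the null-proof bounds) and a misspecification piece around $\varepsilon_i$ (an $L^2(\varphi)$ law of large numbers giving $K^{(1)}$ or $tK^{(2)}$), followed by continuous mapping and Slutsky. The only differences are in Case 3: the paper expands around the pseudo-residual $\widehat\varepsilon_i=\varepsilon_i+m(X_i,\widetilde\beta_0)-m(X_i,\widehat\beta_n)$ so that the first five terms literally reproduce the null expansion, whereas you expand around $e_i$, which is immaterial, and your explicit check that the extra $r_nS(X_i)$ part of the Bahadur expansion is negligible is a step the paper leaves implicit.
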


Theorem~\ref{thm:alt} characterizes the asymptotic behavior of the test statistic $\text{WICM}_n$ under three regimes of alternatives, thereby providing a unified description of the power of the proposed test. We first consider fixed alternatives, under which the statistic grows linearly in $n$, implying consistency against global departures from the null model. We next consider local alternatives approaching the null at rate $n^{-\alpha}$ for $\alpha\in(0,1/2)$. In this case, after normalization by $n r_n^2$, the test statistic converges in probability to a positive constant. Consequently, the proposed test is consistent against local alternatives separated from the null by more than the $n^{-1/2}$ scale. Finally, we address the boundary case $r_n = n^{-1/2}$. Under additional regularity conditions, the statistic converges in distribution to a limit that differs from its null distribution. This characterizes the asymptotic behavior of the test statistic at the critical local rate and shows that the proposed test retains nontrivial power against such local alternatives.

\subsection{Bootstrap approximation}

Theorem~\ref{theorem:1} shows that the asymptotic null distribution of the proposed test statistic depends on the unknown error distribution and therefore does not admit a closed-form critical value. A bootstrap procedure is proposed to approximate the null distribution of the test statistic. In this subsection, we establish the asymptotic validity of the proposed bootstrap approximation. We impose the following regularity condition.

\begin{assumption}
    \label{as:kernel}
    The kernel density function $l(\cdot)$ is a positive, symmetric, and twice continuously differentiable function such that $\int_{\mathbb{R}} tl(t)dt =0$ and $\int_{\mathbb{R}} t^4l(t)dt < \infty$. The smoothing parameter $v_n$ satisfies $v_n=o_p(1)$ and $\log n=o(nv_n^4)$.
\end{assumption}

Assumption~\ref{as:kernel} specifies regularity conditions on the kernel density function $l(\cdot)$ and the smoothing parameter $v_n$. These conditions ensure uniform convergence of the smoothed residual density estimator to the true error density and are standard in the analysis of smooth residual bootstrap procedures; see \citet{tan2025weighted}. The following theorem shows that the proposed smooth residual bootstrap consistently approximates the null distribution of the $\text{WICM}_n$ test.

\begin{theorem}\label{thm:bootstrap}
	Suppose that Assumption~\ref{as:kernel} and the conditions of Theorems~\ref{theorem:1} and~\ref{thm:alt} are satisfied.
    \begin{enumerate}
        \item Under the null hypothesis $H_0$,
    \begin{equation*}
    \mathrm{WICM}_n^{*}
    \overset{d}{\longrightarrow}
    \int_{\mathbb{R}}|U_\infty^*(t)|^{2}\varphi(t)dt,
     \end{equation*}
    where $U_\infty^*(t)$ has the same distribution as the Gaussian process in Theorem~\ref{theorem:1}, and hence coincides with the limiting null distribution.

\item Under the local alternative $H_{1n}$ with $r_n=o(1)$,
     \begin{equation*}
    \mathrm{WICM}_n^{*}
    \overset{d}{\longrightarrow}
    \int_{\mathbb{R}}|U^*_\infty(t)|^{2}\varphi(t)dt,
     \end{equation*}
    where $U^*_\infty(t)$ has the same distribution as the Gaussian process in Theorem~\ref{theorem:1}, and hence coincides with the limiting null distribution.

\item Under the global alternative $H_1$, $\mathrm{WICM}_n^{*}$ converges in distribution to a finite random variable whose limiting distribution may differ from the null limiting distribution of $\mathrm{WICM}_n$.
    \end{enumerate}
\end{theorem}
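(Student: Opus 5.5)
The plan is to argue conditionally on the data $\mathcal{D}_n=\{(X_i,Y_i)\}_{i=1}^n$ and to repeat the null-hypothesis argument behind Theorem~\ref{theorem:1} in the bootstrap world. Conditionally on $\mathcal{D}_n$, the bootstrap model $Y^*_i=m(X_i,\widehat\beta_n)+\varepsilon^*_i$ is a correctly specified parametric model. Its true parameter is $\widehat\beta_n$. Its errors are i.i.d., independent of the (fixed) $X_i$, and have the smoothed density
\[
\widehat f_n(u)=n^{-1}\sum_{i=1}^n v_n^{-1}l\{(u-\tilde\varepsilon_i)/v_n\}.
\]
So the bootstrap world satisfies a ``null hypothesis'', and the same decomposition used for \eqref{eq:asymp_U} should apply. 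First I will verify that the assumptions transfer. By the $\ell_2$-consistency $\|\widehat\beta_n-\tilde\beta_0\|=O_p(\sqrt{p/n})$ used in Theorem~\ref{theorem:1}, $\widehat\beta_n$ lies in $U(\tilde\beta_0)$ with probability tending to one. Hence the envelope, Lipschitz and eigenvalue Assumptions~\ref{as:model}--\ref{as:eigen_F} hold at $\widehat\beta_n$, with population expectations replaced by empirical averages over the $X_i$. Concentration of these empirical matrices, which needs $p^3\log n/n\to0$ to control $p\times p$ eigenvalues, preserves the eigenvalue bounds. I will then obtain the bootstrap linear expansion
\[
\widehat\beta^*_n-\widehat\beta_n=\widehat\Sigma^{-1}n^{-1}\sum_i\dot m(X_i,\widehat\beta_n)\varepsilon^*_i+o_{p^*}(n^{-1/2})
\]
in the appropriate norm, by the same Taylor/Newton argument as in the original estimator.

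The next step is the bootstrap analogue of \eqref{eq:asymp_U}:
\[
U^*_n(t)=\frac1{\sqrt n}\sum_i\Big(g_0(X_i)[c_t(\varepsilon_i^*)-\mathbb{E}^*c_t(\varepsilon^*)]+W^*(t)^\top\widehat\Sigma^{-1}\dot m(X_i,\widehat\beta_n)\varepsilon^*_it\Big)+R^*_n(t),
\]
where $c_t(u)=\cos(tu)+\sin(tu)$. Here $\int|R_n^*|^2\varphi=o_{p^*}(1)$ in probability. The expansion uses a second-order Taylor expansion of $c_t(\widehat\varepsilon^*_i)$ around $\varepsilon^*_i$, together with $\int t^4\varphi<\infty$ from Assumption~\ref{as:weight}. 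Because the bootstrap errors are independent of the $X_i$, the centring $\bar g$ again kills the mean term exactly, as in the null case.

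Conditional weak convergence of the leading term then follows in two parts. For finite-dimensional distributions, I will apply the Lindeberg CLT for triangular arrays to the independent summands given $\mathcal{D}_n$. For tightness, I will use a conditional bracketing or moment bound on $\int|\cdot|^2\varphi$. The main work is to show that the conditional covariance $K^*_n(s,t)$ converges in probability to $K(s,t)$. This reduces to checking that $\mathbb{E}^*c_t(\varepsilon^*)$, $\mathbb{E}^*\{c_t(\varepsilon^*)\varepsilon^*\}$ and $\mathbb{E}^*\varepsilon^{*2}$ converge to their population counterparts, uniformly in $t$ on compacts with $t^4$-weighted tails. The smoothed empirical characteristic function equals $\hat\psi_n(t)\psi_l(v_nt)$, where $\hat\psi_n$ is the empirical characteristic function of the residuals. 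Since $\max_i|\widehat e_i-\varepsilon_i|$ is small and $v_n\to0$ (Assumption~\ref{as:kernel}), these converge to the corresponding characteristic functions of $\varepsilon$. I expect the main obstacle here to be the cross term $W^*(t)^\top\widehat\Sigma^{-1}\cdots$, a quadratic form of diverging dimension. I will control it with Assumption~\ref{as:eigen_F} and $\|W^*(t)-W(t)\|=o_p(1)$, obtained by bounding $\sup_{\|a\|=1}$ of the differences. The continuous mapping theorem then gives part~1.

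For part~2, under $H_{1n}$ with $r_n\to0$, the only changes are that the residuals satisfy $\widehat e_i=\varepsilon_i+r_nS(X_i)+O_p(\cdot)$ and that $\widehat\beta_n\to\tilde\beta_0$. The centred residual empirical distribution, and hence $\widehat f_n$, still converges to the distribution of $\varepsilon$, since the perturbation is $o_p(1)$ in $L_2$. The bootstrap world is always a null model, so the same limit as in part~1 results. For part~3, under a fixed alternative, $\widehat f_n$ converges to the law of $e-\mathbb{E}e$ with $e=\varepsilon+S(X)$. The bootstrap model is still a correctly specified null with errors independent of $X$. The argument of part~1 therefore yields convergence to $\int|\tilde U_\infty(t)|^2\varphi$, where $\tilde U_\infty$ is a Gaussian process whose covariance has the same form with $\varepsilon$ replaced by $e$. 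This limit is finite, and may differ from the null limit, which is exactly what part~3 asserts.
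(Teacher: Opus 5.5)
Your proposal follows essentially the same route as the paper's proof. You condition on the data, treat the bootstrap sample as a correctly specified model with parameter $\widehat\beta_n$ and errors independent of the fixed $X_i$, derive the bootstrap analogue of \eqref{eq:asymp_U}, and reduce everything to convergence of the conditional covariance through convergence of the smoothed residual law. You do the last step via the factorization $\mathbb{E}^*e^{it\varepsilon^*}=\hat\psi_n(t)\psi_l(v_nt)$, whereas the paper uses uniform convergence of the smoothed density and a $W_2$ bound. In both, the product law $P_X\otimes F_e$ emerges under $H_1$.

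Two details need tightening.

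In part~3 the covariance still depends on $n$ through $p$, and unlike part~1 no hypothesis guarantees that it has a limit. You therefore need the assumption $K_n^e\to K^e$, which the paper imposes explicitly in its proof. Without it you only get $\mathrm{WICM}_n^*=O_{P^*}(1)$, though that is all Corollary~\ref{cor:bootstrap_test} uses.

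The envelope bound only yields $\max_i|\widehat e_i-\varepsilon_i|=o_p(pn^{-1/4})$, which is not small when $p$ is near $n^{1/3}$. However, your characteristic-function argument only needs the averaged bound $n^{-1}\sum_i|\widehat e_i-\varepsilon_i|=O_p(p/\sqrt{n})$, which does hold.
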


Theorem~\ref{thm:bootstrap} establishes the asymptotic validity of the smooth residual bootstrap for approximating the null distribution of the test statistic $\text{WICM}_n$. In particular, under the null hypothesis and local alternatives, the proposed bootstrap procedure consistently reproduces the asymptotic distribution of $\text{WICM}_n$ under the null hypothesis and therefore provides a valid basis for inference. Under fixed alternatives, although the limiting distribution of $\text{WICM}_n^*$ need not coincide with the null limiting distribution, it still converges to a well-defined finite distribution. Combined with the divergence of $\text{WICM}_n$ under global alternatives, this implies consistency of the bootstrap-based test. As a consequence of Theorem~\ref{thm:bootstrap}, Corollary~\ref{cor:bootstrap_test} shows that the proposed bootstrap procedure is asymptotically valid and remains effective under both local and fixed alternatives.

\begin{corollary}\label{cor:bootstrap_test}
Suppose that conditions of Theorems~\ref{thm:bootstrap} are satisfied. Let $\widehat{c}_{\alpha}$ denote the $(1-\alpha)$-quantile of the bootstrap distribution of $\mathrm{WICM}_n^{*}$. Then, the following results hold.
\begin{enumerate}
    \item[(i)] Under the null hypothesis \(H_0\),
    \begin{equation*}
    \lim_{n\to\infty} P\bigl(\mathrm{WICM}_n > \widehat{c}_{\alpha}\bigr)=\alpha.
    \end{equation*}

    \item[(ii)] Under the fixed alternative $H_1$ and the local alternative $H_{1n}$ with $r_n=n^{-\alpha}$ for $\alpha\in[0,1/2)$,
    \begin{equation*}
    \lim_{n\to\infty} P\bigl(\mathrm{WICM}_n > \widehat{c}_{\alpha}\bigr)=1.
    \end{equation*}

    \item[(iii)] Under the local alternative $H_{1n}$ with $r_n=n^{-1/2}$,
    \begin{equation*}
    \lim_{n\to\infty} P\bigl(\mathrm{WICM}_n > \widehat{c}_{\alpha}\bigr)>0.
    \end{equation*}
\end{enumerate}
\end{corollary}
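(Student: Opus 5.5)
The plan is to obtain the corollary directly from Theorems~\ref{theorem:1}--\ref{thm:bootstrap} with a standard quantile-convergence argument. Let $T=\int_{\mathbb{R}}|U_\infty(t)|^2\varphi(t)dt$ denote the null limit and $c_\alpha$ its upper $\alpha$-quantile.

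\textbf{Step 1: the limit law is continuous at $c_\alpha$.} I will first show that the distribution function of $T$ is continuous and strictly increasing around $c_\alpha$. By Karhunen--Lo\`eve on $L^2(\varphi)$, $T=\sum_k\lambda_k\chi^2_{1,k}$, where the $\lambda_k\ge0$ are the eigenvalues of the covariance operator with kernel $K(s,t)$ and the $\chi^2_{1,k}$ are independent $\chi^2_1$ variables. This expansion is legitimate because $\int K(t,t)\varphi(t)dt<\infty$, which follows from Assumption~\ref{as:weight} together with the factors $t$ and $t^2$ appearing in $K$. Since $K$ is assumed not identically zero, at least one $\lambda_k$ is positive. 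Hence $T$ has a continuous distribution with positive density on $(0,\infty)$.

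\textbf{Step 2: the bootstrap quantile converges.} Theorem~\ref{thm:bootstrap}(1)--(2) gives conditional weak convergence of $\mathrm{WICM}_n^*$ to the law of $T$, in probability, under $H_0$ and under $H_{1n}$ with $r_n\to0$. Combined with Step 1, the usual quantile lemma (e.g.\ van der Vaart, Lemma 21.2) yields $\widehat c_\alpha\to c_\alpha$ in probability.

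\textbf{Part (i).} Theorem~\ref{theorem:1} gives $\mathrm{WICM}_n\overset{d}{\to}T$. By Slutsky, $\mathrm{WICM}_n-\widehat c_\alpha\overset{d}{\to}T-c_\alpha$. Continuity of $T$ at $c_\alpha$ then gives $P(\mathrm{WICM}_n>\widehat c_\alpha)\to P(T>c_\alpha)=\alpha$.

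\textbf{Part (iii).} Steps 1--2 still apply, since $r_n=n^{-1/2}\to0$, so again $\widehat c_\alpha\to c_\alpha$. Theorem~\ref{thm:alt}(3) gives $\mathrm{WICM}_n\overset{d}{\to}T_1=\int|U_\infty(t)+K^{(2)}(t)t|^2\varphi(t)dt$. Then
\begin{equation*}
\lim_{n\to\infty}P(\mathrm{WICM}_n>\widehat c_\alpha)=P(T_1>c_\alpha)\ge P(T_1>c_\alpha+1),
\end{equation*}
and it remains to show the right-hand side is positive. Write $T_1=\|U_\infty+h\|^2_{L^2(\varphi)}$ with $h(t)=K^{(2)}(t)t$.
\begin{itemize}
\item If $h\ne0$ in $L^2(\varphi)$, then by the Cameron--Martin/Anderson-type argument the shifted Gaussian measure gives probability at least $\alpha$-type positive mass to the exterior of a ball. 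More simply, $P(T_1>c)>0$ for every $c$, because $\|U_\infty+h\|\ge\|U_\infty\|-\|h\|$ and $T$ has unbounded support (some $\lambda_k>0$).
\item If $h=0$, then $T_1$ has the same law as $T$ and the limit equals exactly $\alpha>0$.
\end{itemize}
In either case the limiting power is strictly positive, which is all part (iii) claims.

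\textbf{Part (ii).} I handle the two regimes separately.
\begin{itemize}
\item For $\alpha\in(0,1/2)$, Theorem~\ref{thm:bootstrap}(2) gives $\widehat c_\alpha=O_p(1)$. Theorem~\ref{thm:alt}(2) gives $\mathrm{WICM}_n=nr_n^2\{\kappa+o_p(1)\}$ with $\kappa>C>0$. Since $nr_n^2=n^{1-2\alpha}\to\infty$, we get $P(\mathrm{WICM}_n>\widehat c_\alpha)\to1$.
\item For $\alpha=0$, Theorem~\ref{thm:bootstrap}(3) gives that $\mathrm{WICM}_n^*$ converges to a finite random variable. This should yield $\widehat c_\alpha=O_p(1)$, provided the bootstrap law is tight conditionally in probability. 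Theorem~\ref{thm:alt}(1) gives $\mathrm{WICM}_n/n\overset{p}{\to}\kappa_1>C>0$, hence $\mathrm{WICM}_n\to\infty$ and the rejection probability tends to one.
\end{itemize}

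The main obstacle is the $\alpha=0$ case. Theorem~\ref{thm:bootstrap}(3) as stated only gives convergence to some finite limit. I must extract from it that $\widehat c_\alpha$ is stochastically bounded. I would argue this via conditional tightness: for each $\eta$ there is an $M$ with $P^*(\mathrm{WICM}_n^*>M)<\alpha$ with probability tending to one, so that $\widehat c_\alpha\le M$ with probability tending to one. Throughout I am treating the conditional-in-probability weak convergence of the bootstrap as given.
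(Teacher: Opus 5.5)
Your proposal is correct and follows essentially the same route as the paper's proof. In both, the bootstrap critical value converges to the null quantile under $H_0$ and under $H_{1n}$ with $r_n\to 0$, and stays $O_p(1)$ under the fixed alternative; Slutsky's theorem gives (i) and (iii), and the divergence of $\mathrm{WICM}_n$ at rate $nr_n^2$ against a bounded critical value gives (ii).

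You fill in several steps the paper only asserts: continuity of the null limit via the Karhunen--Lo\`eve representation, the quantile lemma, positivity of $P(T_1>c_\alpha)$, and the tightness argument for $\widehat c_\alpha$ under $H_1$. One small addition is needed. The equality $\lim P(\mathrm{WICM}_n>\widehat c_\alpha)=P(T_1>c_\alpha)$ in (iii) requires that $T_1$ has no atom at $c_\alpha$. This follows from the same representation, because $T_1$ contains an independent noncentral $\chi^2$ component with positive weight.
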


\subsection{The choice of weight function $g(\cdot)$}

In this section, we study the choice of the weight function $g(\cdot)$ for enhancing the power of the proposed test against local alternatives. By Corollary~\ref{cor:bootstrap_test}, the proposed test has asymptotic power one for alternatives when $r_n=n^{-\alpha}$ for $\alpha\in[0,1/2)$. Hence, the remaining issue is to focus on the boundary case $r_n=n^{-1/2}$ and to choose $g(\cdot)$ to maximize the test's power. Recall that under the null hypothesis $H_0$, the limiting distribution of the test statistic is determined by
\begin{equation*}
\begin{split}
U_{n}(t)
&= \frac{1}{\sqrt{n}}\sum_{i=1}^{n}\big(g_0(X_i)[\cos(t\varepsilon_i)+\sin(t\varepsilon_i) -  \mathbb{E}\{\cos(t \varepsilon_i)+\sin(t \varepsilon_i)\} ] \\
&+ W(t)^{\top}\Sigma^{-1}\dot{m}(X_i,\beta_0)t\varepsilon_i\big)
\end{split}
\end{equation*}
where $W(t)=\mathbb{E}[g_0(X)\dot{m}(X,\beta_0)\{\sin(t\varepsilon)-\cos(t\varepsilon)\}]$. According to the proof of Theorem~\ref{thm:bootstrap}, under the local alternative $H_{1n}$ with $r_n=n^{-1/2}$, the test statistic is determined by
\begin{equation*}
\begin{split}
   \ U_{n}(t)+ \mathbb{E}[g_0(X)S(X)\{-\sin(t\varepsilon)+\cos(t\varepsilon) \}] t +R^1_n(t), 
\end{split}	
\end{equation*}
where the remainder $R^1_n(t)$ satisfies $\int_{\mathbb{R}}	|R^1_n(t)|^2 \varphi(t)dt =o_p(1)$. To obtain high power against the local alternative $H_{1n}$ with $r_n=n^{-1/2}$, we seek a weight function $g(x)$ that maximizes $\mathbb{E}\{g_0(X)S(X)\}$. An application of the Cauchy--Schwarz inequality shows that, under $H_{1n}$, the optimal choice of the weight function is
\begin{equation*}
	g(X)\propto m(X,\widetilde{\beta}_0)- m(X).
\end{equation*}
This choice, however, is not practically useful. Under the null hypothesis $H_0$, we have $m(X)=m(X,\widetilde{\beta}_0)$, so that $g(X)\equiv 0$. Consequently, $\widehat{U}_n(t)\equiv 0$, and the resulting test is degenerate. Therefore, in this case, the asymptotic theory of the proposed test is no longer meaningful.

To avoid this degeneracy, we replace $m(X,\widetilde{\beta}_0)$ by the projection of $m(X)$ onto the linear space spanned by $\dot m(X,\widetilde{\beta}_0)$ and define the weight function as
\begin{equation*}
	g(X)= \dot{m}(X,\widetilde{\beta}_0)^{\top}\Sigma^{-1}\mathbb{E}\{\dot{m}(X,\widetilde{\beta}_0)m(X)\}-m(X).
\end{equation*}
This construction avoids degeneracy under the null, while still preserving the first-order direction of departure under the alternative. A similar approach was considered by \cite{tan2025weighted}. Since $g(X)$ depends on both $ m(X) $ and the unknown parameter $\widetilde{\beta}_0$, we replace them by their estimates to obtain an estimator of $g(X)$. In this paper, we consider two ways of estimating $g(X)$: one for directional alternatives and the other for nonparametric alternatives.

For directional alternatives, we assume that the conditional mean function belongs to a given parametric class $\mathcal{K}=\{s(x,\theta):\theta \in \mathbb{R}^q \}$, which is distinct from the parametric family under the null. If $m(X) \in \mathcal{K} $, then there exists $\theta_0 \in \mathbb{R}^q$, such that  $m(X) = s(x,\theta_0)$.  We estimate $\theta_0$ by least squares,
\begin{equation*}
\widehat{\theta}_n =\operatorname{argmin}_{\theta\in\mathbb{R}^q} \frac{1}{n} \sum_{i=1}^{n}\{Y_i - s(X_i,{\theta})\}^2.
\end{equation*}
Based on this estimator, we define the weight function $g(X)$ as
\begin{equation}
\begin{split}
\label{eq:direct}
	g(X)=&\dot{m}(X,\widehat{\beta}_n)^{\top}\bigg\{\frac{1}{n}\sum_{i=1}^{n} \dot{m}(X_i,\widehat{\beta}_n)\dot{m}(X_i,\widehat{\beta}_n)^{\top}\bigg\}^{-1}\\
    &\times\bigg\{\frac{1}{n}\sum_{i=1}^{n}\dot{m}(X_i,\widehat{\beta}_n)s(X_i,\widehat{\theta}_n)\bigg\}-s(X,\widehat{\theta}_n)
\end{split}
\end{equation}

For nonparametric alternatives, the regression function $m(X)$ is left unspecified, so one natural approach is to estimate it nonparametrically. However, such methods suffer severely from the curse of dimensionality in high-dimensional settings. To overcome this difficulty, \cite{tan2025weighted} proposed an alternative approach based on Fourier expansion. Note that we use $\dot{m}_k(X,\beta)$ to denote the $k$th component of $\dot{m}(X,\beta)$ for $1\le k\le p$. Let $L^2(F_X)$ denote the Hilbert space of square-integrable functions equipped with inner product $\langle g,h\rangle=\int gh dF_X$, where $F_X$ is the distribution of $X$. By the definition of $\tilde{\beta}_0$, the function $m(X)-m(X,\widetilde{\beta}_0)$ is orthogonal to the linear space spaned by $\{\dot{m}_k(X,\widetilde{\beta}_0)\}_{k=1}^{p}$. We apply the Gram--Schmidt procedure to $\{\dot{m}_k\}_{k=1}^{p}$ to obtain an orthonormal basis $g_1,\cdots,g_p,g_{p+1},\cdots$ of $L^2(F_X)$, where $\text{span}(\{\dot{m}_k\}_{k=1}^{p})=\text{span}(\{g_k\}_{k=1}^{p})$. In practice, we truncate the expansion at a finite level $l$. Based on the corresponding Fourier expansion, we estimate $m(X)$ by
\begin{equation*}
\widehat{m}_l(X)=m(X,\widehat{\beta}_n)+\sum_{i=p+1}^{p+l}\widehat{a}_ig_i(X),
\end{equation*}
where 
$$\widehat{a}_i=\frac{1}{n}\sum_{j=1}^{n}\widehat{e}_jg_i(X_j)\quad\text{and}\quad\widehat{e}_j=Y_j-m(X_j,\widehat{\beta}_n).$$
Accordingly, we define the weight function by
\begin{equation}
\begin{split}
\label{eq:nonpara}
	g(X)=&\dot{m}(X,\widehat{\beta}_n)^{\top}\bigg\{\frac{1}{n}\sum_{i=1}^{n} \dot{m}(X_i,\widehat{\beta}_n)\dot{m}(X_i,\widehat{\beta}_n)^{\top}\bigg\}^{-1}\\
    &\times\bigg\{\frac{1}{n}\sum_{i=1}^{n}\dot{m}(X_i,\widehat{\beta}_n)\widehat{m}_l(X_i)\bigg\}-\widehat{m}_l(X).
\end{split}
\end{equation}
The choice of the additional basis functions $\{g_i\}_{i=1}^{\infty}$ and the truncation level $l$ is left to the researcher. Additional details on this construction for nonparametric alternatives can be found in \cite{tan2025weighted}.

\section{Numerical Results}
\label{sec:num}
\subsection{Simulation studies}
In this subsection, we present simulation studies to evaluate the finite-sample performance of the proposed test statistic under various combinations of the covariate dimension $p$ and the sample size $n$. We denote by $\text{WICM}^{(1)}_n$ the test statistic constructed using $g(X)$ for directional alternatives in \eqref{eq:direct}, and by $\text{WICM}^{(2)}_n$ the test statistic constructed using $g(X)$ for nonparametric alternatives \eqref{eq:nonpara}. We compare the proposed test statistics with several existing procedures. Specifically, we consider the tests of \cite{tan2025weighted}, both without and with the transformation, denoted by $\text{CM}_n$ and $\text{TCM}_n$, respectively. We also include the integrated conditional moment test of \cite{bierens1982consistent}, denoted by $\text{ICM}_n$, and the projected Cramér--von Mises test of \cite{escanciano2006consistent}, denoted by $\text{PCvM}_n$. The test statistics $\text{ICM}_n$ and $\text{PCvM}_n$ are not asymptotically pivotal, since their limiting null distributions depend on the unknown data-generating mechanism. We therefore use the wild bootstrap to determine the critical values, following \cite{escanciano2006consistent} and \cite{lavergne2012one}. For our proposed tests $\text{WICM}^{(1)}_n$ and $\text{WICM}^{(2)}_n$, we use the smooth residual bootstrap with smoothing parameter $v_n = 0.2$, as suggested by \cite{dette2007new}. The significance level is set at $0.05$. All simulation results are based on $1,000$ Monte Carlo replications, with bootstrap critical values computed from $B = 500$ bootstrap samples.

We consider several data-generating models in the simulations, including a single-index model, multiple-index models with low-dimensional structure, and one model with no low-dimensional structure. The data are generated from the following regression models:
\begin{eqnarray*}
	H_{1}: Y &=& \beta_0^{\top}X + a \cos (0.6 \pi \beta_0^{\top}X)+\varepsilon,\\
    H_{2}: Y &=& \beta_0^{\top}X  + a (\beta_1^{\top}X )^2 + \varepsilon, \\
	H_{3}: Y &=& \beta_0^{\top}X + a \{(\beta_1^{\top}X )^2 + \cos(0.5\pi \beta_1^{\top}X)+(\beta_0^{\top}X )(\beta_1^{\top}X ) \} +\varepsilon,\\
	H_{4}: Y &=& X_{1} + a \{|X_{2}| + X_{3} - X^2_{4} + X^3_{5} +X_{6}X_{7} \\
    &&+\cos(\pi X_{8} ) + \sin(0.5 \pi X_{9}X_{10})\}+\varepsilon,
\end{eqnarray*}		
where the coefficient vectors in the above models are $\beta_0=(1,\dots,1)^\top/\sqrt{p}$ and $\beta_1=(1,\dots,1,0,\ldots,0)^\top/\sqrt{p/2}$, so that only the first $p/2$ components of $\beta_1$ are nonzero. Note that $H_{4}$ does not admit a dimension-reduction structure. The regression error $\varepsilon$ is generated from a standard normal distribution. The covariate vector $X$ is independently from a mean-zero multivariate normal distribution with covariance matrix $\Sigma$. We consider two covariance structures, $\Sigma_1=I_p$ and $\Sigma_2=(2^{-|i-j|})_{p\times p}$. In this setup, $a=0$ represents the null hypothesis, while $a\neq 0$ represents the alternative hypothesis. The results under $\Sigma_2$ are presented in the appendix.

When implementing $\text{WICM}_n^{(1)}$ and $\text{WICM}_n^{(2)}$, the weight function $g(X)$ must be specified. For directional alternatives, $g(X)$ is chosen according to \eqref{eq:direct}. For nonparametric alternatives, following \cite{tan2025weighted}, we use sufficient dimension reduction to construct an orthonormal basis that approximates $m(X)$. Specifically, let $\mathcal{S}_{(Y\mid X)}$ denote the central subspace of $Y$ given $X$, which is the intersection of all subspaces $\operatorname{span}(B)$ such that $Y \perp \!\!\! \perp  X | B ^{\top}X$. Under mild conditions, $\mathcal{S}{(Y\mid X)}$ exists; see \cite{cook2009regression}. If $\mathcal{S}{(Y\mid X)}=\operatorname{span}(B)$, then $m(X)=E(Y|X)=E(Y|B^TX)$, where $B=(B_1,\ldots,B_s)$ is a $p\times s$ orthogonal matrix, and $s\le p$ is the structural dimension. To estimate $\mathcal{S}_{(Y\mid X)}$ in high-dimensional settings, we use cumulative slicing estimation (CSE; \cite{zhu2010dimension}) and the minimum ridge-type eigenvalue ratio estimator (MERE; \cite{zhu2017adaptive}) to determine $s$. These methods are easy to implement and remain valid when $p^2/n\to 0$. Let $\widehat{B}=(\widehat{B}_1,\ldots,\widehat{B}_{\widehat{s}})$ denote the resulting estimator of $B$. Since $\dot m(x,\widetilde\beta_0)=(x_1,\ldots,x_p)^\top$ for $H_1$ to $H_3$, the Gram--Schmidt procedure yields the orthonormal basis $\{ (\widehat{B}_i^{\top}x)^2,(\widehat{B}_i^{\top}x)^3,(\widehat{B}_i^{\top}x)^4, 1 \leq i \leq \widehat{s} \}$. The weight function $g(X)$ is then constructed according to \eqref{eq:nonpara}.

\begin{table}[ht!]
\caption{Empirical sizes and powers of $\text{WICM}_n^{(1)}$, $\text{WICM}_n^{(2)}$, $\text{CM}_n$, $\text{TCM}_n$, $\text{ICM}_n$, and $\text{PCvM}_n$ for model $H_{1}$.}
\label{table:h1_1}
	\centering
	{\small\tiny\hspace{5cm}
		\renewcommand{\arraystretch}{1}\tabcolsep 0.2cm
		\begin{tabular}{cccccccccccc}
			\hline
			&\multicolumn{1}{c}{a} &\multicolumn{1}{c}{n=100} &\multicolumn{1}{c}{n=100} &\multicolumn{1}{c}{n=100}&\multicolumn{1}{c}{n=100} &\multicolumn{1}{c}{n=100}  &\multicolumn{1}{c}{n=200} &\multicolumn{1}{c}{n=400}
			&\multicolumn{1}{c}{n=600}  \\
			&&\multicolumn{1}{c}{p=2} &\multicolumn{1}{c}{p=4}
			&\multicolumn{1}{c}{p=6}
			&\multicolumn{1}{c}{p=8} &\multicolumn{1}{c}{p=10} &\multicolumn{1}{c}{p=14} &\multicolumn{1}{c}{p=19}  &\multicolumn{1}{c}{p=22}\\
			\hline
			$\text{WICM}_n^{(1)}$
			&0.0       &0.058 &0.053 &0.064 &0.048 &0.064 &0.052 &0.047 &0.046\\
			&0.1       &0.089 &0.209 &0.106 &0.089 &0.113 &0.173 &0.269 &0.328\\
			&0.2       &0.209 &0.246 &0.264 &0.257 &0.274 &0.435 &0.714 &0.875\\
			&0.3       &0.443 &0.466 &0.501 &0.499 &0.545 &0.815 &0.971 &0.999\\
			&0.4       &0.667 &0.701 &0.729 &0.753 &0.796 &0.967 &1.000 &1.000\\
			&0.5       &0.854 &0.869 &0.892 &0.922 &0.921 &0.999 &1.000 &1.000\\	
			\hline
			$\text{WICM}_n^{(2)}$
			&0.0       &0.041 &0.053 &0.052 &0.056 &0.060 &0.057 &0.061 &0.054\\
			&0.1       &0.073 &0.067 &0.058 &0.075 &0.061 &0.096 &0.122 &0.185\\
			&0.2       &0.111 &0.105 &0.114 &0.096 &0.096 &0.165 &0.303 &0.488\\
			&0.3       &0.222 &0.204 &0.190 &0.159 &0.141 &0.324 &0.661 &0.838\\
			&0.4       &0.350 &0.329 &0.286 &0.262 &0.222 &0.525 &0.885 &0.970\\
			&0.5       &0.542 &0.465 &0.417 &0.372 &0.321 &0.661 &0.962 &0.998\\
			\hline
            $\text{CM}_n$
            & 0.0 & 0.056 & 0.047 & 0.048 & 0.045 & 0.069 & 0.053 & 0.068 & 0.057 \\
            & 0.1 & 0.065 & 0.058 & 0.064 & 0.071 & 0.073 & 0.071 & 0.100 & 0.096 \\
            & 0.2 & 0.096 & 0.071 & 0.073 & 0.079 & 0.081 & 0.111 & 0.204 & 0.293 \\
            & 0.3 & 0.133 & 0.125 & 0.114 & 0.116 & 0.119 & 0.201 & 0.430 & 0.623 \\
            & 0.4 & 0.212 & 0.192 & 0.187 & 0.147 & 0.139 & 0.347 & 0.647 & 0.896 \\
            & 0.5 & 0.333 & 0.301 & 0.255 & 0.258 & 0.222 & 0.473 & 0.859 & 0.982 \\
            \hline
            $\text{TCM}_n$
            & 0.0   & 0.045 & 0.064 & 0.051 & 0.049 & 0.055 & 0.053 & 0.058 & 0.043 \\
            & 0.1   & 0.063 & 0.063 & 0.045 & 0.041 & 0.051 & 0.050  & 0.065 & 0.045 \\
            & 0.2   & 0.070  & 0.062 & 0.054 & 0.056 & 0.053 & 0.049 & 0.055 & 0.066 \\
            & 0.3   & 0.065 & 0.058 & 0.054 & 0.043 & 0.051 & 0.049 & 0.062 & 0.078 \\
            & 0.4   & 0.087 & 0.067 & 0.068 & 0.053 & 0.049 & 0.060  & 0.058 & 0.088 \\
            & 0.5   & 0.094 & 0.077 & 0.062 & 0.070  & 0.054 & 0.072 & 0.088 & 0.113 \\
            \hline          
			$ \text{ICM}_n $
            & 0.0      & 0.052 & 0.058 & 0.026 & 0.009 & 0.001 & 0.000 & 0.000 & 0.000 \\
            & 0.1    & 0.058 & 0.055 & 0.050  & 0.007 & 0.001 & 0.000 & 0.000 & 0.000 \\
           & 0.2    & 0.115 & 0.088 & 0.047 & 0.010 & 0.005 & 0.000 & 0.000 & 0.000 \\
           & 0.3    & 0.188 & 0.153 & 0.078 & 0.023 & 0.008 & 0.000 & 0.000 & 0.000 \\
           & 0.4    & 0.265 & 0.229 & 0.134 & 0.037 & 0.011 & 0.000 & 0.000 & 0.000 \\
           & 0.5    & 0.441 & 0.331 & 0.214 & 0.065 & 0.014 & 0.000 & 0.000 & 0.000 \\
           \hline
			$\text{PCvM}_n $
			&0.0       &0.048 &0.075 &0.052 &0.066 &0.070 &0.060 &0.059 &0.045\\
			&0.1       &0.062 &0.061 &0.074 &0.077 &0.060 &0.072 &0.071 &0.073\\
			&0.2       &0.070 &0.073 &0.083 &0.080 &0.092 &0.087 &0.126 &0.145\\
			&0.3       &0.096 &0.096 &0.099 &0.110 &0.075 &0.112 &0.191 &0.265\\
			&0.4       &0.134 &0.122 &0.136 &0.112 &0.116 &0.173 &0.269 &0.401\\
			&0.5       &0.200 &0.154 &0.148 &0.146 &0.135 &0.216 &0.378 &0.526\\
			\hline
	\end{tabular}}
\end{table}

\begin{table}[ht!]
\caption{Empirical sizes and powers of $\text{WICM}_n^{(1)}$, $\text{WICM}_n^{(2)}$, $\text{CM}_n$, $\text{TCM}_n$, $\text{ICM}_n$, and $\text{PCvM}_n$ for model $H_{2}$.}
\label{table:h2_1}
	\centering
	{\small\tiny\hspace{5cm}
		\renewcommand{\arraystretch}{1}\tabcolsep 0.2cm
		\begin{tabular}{cccccccccccc}
			\hline
			&\multicolumn{1}{c}{a} &\multicolumn{1}{c}{n=100} &\multicolumn{1}{c}{n=100} &\multicolumn{1}{c}{n=100}&\multicolumn{1}{c}{n=100} &\multicolumn{1}{c}{n=100}  &\multicolumn{1}{c}{n=200} &\multicolumn{1}{c}{n=400}
			&\multicolumn{1}{c}{n=600}  \\
			&&\multicolumn{1}{c}{p=2} &\multicolumn{1}{c}{p=4}
			&\multicolumn{1}{c}{p=6}
			&\multicolumn{1}{c}{p=8} &\multicolumn{1}{c}{p=10} &\multicolumn{1}{c}{p=14} &\multicolumn{1}{c}{p=19}  &\multicolumn{1}{c}{p=22}\\
			\hline
			$\text{WICM}_n^{(1)}$
            &0.0  & 0.050 & 0.064 & 0.065 & 0.051 & 0.063 & 0.059 & 0.050 & 0.053 \\ 
            &0.1  & 0.352 & 0.240 & 0.199 & 0.193 & 0.196 & 0.250 & 0.405 & 0.571 \\ 
            &0.2 & 0.816 & 0.703 & 0.637 & 0.545 & 0.553 & 0.789 & 0.964 & 0.997 \\ 
            &0.3 & 0.974 & 0.930 & 0.913 & 0.868 & 0.849 & 0.983 & 0.999 & 1.000 \\ 
            &0.4 & 0.997 & 0.995 & 0.993 & 0.985 & 0.959 & 0.999 & 1.000 & 1.000 \\ 
            &0.5 & 1.000 & 0.999 & 0.999 & 0.995 & 0.996 & 1.000 & 1.000 & 1.000 \\
			\hline
			$\text{WICM}_n^{(2)}$
           & 0.0  & 0.041 & 0.041 & 0.039 & 0.044 & 0.051 & 0.050  & 0.048 & 0.051 \\
           & 0.1  & 0.094 & 0.075 & 0.079 & 0.094 & 0.090  & 0.132 & 0.245 & 0.346 \\
           & 0.2  & 0.249 & 0.232 & 0.189 & 0.221 & 0.213 & 0.413 & 0.723 & 0.926 \\
           & 0.3  & 0.453 & 0.413 & 0.404 & 0.400 & 0.388 & 0.714 & 0.939 & 0.994 \\
           & 0.4  & 0.631 & 0.568 & 0.584 & 0.566 & 0.533 & 0.856 & 0.985 & 1.000     \\
           & 0.5  & 0.731 & 0.677 & 0.657 & 0.656 & 0.659 & 0.919 & 0.996 & 1.000 \\
            \hline
            $\text{CM}_n$
             & 0.0     & 0.065 & 0.035 & 0.039 & 0.068 & 0.068 & 0.066 & 0.053 & 0.069 \\
             & 0.1  & 0.098 & 0.107 & 0.099 & 0.081 & 0.112 & 0.155 & 0.217 & 0.365 \\
             & 0.2   & 0.239 & 0.238 & 0.204 & 0.261 & 0.223 & 0.427 & 0.660  & 0.883 \\
             & 0.3  & 0.415 & 0.435 & 0.396 & 0.382 & 0.384 & 0.670  & 0.899 & 0.986 \\
             & 0.4   & 0.606 & 0.546 & 0.507 & 0.518 & 0.500 & 0.805 & 0.967 & 0.998 \\
             & 0.5  & 0.634 & 0.642 & 0.609 & 0.586 & 0.597 & 0.870  & 0.982 & 0.998 \\
            \hline
            $ \text{TCM}_n $
            &0.0 & 0.045 & 0.064 & 0.051 & 0.049 & 0.055 & 0.053 & 0.058 & 0.043 \\
            &0.1 & 0.099 & 0.090 & 0.073 & 0.086 & 0.083 & 0.102 & 0.108 & 0.095 \\
            &0.2 & 0.158 & 0.123 & 0.124 & 0.120 & 0.113 & 0.131 & 0.149 & 0.216 \\
            &0.3 & 0.196 & 0.170 & 0.171 & 0.142 & 0.148 & 0.160 & 0.240 & 0.331 \\
            &0.4 & 0.237 & 0.222 & 0.178 & 0.182 & 0.148 & 0.198 & 0.253 & 0.368 \\
            &0.5 & 0.266 & 0.244 & 0.218 & 0.189 & 0.182 & 0.236 & 0.248 & 0.403 \\
            \hline
			$ \text{ICM}_n $
			&0.0       &0.051 &0.055 &0.025 &0.030 &0.000 &0.000 &0.000 &0.000\\
			&0.1       &0.090 &0.084 &0.035 &0.030 &0.000 &0.000 &0.000 &0.000\\
			&0.2       &0.306 &0.193 &0.101 &0.010 &0.000 &0.000 &0.000 &0.000\\
			&0.3       &0.611 &0.429 &0.207 &0.021 &0.000 &0.000 &0.000 &0.000\\
			&0.4       &0.862 &0.706 &0.366 &0.048 &0.001 &0.000 &0.000 &0.000\\
			&0.5       &0.949 &0.845 &0.538 &0.090 &0.001 &0.000 &0.000 &0.000\\
			\hline
			$\text{PCvM}_n$
			&0.0       &0.054 &0.062 &0.047 &0.065 &0.060 &0.065 &0.053 &0.057\\
			&0.1       &0.162 &0.148 &0.163 &0.183 &0.186 &0.294 &0.498 &0.673\\
			&0.2       &0.419 &0.445 &0.463 &0.480 &0.470 &0.744 &0.969 &0.995\\
			&0.3       &0.722 &0.772 &0.737 &0.739 &0.732 &0.970 &1.000 &1.000\\
			&0.4       &0.926 &0.910 &0.929 &0.916 &0.898 &0.998 &1.000 &1.000\\
			&0.5       &0.979 &0.979 &0.983 &0.976 &0.972 &0.999 &1.000 &1.000\\
			\hline
	\end{tabular}}
\end{table}

\begin{table}[ht!]
\caption{Empirical sizes and powers of $\text{WICM}_n^{(1)}$, $\text{WICM}_n^{(2)}$, $\text{CM}_n$, $\text{TCM}_n$, $\text{ICM}_n$, and $\text{PCvM}_n$ for model $H_{3}$.}
\label{table:h3_1}
	\centering
	{\small\tiny\hspace{5cm}
		\renewcommand{\arraystretch}{1}\tabcolsep 0.2cm
		\begin{tabular}{cccccccccccc}
			\hline
			&\multicolumn{1}{c}{a} &\multicolumn{1}{c}{n=100} &\multicolumn{1}{c}{n=100} &\multicolumn{1}{c}{n=100}&\multicolumn{1}{c}{n=100} &\multicolumn{1}{c}{n=100}  &\multicolumn{1}{c}{n=200} &\multicolumn{1}{c}{n=400}
			&\multicolumn{1}{c}{n=600}  \\
			&&\multicolumn{1}{c}{p=2} &\multicolumn{1}{c}{p=4}
			&\multicolumn{1}{c}{p=6}
			&\multicolumn{1}{c}{p=8} &\multicolumn{1}{c}{p=10} &\multicolumn{1}{c}{p=14} &\multicolumn{1}{c}{p=19}  &\multicolumn{1}{c}{p=22}\\
			\hline
			$\text{WICM}_n^{(1)}$
          &0.0    & 0.044 & 0.050 & 0.058 & 0.058 & 0.073 & 0.056 & 0.058 & 0.064 \\ 
          &0.05   & 0.250 & 0.171 & 0.164 & 0.167 & 0.199 & 0.240 & 0.311 & 0.391 \\ 
          &0.10   & 0.583 & 0.459 & 0.391 & 0.354 & 0.383 & 0.506 & 0.723 & 0.881 \\ 
          &0.15   & 0.839 & 0.732 & 0.665 & 0.621 & 0.586 & 0.833 & 0.985 & 0.998 \\ 
          &0.20   & 0.945 & 0.906 & 0.857 & 0.815 & 0.798 & 0.955 & 1.000 & 1.000 \\ 
          &0.25   & 0.989 & 0.960 & 0.951 & 0.924 & 0.924 & 0.993 & 1.000 & 1.000 \\ 
          \hline
			$\text{WICM}_n^{(2)}$
          & 0.0    & 0.040  & 0.040   & 0.041  & 0.045  & 0.055  & 0.051  & 0.048  & 0.051  \\
          & 0.05   & 0.090  & 0.091  & 0.081  & 0.077  & 0.100    & 0.141  & 0.251  & 0.387  \\
          & 0.10   & 0.268  & 0.230   & 0.218  & 0.266  & 0.218  & 0.440   & 0.799  & 0.934  \\
          & 0.15   & 0.496  & 0.471  & 0.450   & 0.449  & 0.450   & 0.782  & 0.980   & 1.000       \\
          & 0.20   & 0.709  & 0.688  & 0.661  & 0.648  & 0.637  & 0.949  & 1.000   & 1.000      \\
          & 0.25   & 0.845  & 0.839  & 0.819  & 0.802  & 0.794  & 0.979  & 1.000   & 1.000       \\
             \hline
             $\text{CM}_n$
          & 0.0    & 0.057 & 0.050  & 0.056 & 0.072 & 0.059 & 0.058 & 0.056 & 0.058 \\
          & 0.05   & 0.101 & 0.104 & 0.098 & 0.084 & 0.095 & 0.153 & 0.251 & 0.373 \\
          & 0.10   & 0.275 & 0.233 & 0.240  & 0.261 & 0.258 & 0.439 & 0.763 & 0.907 \\
          & 0.15   & 0.484 & 0.479 & 0.445 & 0.435 & 0.462 & 0.755 & 0.970  & 1.000       \\
          & 0.20   & 0.689 & 0.666 & 0.650  & 0.630  & 0.629 & 0.911 & 0.997 & 1.000       \\
          & 0.25   & 0.808 & 0.762 & 0.773 & 0.804 & 0.724 & 0.965 & 0.999 & 1.000       \\
             \hline
            $\text{TCM}_n$
          & 0.0    & 0.072  & 0.055  & 0.056  & 0.052  & 0.046  & 0.055  & 0.073  & 0.063  \\
          & 0.05   & 0.113  & 0.099  & 0.079  & 0.085  & 0.070   & 0.086  & 0.113  & 0.133  \\
          & 0.10   & 0.152  & 0.153  & 0.145  & 0.117  & 0.105  & 0.147  & 0.152  & 0.257  \\
          & 0.15   & 0.194  & 0.197  & 0.182  & 0.173  & 0.154  & 0.196  & 0.267  & 0.377  \\
          & 0.20   & 0.261  & 0.245  & 0.185  & 0.187  & 0.173  & 0.233  & 0.306  & 0.497  \\
          & 0.25   & 0.285  & 0.256  & 0.253  & 0.219  & 0.207  & 0.247  & 0.389  & 0.528  \\
			\hline
            $ \text{ICM}_n$
			& 0.0      & 0.053  & 0.051  & 0.024  & 0.008  & 0.002  & 0.000  & 0.000  & 0.000       \\
            & 0.05   & 0.125  & 0.099  & 0.074  & 0.035  & 0.009  & 0.000   & 0.000   & 0.000       \\
            & 0.10   & 0.340   & 0.290   & 0.202  & 0.114  & 0.056  & 0.008  & 0.004  & 0.001  \\
            & 0.15   & 0.647  & 0.559  & 0.442  & 0.288  & 0.197  & 0.141  & 0.141  & 0.190   \\
            & 0.20   & 0.881  & 0.792  & 0.713  & 0.526  & 0.410   & 0.485  & 0.631  & 0.840   \\
            & 0.25   & 0.963  & 0.936  & 0.874  & 0.761  & 0.658  & 0.738  & 0.900    & 0.981  \\
			\hline
			$\text{PCvM}_n $
			& 0.0    & 0.060   & 0.049  & 0.058  & 0.057  & 0.069  & 0.061  & 0.052  & 0.050   \\
            & 0.05   & 0.175  & 0.154  & 0.180   & 0.189  & 0.191  & 0.296  & 0.505  & 0.686  \\
            & 0.10   & 0.455  & 0.479  & 0.473  & 0.457  & 0.478  & 0.772  & 0.975  & 0.999  \\
            & 0.15   & 0.781  & 0.786  & 0.779  & 0.782  & 0.797  & 0.982  & 1.000  & 1.000     \\
            & 0.20   & 0.951  & 0.954  & 0.951  & 0.939  & 0.941  & 0.999  & 1.000  & 1.000      \\
            & 0.25   & 0.992  & 0.991  & 0.990   & 0.983  & 0.988  & 1.000  & 1.000  & 1.000      \\
			\hline
	\end{tabular}}
\end{table}

\begin{table}[ht!]
\caption{Empirical sizes and powers of $\text{WICM}_n^{(1)}$, $\text{WICM}_n^{(2)}$, $\text{CM}_n$, $\text{TCM}_n$, $\text{ICM}_n$, and $\text{PCvM}_n$ for model $H_{4}$.}
\label{table:h4_1}
	\centering
	{\tiny\hspace{5cm}
		\renewcommand{\arraystretch}{1}\tabcolsep 0.6cm
		\begin{tabular}{cccccccccccc}
			\hline
			&\multicolumn{1}{c}{a} &\multicolumn{1}{c}{n=100}  &\multicolumn{1}{c}{n=200} &\multicolumn{1}{c}{n=400}
			&\multicolumn{1}{c}{n=600}  \\
			&&\multicolumn{1}{c}{p=10} &\multicolumn{1}{c}{p=14} &\multicolumn{1}{c}{p=19}  &\multicolumn{1}{c}{p=22}\\
			\hline
			$\text{WICM}_n^{(1)}$
            &0.0  & 0.080 & 0.061 & 0.057 & 0.043 \\ 
            &0.1  & 0.270 & 0.361 & 0.470 & 0.517 \\ 
            &0.2  & 0.440 & 0.552 & 0.627 & 0.675 \\ 
            &0.3  & 0.540 & 0.614 & 0.637 & 0.652 \\ 
            &0.4  & 0.620 & 0.642 & 0.655 & 0.680 \\ 
            &0.5  & 0.590 & 0.622 & 0.677 & 0.706 \\ 		
			\hline
			$\text{WICM}_n^{(2)}$
            & 0.0    & 0.050  & 0.051 & 0.058 & 0.049 \\
            & 0.1    & 0.057 & 0.065 & 0.067 & 0.054  \\
            & 0.2    & 0.075 & 0.081 & 0.112 & 0.118  \\
            & 0.3    & 0.130  & 0.193 & 0.265 & 0.330  \\
            & 0.4    & 0.227 & 0.289 & 0.472 & 0.610  \\
            & 0.5    & 0.289 & 0.432 & 0.638 & 0.799  \\
            \hline
            $ \text{CM}_n$
            & 0.0    & 0.070 & 0.057 & 0.056 & 0.058  \\
            & 0.1    & 0.062 & 0.062 & 0.071 & 0.057  \\
            & 0.2    & 0.097 & 0.106 & 0.101 & 0.108  \\
            & 0.3    & 0.157 & 0.178 & 0.252 & 0.284  \\
            & 0.4    & 0.212 & 0.296 & 0.423 & 0.580  \\
            & 0.5    & 0.293 & 0.415 & 0.621 & 0.774  \\
            \hline  
            $\text{TCM}_n$
            & 0.0    & 0.051  & 0.059 & 0.067 & 0.049  \\
            & 0.1    & 0.070  & 0.057 & 0.059 & 0.063  \\
            & 0.2    & 0.071  & 0.089 & 0.115 & 0.110   \\
            & 0.3    & 0.130  & 0.136 & 0.191 & 0.225  \\
            & 0.4    & 0.154  & 0.187 & 0.254 & 0.274  \\ 
            & 0.5    & 0.198  & 0.250  & 0.318 & 0.415  \\
            \hline     
			$ \text{ICM}_n$
            & 0.0      & 0.004 & 0.000 & 0.000 & 0.000 \\
            & 0.1    & 0.002 & 0.000 & 0.000 & 0.000  \\
            & 0.2    & 0.001 & 0.000 & 0.000 & 0.000  \\
            & 0.3    & 0.002 & 0.000 & 0.000 & 0.000  \\
            & 0.4    & 0.003 & 0.000 & 0.000 & 0.000  \\
            & 0.5    & 0.002 & 0.000 & 0.000 & 0.000  \\
			\hline
			$\text{PCvM}_n$
			  & 0.0    & 0.060 & 0.070 & 0.061 & 0.050    \\
            & 0.1    & 0.057 & 0.077 & 0.084 & 0.077  \\
            & 0.2    & 0.069 & 0.075 & 0.102 & 0.153  \\
            & 0.3    & 0.086 & 0.090  & 0.149 & 0.180   \\
            & 0.4    & 0.083 & 0.114 & 0.180  & 0.205  \\
            & 0.5    & 0.113 & 0.129 & 0.204 & 0.251  \\
			\hline			
	\end{tabular}}
\end{table}

The simulation results are reported in Tables~\ref{table:h1_1}--\ref{table:h4_1}. Overall, $\text{WICM}_n^{(1)}$, $\text{WICM}_n^{(2)}$, $\text{CM}_n$, $\text{TCM}_n$, and $\text{PCvM}_n$ maintain the empirical significance level across the models considered. However, the $\text{ICM}_n$ test performs satisfactorily only in low-dimensional settings and becomes severely conservative as the dimension increases. For instance, under $H_1$, the empirical size of $\text{ICM}_n$ is close to the nominal level when $p=2$ or $p=4$, but quickly decreases toward zero as $p$ increases; a similar pattern is observed under $H_2$, $H_3$, and $H_4$.

Under the single-index alternative $H_1$, $\text{WICM}_n^{(1)}$ exhibits the highest empirical power in almost all settings. The power of $\text{WICM}_n^{(2)}$ is also competitive, but it is consistently lower than that of $\text{WICM}_n^{(1)}$, suggesting that the weight function used in $\text{WICM}_n^{(1)}$ is more effective for detecting this type of departure from the null model. The $\text{CM}_n$ test attains only moderate power, while $\text{TCM}_n$ exhibits a substantial power loss after the transformation, in agreement with \cite{tan2025weighted}. Since $\text{WICM}_n^{(2)}$ and $\text{CM}_n$ target the same nonparametric alternative, the power advantage of $\text{WICM}_n^{(2)}$ appears to arise from the use of an exponential weight function in place of the indicator function used in $\text{CM}_n$. Although $\text{PCvM}_n$ maintains empirical size control, its power is considerably lower than that of the proposed procedures. Overall, the proposed tests achieve the best power performance among the methods considered.

The results under the multiple-index alternative $H_2$ show a similar pattern. The proposed test $\text{WICM}_n^{(1)}$ is the most powerful procedure across nearly all combinations of $(n,p)$ and signal strength. The $\text{PCvM}_n$ test performs well under this model and typically has higher power than $\text{WICM}_n^{(2)}$, $\text{CM}_n$, and $\text{TCM}_n$, although it generally remains less powerful than $\text{WICM}_n^{(1)}$. The performance of $\text{CM}_n$ is broadly comparable to that of $\text{WICM}_n^{(2)}$ in several settings, whereas $\text{TCM}_n$ is again less powerful. The $\text{ICM}_n$ test has nontrivial power only when the dimension is small, but its empirical size distortion in higher dimensions. For model $H_3$, $\text{PCvM}_n$ is highly competitive and tends to achieve slightly higher power in several settings. Nevertheless, the proposed statistics remain comparable to $\text{PCvM}_n$ and continue to perform well across all dimensions and sample sizes. In particular, $\text{WICM}_n^{(1)}$ has strong power for weak and moderate signals, whereas $\text{WICM}_n^{(2)}$ is broadly comparable to, and often slightly more powerful than, $\text{CM}_n$. 

For model $H_4$, where the regression function does not possess a simple dimension-reduction structure, the proposed weighted tests continue to perform well. The statistic $\text{WICM}_n^{(1)}$ is generally the most powerful method, especially for weak and moderate alternatives. The power of $\text{WICM}_n^{(2)}$ increases markedly along the displayed sequence of $(n,p)$ values and can become comparable to, or even exceed, that of $\text{WICM}_n^{(1)}$ for stronger alternatives. The $\text{CM}_n$ test also gains power as the signal increases, but it is generally less powerful than the proposed tests. By contrast, $\text{TCM}_n$ and $\text{PCvM}_n$ have substantially lower power under $H_4$, and $\text{ICM}_n$ almost never rejects in the high-dimensional settings.

Taken together, the simulation results demonstrate that the proposed weighted tests maintain satisfactory size control and provide substantial power gains across a wide range of alternatives. Relative to the $\text{CM}_n$ and $\text{TCM}_n$ procedures of \cite{tan2025weighted}, the proposed tests are broadly competitive and, in several cases, deliver noticeably higher empirical power.

\subsection{Real data example}

In this subsection, we apply the proposed test to the Geographical Origin of Music data set, which was first analyzed by \cite{zhou2014predicting}. The data set contains $1,059$ observations, where the response is latitude $Y$, and the predictor vector $X=(X_1,\cdots,X_{68})^\top$ consists of $68$ audio features of the track. For simplicity, all variables are standardized separately. We consider fitting the linear regression model
\begin{equation*}
    Y=\beta^\top X+\varepsilon.
\end{equation*}
We then assess whether this linear model is adequate for the data. Treating the alternative as nonparametric, we use the same method to construct the weight function $g(x)$ as in the simulation studies. The value of $\text{WICM}_n$ is $3.3037$, with a $p$-value approximately equal to $0$ by the proposed bootstrap procedure. This provides strong evidence against the adequacy of the linear regression model for predicting the response.

To further illustrate this point, Figure \ref{Figure}(a) displays the scatter plot of $Y$ against the fitted values $\widehat{\beta}^\top X$, while Figure \ref{Figure}(b) shows the scatter plot of the residuals $\widehat{e}=Y-\widehat{\beta}^\top X$ against the fitted values $\widehat{\beta}^\top X$. Both plots suggest that the relationship between $Y$ and $X$ is not well described by a linear model. This indicates that a more flexible model is needed for this data set.

\begin{figure}[ht]
\centering
\includegraphics[width=\linewidth]{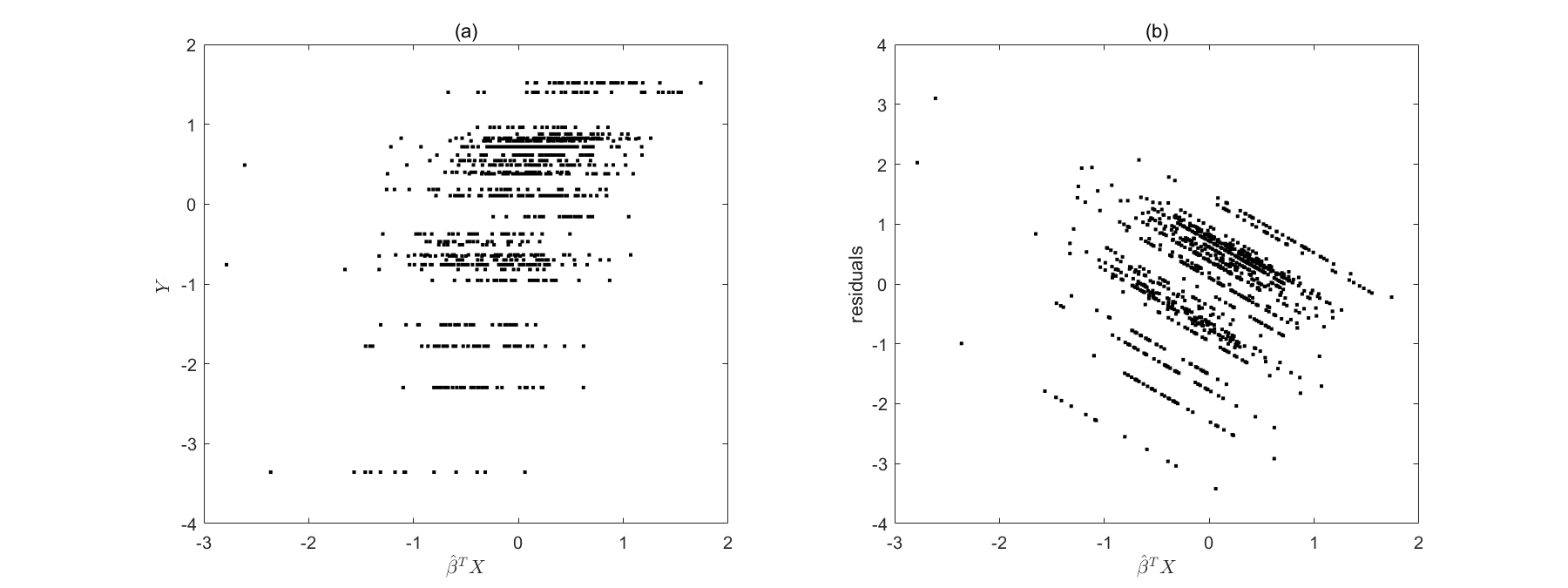}
\caption{(a) The scatter plot of $Y$ against the fitted values $\widehat{\beta}^{\top}X$, and (b) the scatter plot of the residuals against the fitted values $\widehat{\beta}^{\top}X$ for the Geographical Origin of Music data set.}
\label{Figure}
\end{figure}

\section{Discussion}
\label{sec:con}
Although widely used, the ICM test exhibits fundamentally different asymptotic behavior in high-dimensional settings, and the associated wild bootstrap is no longer valid. To address this issue, we propose a test based on weighted residual processes, together with a smooth residual bootstrap to approximate its null distribution. We establish the asymptotic properties of the proposed test under both the null and alternative hypotheses, showing that it admits a nondegenerate limiting distribution under the null, is consistent against fixed alternatives, and can detect local alternatives converging to the null at the rate $1/\sqrt{n}$. Simulation results show that the proposed test controls the nominal level well and outperforms the existing test in terms of power. 

There are several important directions for future research. One is to extend the proposed methodology to high-dimensional settings where $p>n$. Another important direction is to improve the power of the proposed test to detect more general forms of model misspecification, including the variance structure. It would also be useful to investigate more adaptive, data-driven choices of the weight function to improve power against particular alternatives.

\appendix
\allowdisplaybreaks
\section{Proofs of Theoretical Results for the Test Statistic}

The appendix contains the proofs of the theoretical results stated in the main text.

\label{sec:main_thm}
\begin{proof}[Proof of Theorem \ref{theorem:1}]
Recall that the test statistic is 
\begin{equation*}
\begin{split}
	\widehat{U}_n(t) =\frac{1}{\sqrt{n}}\sum_{i=1}^{n}\{g(X_i)-\bar{g}\}\{\cos(t\widehat{e}_i)+\sin(t\widehat{e}_i)\},
\end{split}
\end{equation*}
for $t\in\mathbb{R}$, where $\bar{g}=n^{-1}\sum_{i=1}^{n}g(X_i)$ and the estimated residual 
$$\widehat{e}_i=Y_i-m(X_i,\widehat{\beta}_n)=m(X_i)-m(X_i,\widehat{\beta}_n)+\varepsilon_i.$$
Under the null hypothesis $H_0$, we have $m(X_i)-m(X_i,\beta_0)$ and thus 
$$\widehat{e}_i=Y_i-m(X_i,\widehat{\beta}_n)=m(X_i,\beta_0)-m(X_i,\widehat{\beta}_n)+\varepsilon_i.$$
Next, we decompose the proof into the following steps.

\textbf{Step 1}. First, we show that
$$\widehat{U}_n(t)=U_n(t)+R_n(t) \quad  \mbox{and}  \quad \int_{\mathbb{R}}^{}|R_n(t)|^2\varphi(t)dt=o_p(1),$$
where $R_n(t)$ is a remainder and $U_n(t)$ is defined as
\begin{equation*}
\begin{split}
	U_n(t)=&\frac{1}{\sqrt{n}}\sum_{i=1}^{n}\big(g_0(X_i)[\cos(t\varepsilon_i)+\sin(t\varepsilon_i) -  E\{\cos(t \varepsilon_i)+\sin(t \varepsilon_i)\} ] \\
    &+ W(t)^{\top}\Sigma^{-1}\dot{m}(X_i,\beta_0)\varepsilon_i t\big).
\end{split}
\end{equation*}
For clarity about the factor $t$, set
$$
 W(t)=\mathbb E[g_0(X)\{\sin(t\varepsilon)-\cos(t\varepsilon)\}\dot m(X,\beta_0)]\text{ and } B(t)=tW(t).
$$
Consider the following decomposition of $\widehat{U}_n(t)$ as

\begin{align*}
	\widehat{U}_n(t)=&\frac{1}{\sqrt{n}}\sum_{i=1}^{n}g_0(X_i)\{\cos(t\widehat{e}_i)+\sin(t\widehat{e}_i)\}\\
    &+\frac{1}{\sqrt{n}}\sum_{i=1}^{n}\{E[g(X)]-\bar{g}\}\{\cos(t\widehat{e}_i)+\sin(t\widehat{e}_i)\}\\
    =&\underbrace{\frac{1}{\sqrt{n}}\sum_{i=1}^{n}g_0(X_i)(\cos(t\varepsilon_i)+\sin(t\varepsilon_i))}_{\widehat{U}_{n1}(t)}\\
    &+\underbrace{\frac{1}{\sqrt{n}}\sum_{i=1}^{n}g_0(X_i)\{\cos(t\widehat{e}_i)-\cos(t\varepsilon_i)+\sin(t\widehat{e}_i)-\sin(t\varepsilon_i)\}}_{\widehat{U}_{n2}(t)}\\
    &-\underbrace{\frac{1}{\sqrt{n}}\sum_{i=1}^{n} g_0(X_i) \frac{1}{n} \sum_{i=1}^{n}\{\cos(t \varepsilon_i)+\sin(t \varepsilon_i)\}}_{\widehat{U}_{n3}(t)}\\
    &-\underbrace{\frac{1}{\sqrt{n}}\bigg\{\sum_{i=1}^{n} \frac{1}{n}g_0(X_i)\bigg\}\sum_{i=1}^{n} \{\cos(t\widehat{e}_i)-\cos(t \varepsilon_i)\}}_{\widehat{U}_{n4}(t)}\\
    &-\underbrace{\frac{1}{\sqrt{n}}\bigg\{\sum_{i=1}^{n} \frac{1}{n}g_0(X_i)\bigg\} \sum_{i=1}^{n}\{
	\sin(t\widehat{e}_i)- \sin(t \varepsilon_i)\}}_{\widehat{U}_{n5}(t)}
\end{align*}
where we use the definition: $g_0(X_i)=g(X_i)-E[g(X)]$. Then, we deal with $\widehat{U}_{n2}(t),\widehat{U}_{n3}(t),\widehat{U}_{n4}(t),\widehat{U}_{n5}(t)$ separately.

\textit{(1) First}, we deal with the term $\widehat{U}_{n2}(t)$. By Taylor's expansion, we have
\begin{align*}
	\widehat{U}_{n2}(t) =& \frac{1}{\sqrt{n}}\sum_{i=1}^{n}g_0(X_i)\{\cos(t\widehat{e}_i)-\cos(t\varepsilon_i)+\sin(t\widehat{e}_i)-\sin(t\varepsilon_i)\} \\
	=& -\frac{1}{\sqrt{n}}\sum_{i=1}^{n}g_0(X_i)\{\sin(t\varepsilon_i)(\widehat{e}_i-\varepsilon_i)t+\frac{1}{2}\cos(t\widetilde\varepsilon_i)(\widehat{e}_i-\varepsilon_i)^2t^2\}\\
	&+\frac{1}{\sqrt{n}}\sum_{i=1}^{n}g_0(X_i)\{\cos(t\varepsilon_i)(\widehat{e}_i-\varepsilon_i)t-\frac{1}{2}\sin(t\widetilde\varepsilon_i)(\widehat{e}_i-\varepsilon_i)^2t^2\}
	\\
	=&\underbrace{-\frac{1}{\sqrt{n}}\sum_{i=1}^{n}g_0(X_i)\sin(t\varepsilon_i)(\widehat{e}_i-\varepsilon_i)t}_{\widehat{U}_{n21}(t)}\\
    &\underbrace{+\frac{1}{\sqrt{n}}\sum_{i=1}^{n}g_0(X_i)\frac{1}{2}\cos(t\widetilde\varepsilon_i)(\widehat{e}_i-\varepsilon_i)^2t^2}_{\widehat{U}_{n22}(t)}\\
    &\underbrace{+\frac{1}{\sqrt{n}}\sum_{i=1}^{n}g_0(X_i)\cos(t\varepsilon_i)(\widehat{e}_i-\varepsilon_i)t}_{\widehat{U}_{n23}(t)}\\
    &\underbrace{-\frac{1}{\sqrt{n}}\sum_{i=1}^{n}g_0(X_i)\frac{1}{2}\sin(t\widetilde\varepsilon_i)(\widehat{e}_i-\varepsilon_i)^2t^2}_{\widehat{U}_{n24}(t)},
\end{align*}
where $\Tilde{\varepsilon}_i$ is an intermediate value lying on the line segment between $\widehat{e}_i$ and $\varepsilon_i$. We then consider the term $\widehat{U}_{n21}(t),\widehat{U}_{n22}(t),\widehat{U}_{n23}(t),\widehat{U}_{n24}(t)$ separately.

For the term $\widehat{U}_{n21}(t)$,  by Taylor's expansion, we have
\begin{align*}
	\widehat{U}_{n21}(t)=& -\frac{1}{\sqrt{n}}\sum_{i=1}^{n}g_0(X_i)\sin(t\varepsilon_i)(\widehat{e}_i-\varepsilon_i)t\\
	=& \frac{1}{\sqrt{n}}\sum_{i=1}^{n}g_0(X_i)\sin(t\varepsilon_i)\{m(X_i,\widehat{\beta}_n)-m(X_i,\beta_0)\}t\\
	=& \underbrace{\frac{1}{\sqrt{n}}\sum_{i=1}^{n}g_0(X_i)\sin(t\varepsilon_i)(\widehat{\beta}_n-\beta_0)^{\top}\dot{m}(X_i,\beta_0)}_{\widehat{U}_{n211}(t)}\\
    &+\frac{1}{2}\underbrace{\frac{1}{\sqrt{n}}\sum_{i=1}^{n}g_0(X_i)\sin(t\varepsilon_i)(\widehat{\beta}_n-\beta_0)^T\Ddot{m}(X_i,\Tilde{\beta})(\widehat{\beta}_n-\beta_0)}_{\widehat{U}_{n212}(t)},
\end{align*}
where $\Tilde{\beta}$ lies between $\widehat{\beta}_n$ and $\beta_0$. For the first term $\widehat{U}_{n211}(t)$, we define 
$$M_1(t)= \mathbb{E}[g_0(X)\sin(t\varepsilon)\dot{m}(X,\beta_0)]t.$$
Furthermore,
\begin{align*}
	\widehat{U}_{n211}(t)=& \underbrace{\sqrt{n} (\widehat{\beta}_n-\beta_0)^{\top} \bigg(
	\frac{1}{n}\sum_{i=1}^{n}g_0(X_i)\sin(t\varepsilon_i)\dot{m}(X_i,\beta_0)t-M_1(t)\bigg)}_{\widehat{U}_{n2111}(t)}\\
    &+\underbrace{\sqrt{n}  (\widehat{\beta}_n-\beta_0)^{\top}M_1(t)}_{\widehat{U}_{n2112}(t)}. 
\end{align*}
Note that
\begin{align*}
	&\int_{\mathbb{R}}	|\widehat{U}_{n2111}(t)|^2 \varphi(t)dt\\
	 \leq& n \| \widehat{\beta}_n-\beta_0 \|_2^2\times \int_{\mathbb{R}} \bigg|\bigg|\frac{1}{n}\sum_{i=1}^{n}g_0(X_i)\sin(t\varepsilon_i)\dot{m}(X_i,\beta_0)t-M_1(t)\bigg|\bigg|_2^2 \varphi(t)dt
\end{align*}
where we use the Cauchy–Schwarz inequality. Furthermore,
\begin{align*}
	&\mathbb{E}  \bigg[\int_{\mathbb{R}} \bigg|\bigg|	\frac{1}{n}\sum_{i=1}^{n}g_0(X_i)\sin(t\varepsilon_i)\dot{m}(X_i,\beta_0)t-M_1(t) \bigg|\bigg|^2 \varphi(t)dt 	\bigg]
	\\
	=&\int_{\mathbb{R}}   \mathbb{E}
	\bigg[\bigg|\bigg|\frac{1}{n}\sum_{i=1}^{n}g_0(X_i)\sin(t\varepsilon_i)\dot{m}(X_i,\beta_0)t-M_1(t)\bigg|\bigg|^2\bigg]     \varphi(t) dt\\
	=& \sum_{k=1}^{p} \int_{\mathbb{R}} \mathbb{E} \bigg|\frac{1}{n}\sum_{i=1}^{n}g_0(X_i)\sin(t\varepsilon_i)\dot{m}_k(X_i,\beta_0)t-M_{1k}(t)\bigg|^2     \varphi(t) dt\\
= & \frac{1}{n}\sum_{k=1}^{p}\mathbb{V}[g_0(X)\sin(t\varepsilon)\dot{m}_k(X,\beta_0)]\times \int_{\mathbb{R}} t^2 \varphi(t) dt \\
     \leq& \frac{p}{n}\max_{1\leq k\leq p}\mathbb{E}[g_0(X)^2\dot{m}_k(X,\beta_0)^2]\times\int_{\mathbb{R}} t^2 \varphi(t) dt\leq C\frac{p}{n},
\end{align*}
where $M_{1k}(t)$ is the $k$th component of $M_{1}(t)$ and we use the definition of $M_{1}(t)$ in the third equality, we use Assumption \ref{as:model} and \ref{as:weight} in the third inequality. Thus, by Lemma \ref{lem:beta}, we have
\begin{equation*}
    \int_{\mathbb{R}}	|\widehat{U}_{n2111}(t)|^2 \varphi(t)dt=O_p\bigg(n\times\frac{p}{n}\times\frac{p}{n}\bigg)=o_p(1).
\end{equation*}
Furthermore, by Lemma \ref{lem:beta}, we have
\begin{align*}
    \widehat{U}_{n2112}(t)=\frac{1}{\sqrt{n}}
\sum_{i=1}^n M_1(t)^{\top}\Sigma^{-1}\dot{m}(X_i,\beta_0)\varepsilon_i+R_{n2111}(t),
\end{align*}
where
\begin{align*}
    &\mathbb{E}\bigg[\int_{\mathbb{R}}|R_{n2111}(t)|^2\varphi(t)dt\bigg]\leq O_p\!\left(\sqrt{\frac{p^3}{n}}\right)\times\int_{\mathbb{R}}\|M_{1}(t)\|_2^2\varphi(t)dt\\
    \leq&O_p\!\left(\sqrt{\frac{p^3}{n}}\right)\times\sup_{t\in\mathbb{R}}\|\mathbb{E}[g_0(X)\sin(t\varepsilon)\dot{m}(X,\beta_0)]\|_2^2\times\int_{\mathbb{R}}t^2\varphi(t)dt,
\end{align*}
where we use the Cauchy–Schwarz inequality in the first inequality. Let
\begin{equation*}
W_1(t)
:=
\mathbb{E}\!\left[g_0(X)\sin(t\varepsilon)\dot m(X,\beta_0)\right].
\end{equation*}
To show that \(\|W_1(t)\|_2^2\) is uniformly bounded in \(t\), it suffices to bound its one-dimensional projections. For any \(\alpha\in\mathbb{R}^p\),
\begin{equation*}
\alpha^\top W_1(t)
=
\mathbb{E}\!\left[g_0(X)\sin(t\varepsilon)\,\alpha^\top \dot m(X,\beta_0)\right].
\end{equation*}
By the Cauchy--Schwarz inequality,
\begin{equation*}
|\alpha^\top W_1(t)|^2
\le
\mathbb{E}\left[\sin^2(t\varepsilon)\right]
\mathbb{E}\left[\{g_0(X)\alpha^\top \dot m(X,\beta_0)\}^2\right].
\end{equation*}
Since \(\sin^2(t\varepsilon)\le 1\), we have
\begin{align*}
&|\alpha^\top W_1(t)|^2\\
\le&\mathbb{E}\!\left[\{g_0(X)\alpha^\top \dot m(X,\beta_0)\}^2\right]=\alpha^\top\mathbb{E}\!\left[g_0(X)^2\dot m(X,\beta_0)\dot m(X,\beta_0)^\top\right]\alpha.    
\end{align*}
By Assumption \ref{as:eigen_F}, we have $\lambda_{\max}\!\left(\mathbb{E}\!\left[g_0(X)^2\dot m(X,\beta_0)\dot m(X,\beta_0)^\top\right]\right)\le C$, then it follows that
\begin{equation*}
|\alpha^\top W_1(t)|^2 \le C\|\alpha\|_2^2
\qquad\text{for all } \alpha\in\mathbb{R}^p,
\end{equation*}
uniformly in \(t\). Therefore,
\begin{equation*}
\|W_1(t)\|_2
=
\sup_{\|\alpha\|_2=1} |\alpha^\top W_1(t)|
\le C^{1/2},
\end{equation*}
uniformly in \(t\), and hence, we have
\begin{eqnarray*}
	\int_{\mathbb{R}}|R_{n2111}(t)|^2\varphi(t)dt=o_p(1).
\end{eqnarray*}

For the term $\widehat{U}_{n212}(t)$, we define $W_2(t)=\mathbb{E}[g_0(X)\sin(t\varepsilon)\Ddot{m}(X,\beta_0)]$. Then, we have
\begin{align*}
	&\int_{\mathbb{R}}|\widehat{U}_{n212}(t)|^2 \varphi(t)dt\\
    =&  \int_{\mathbb{R}}|\frac{1}{\sqrt{n}}\sum_{i=1}^{n}g_0(X_i)\sin(t\varepsilon_i)(\widehat{\beta}_n-\beta_0)^{\top}\Ddot{m}(X_i,\Tilde\beta)(\widehat{\beta}_n-\beta_0)t|^2 \varphi(t)dt
	\\
	\leq&  \underbrace{3n
	\int_{\mathbb{R}}\bigg|(\widehat{\beta}_n-\beta_0)^{\top}\bigg[ \frac{1}{n}\sum_{i=1}^{n}g_0(X_i)\sin(t\varepsilon_i)\bigg\{\Ddot{m}(X_i,\beta_0)-\Ddot{m}(X_i,\Tilde\beta)\bigg\}t\bigg](\widehat{\beta}_n-\beta_0)\bigg|^2 \varphi(t) dt}_{\widehat{U}_{n2121}(t)}\\
    &+  \underbrace{3n
	\int_{\mathbb{R}}\bigg|(\widehat{\beta}_n-\beta_0)^{\top}\bigg[ \frac{1}{n}\sum_{i=1}^{n}g_0(X_i)\sin(t\varepsilon_i)\Ddot{m}(X_i,\beta_0)-W_2(t)\bigg](\widehat{\beta}_n-\beta_0)\bigg|^2t^2 \varphi(t) dt}_{\widehat{U}_{n2122}(t)}\\
	&+  \underbrace{3n \int_{\mathbb{R}} |(\widehat{\beta}_n-\beta_0)^{\top}W_2(t)(\widehat{\beta}_n-\beta_0)t|^2\varphi(t) dt}_{\widehat{U}_{n2123}(t)},
\end{align*}
where we use the inequality $(a+b+c)^2\leq 3a^2+3b^2+3c^2$. Then, we deal with each term separately. For $\widehat{U}_{n2121}(t)$, we have
\begin{align*}
    &\widehat{U}_{n2121}(t)\\
    \leq& 3n\|\widehat{\beta}_n-\beta_0\|_2^4\cdot\int_{\mathbb{R}}\bigg\|\frac{1}{n}\sum_{i=1}^{n}g_0(X_i)\sin(t\varepsilon_i)\{\Ddot{m}(X_i,\beta_0)-\Ddot{m}(X_i,\Tilde\beta)\}\bigg\|^2_2t^2 \varphi(t) dt\\
    \leq& 3n\|\widehat{\beta}_n-\beta_0\|_2^4\cdot\int_{\mathbb{R}}p^2\bigg\|\frac{1}{n}\sum_{i=1}^{n}g_0(X_i)\sin(t\varepsilon_i)\{\Ddot{m}(X_i,\beta_0)-\Ddot{m}(X_i,\Tilde\beta)\}\bigg\|^2_{\infty}t^2 \varphi(t) dt\\  
    \leq& 3n\|\widehat{\beta}_n-\beta_0\|_2^4\cdot\|\tilde{\beta}-\beta_0\|_2^2\cdot\int_{\mathbb{R}}p^3\bigg\|\frac{1}{n}\sum_{i=1}^{n}F(X_i)^2\bigg\|^2_{\infty}t^2 \varphi(t) dt\\
    \leq& O_p\bigg(n\times\frac{p^2}{n^2}\times\frac{p}{n}\times p^3\bigg)=o_p(1),
\end{align*}
where we use $\|\cdot\|_2$ to denote the spectral norm and $\|\cdot\|_{\infty}$ to denote the entrywise max norm for matrices, we use the definition of spectral norm in the first inequality, we use the matrix inequality in the second inequality, we use the Assumption \ref{as:lip} in the third inequality, we use Lemma \ref{lem:beta}, Assumption \ref{as:weight}, Assumption \ref{as:model} and the law of large numbers in the forth inequality.

For $\widehat{U}_{n2122}(t)$, we have
\begin{align*}
&\mathbb{E}[\widehat{U}_{n2122}(t)]\\
\leq&3n \|\widehat{\beta}_n-\beta_0\|_2^4\times \mathbb{E}\bigg[ \int_{\mathbb{R}}\bigg\| \frac{1}{n}\sum_{i=1}^{n}g_0(X_i)\sin(t\varepsilon_i)\Ddot{m}(X_i,\beta_0) - W_2(t) \bigg\|_2^2t^2 \varphi(t) dt \bigg]\\
\leq&3n \|\widehat{\beta}_n-\beta_0\|_2^4\times \mathbb{E}\bigg[ \int_{\mathbb{R}}\bigg\| \frac{1}{n}\sum_{i=1}^{n}g_0(X_i)\sin(t\varepsilon_i)\Ddot{m}(X_i,\beta_0) - W_2(t) \bigg\|_F^2 t^2\varphi(t) dt \bigg]\\
\leq&3n \|\widehat{\beta}_n-\beta_0\|_2^4\times \sum_{k=1}^{p}\sum_{l=1}^{p}\mathbb{E}\bigg[ \int_{\mathbb{R}}\bigg| \frac{1}{n}\sum_{i=1}^{n}g_0(X_i)\sin(t\varepsilon_i)\Ddot{m}_{kl}(X_i,\beta_0) - W_{2,kl}(t) \bigg|^2 t^2\varphi(t) dt \bigg]\\
\leq&3n \|\widehat{\beta}_n-\beta_0\|_2^4\times \sum_{k=1}^{p}\sum_{l=1}^{p} \int_{\mathbb{R}}\mathbb{V}\bigg\{ \frac{1}{n}\sum_{i=1}^{n}g_0(X_i)\sin(t\varepsilon_i)\Ddot{m}_{kl}(X_i,\beta_0)\bigg\} t^2\varphi(t) dt \\
\leq & Cn\times\frac{p^2}{n^2}\times\frac{p^2}{n}=o(1),
\end{align*}
where we use the definition of spectral norm in the first inequality, we use the notation $\|\cdot\|_F$ to denote the Frobenius Norm and the definition of Frobenius Norm in the second inequality, and we use the Lemma \ref{lem:beta} and Assumption \ref{as:weight} in the fifth inequality. For $\widehat{U}_{n2123}(t)$, we have
\begin{align*}
    &\widehat{U}_{n2123}(t)
    \leq 3Cn\|\widehat{\beta}_n-\beta_0\|_2^4\times\int_{\mathbb{R}} t^2\varphi(t) dt\leq 3Cn\times\frac{p^2}{n^2}\times\int_{\mathbb{R}} t^2\varphi(t) dt=o_p(1),
\end{align*}
where we use the definition of spectral norm in the first inequality, and we use the Lemma \ref{lem:beta} in the second inequality, and Assumption \ref{as:weight} in the third inequality.

Similar, for $\widehat{U}_{n23}(t)$, by replacing the $\sin(t\varepsilon)$ in the proof of $\widehat{U}_{n21}(t)$ to $\cos(t\varepsilon)$, we have
\begin{equation*}
    \widehat{U}_{n23}(t)=\frac{1}{\sqrt{n}}
\sum_{i=1}^n M_2(t)^{\top}\Sigma^{-1}\dot{m}(X_i,\beta_0)\varepsilon_i+R_{n23}(t),
\end{equation*}
where 
\begin{equation*}
    M_2(t)=-\mathbb{E}[g_0(X)\cos(t\varepsilon)\dot{m}(X,\beta_0)]t
\end{equation*}
and
\begin{equation*}
    \int_{\mathbb{R}} |R_{n23}(t)|^2\varphi(t)dt=o_p(1).
\end{equation*}

Next, we deal with the term $\widehat{U}_{n22}(t)$. Note that
\begin{align*}
	\widehat{U}_{n22}(t) =&\frac{1}{2\sqrt{n}}\sum_{i=1}^{n}g_0(X_i)\cos(t\varepsilon_i)(\widehat{e}_i-\varepsilon_i)^2t^2 \\
	=& \frac{1}{2\sqrt{n}}\sum_{i=1}^{n}g_0(X_i)\cos(t\varepsilon_i)(m(X_i,\widehat{\beta}_n)-m(X_i,\beta_0))^2t^2\\
	=& \underbrace{\frac{1}{\sqrt{n}}\sum_{i=1}^{n}g_0(X_i)\cos(t\varepsilon_i)\{(\widehat{\beta}_n-\beta_0)^{\top}\dot{m}(X_i,\beta_0)\}^2t^2}_{\widehat{U}_{n221}(t)}\\
    &+\underbrace{\frac{1}{\sqrt{n}}\sum_{i=1}^{n}g_0(X_i)\cos(t\varepsilon_i)[(\widehat{\beta}_n-\beta_0)^{\top}\{\dot{m}(X_i,\beta_0)-\dot{m}(X_i,\Tilde\beta)\}]^2t^2}_{\widehat{U}_{n222}(t)},
\end{align*}
where we use Taylor's expansion and $\Tilde{\beta}$ lies between $\widehat{\beta}_n$ and $\beta_0$. Let
\begin{eqnarray*}
	W_3(t)=\mathbb{E}[g_0(X)\cos(t\varepsilon)\dot{m}(X,\beta_0)\dot{m}^{\top}(X,\beta_0)].
\end{eqnarray*}
Then, we have
\begin{align*}
	& 4\int_{\mathbb{R}}|\widehat{U}_{n221}(t)|^2 \varphi(t)dt \\
	=&\int_{\mathbb{R}}\bigg|\frac{1}{\sqrt{n}}\sum_{i=1}^{n}g_0(X_i)\cos(t\varepsilon_i)(\widehat{\beta}_n-\beta_0)^{\top}\dot{m}(X_i,\beta_0)\dot{m}(X_i,\beta_0)^{\top}(\widehat{\beta}_n-\beta_0)\bigg|^2t^4\varphi(t)dt
	\\
	 \leq & n
	\int_{\mathbb{R}}\bigg|(\widehat{\beta}_n-\beta_0)^{\top}\bigg\{\frac{1}{n}\sum_{i=1}^{n}g_0(X_i)\cos(t\varepsilon_i)\dot{m}(X_i,\beta_0)\dot{m}(X_i,\beta_0)^{\top}-W_3(t)\bigg\}(\widehat{\beta}_n-\beta_0)\bigg|^2t^4 \varphi(t) dt\\
	&+n\int_{\mathbb{R}} |(\widehat{\beta}_n-\beta_0)^{\top}W_3(t)(\widehat{\beta}_n-\beta_0)|^2t^4\varphi(t) dt\\
	 \leq & n\|\widehat{\beta}_n-\beta_0\|_2^4\times \int_{\mathbb{R}} \bigg\|\frac{1}{n}\sum_{i=1}^{n}g_0(X_i)\cos(t\varepsilon_i)\dot{m}(X_i,\beta_0)\dot{m}(X_i,\beta_0)^{\top}-W_3(t)\bigg\|_2^2t^4 \varphi(t) dt\\
	& + C n\| \widehat{\beta}_n-\beta_0 \|_2^4,
\end{align*}
where we use the triangle inequality in the second inequality, and we use the Assumption \ref{as:eigen_F} in the third inequality. Note that
\begin{align*}
	& \mathbb{E}\bigg[ \int_{\mathbb{R}} \bigg\| \frac{1}{n}\sum_{i=1}^{n}g_0(X_i)\cos(t\varepsilon_i)\dot{m}(X_i,\beta_0)\dot{m}(X_i,\beta_0)^{\top} - W_3(t) \|^2 t^4\varphi(t) dt \bigg]\\
	=&  \sum_{k=1}^{p}\sum_{l=1}^{p}\int_{\mathbb{R}} \mathbb{E} \bigg[ \frac{1}{n}\sum_{i=1}^{n}g_0(X_i)\cos(t\varepsilon_i)\dot{m}_k(X_i,\beta_0)\dot{m}_l(X_i,\beta_0) -W_{3,kl}(t)\bigg]^2t^4   \varphi(t) dt \\
\leq &\frac{1}{n} \sum_{k=1}^{p}\sum_{l=1}^{p} \int_{\mathbb{R}} \mathbb{V} \{g_0(X_i)\cos(t\varepsilon_i)\dot{m}_k(X_i,\beta_0)\dot{m}_l(X_i,\beta_0)\}t^4   \varphi(t) dt \\
	 \leq & \frac{1}{n} \sum_{k=1}^{p}\sum_{l=1}^{p} \mathbb{E}[g_0(X)\cos(t\varepsilon)\dot{m}_j(X,\beta_0)\dot{m}_k(X,\beta_0)]^2 \leq C \frac{p^2}{n}.
\end{align*}
Combining this with Lemma \ref{lem:beta} and Assumption \ref{as:weight}, we obtain that
$$ \int_{\mathbb{R}}|\widehat{U}_{n211}(t)|^2 \varphi(t)dt = O_p(p^2/n)=o_p(1) .$$
For the remaining term, we have
\begin{align*}
    &4\int_{\mathbb{R}}| \widehat{U}_{n222}(t)|^2 \varphi(t)dt\\
    =&n\int_{\mathbb{R}}\bigg|\frac{1}{n}\sum_{i=1}^{n}g_0(X_i)\cos(t\varepsilon_i)[(\widehat{\beta}_n-\beta_0)^{\top}\{\dot{m}(X_i,\beta_0)-\dot{m}(X_i,\Tilde\beta)\}]^2\bigg|^2t^4\varphi(t)dt\\
    \leq&\|\widehat{\beta}_n-\beta_0\|_2^4\times n\int_{\mathbb{R}}\bigg|\frac{1}{n}\sum_{i=1}^{n}g_0(X_i)\cos(t\varepsilon_i)\times\|\dot{m}(X_i,\beta_0)-\dot{m}(X_i,\Tilde\beta)\|_2^2\bigg|^2t^4\varphi(t)dt\\
    \leq&\|\widehat{\beta}_n-\beta_0\|_2^4\times np^2\int_{\mathbb{R}}\bigg|\frac{1}{n}\sum_{i=1}^{n}g_0(X_i)\cos(t\varepsilon_i)\times\|\dot{m}(X_i,\beta_0)-\dot{m}(X_i,\Tilde\beta)\|_\infty^2\bigg|^2t^4\varphi(t)dt\\
    \leq&\|\widehat{\beta}_n-\beta_0\|_2^8\times np^4\int_{\mathbb{R}}\bigg|\frac{1}{n}\sum_{i=1}^{n}F(X_i)^3\bigg|^2t^4\varphi(t)dt\\
    =&O_p\bigg(\frac{p^4}{n^4}\times np^4\bigg)=o_p(1),
\end{align*}
where we use the Cauchy–Schwarz inequality in the second inequality, we use the inequality $\|\cdot\|_2\leq\sqrt{p}\|\cdot\|_{\infty}$, we use the Assumption \ref{as:lip} in the forth inequality, and we use Lemma \ref{lem:beta} and Assumption \ref{as:eigen} and \ref{as:weight} in the last inequality. Thus, we have
\begin{equation*}
     \int_{\mathbb{R}}	|\widehat{U}_{n22}(t)|^2 \varphi(t)dt=o_p(1).
\end{equation*}
Similarly, we have
\begin{equation*}
     \int_{\mathbb{R}}	|\widehat{U}_{n24}(t)|^2 \varphi(t)dt=o_p(1).
\end{equation*}

\textit{(2) Second}, we deal with the term $\widehat{U}_{n4}(t)$. Note, that
\begin{equation*}
	\widehat{U}_{n4}(t) =  \frac{1}{\sqrt{n}}\sum_{i=1}^{n} g_0(X_i) \frac{1}{n} \sum_{i=1}^{n} (\cos(t\widehat{e}_i) -\cos(t \varepsilon_i) ).
\end{equation*}
By Taylor expansion, we have
\begin{align*}
	\widehat{U}_{n4}(t)=& - \frac{1}{\sqrt{n}}\sum_{i=1}^{n} g_0(X_i) \frac{1}{n} \sum_{i=1}^{n}\sin(t\Tilde{\varepsilon}_i)(\widehat{e}_i-\varepsilon_i)t
\end{align*}
where $\Tilde{\varepsilon}_i$ lies between $\widehat{e}_i$ and $\varepsilon_i$. Furthermore, by Taylor's expansion, we can obtain
\begin{align*}
	\widehat{U}_{n4}(t) =& - \frac{1}{\sqrt{n}}\sum_{i=1}^{n} g_0(X_i) \frac{1}{n} \sum_{i=1}^{n}\sin(t\Tilde{\varepsilon}_i)(\widehat{e}_i-\varepsilon_i)t\\
	=& \frac{1}{\sqrt{n}}\sum_{i=1}^{n} g_0(X_i) \frac{1}{n} \sum_{i=1}^{n} \sin(t\Tilde{\varepsilon}_i)\{m(X_i,\widehat{\beta}_n)-m(X_i,\beta_0)\}t\\
	=& \underbrace{\frac{1}{\sqrt{n}}\sum_{i=1}^{n} g_0(X_i) \frac{1}{n} \sum_{i=1}^{n} \sin(t\Tilde{\varepsilon}_i)\{(\widehat{\beta}_n-\beta_0)^{\top}\dot{m}(X_i,\beta_0)\}t}_{\widehat{U}_{n41}(t)}\\
	&+\underbrace{\frac{1}{\sqrt{n}}\sum_{i=1}^{n} g_0(X_i) \frac{1}{n} \sum_{i=1}^{n} \sin(t\Tilde{\varepsilon}_i)[(\widehat{\beta}_n-\beta_0)^{\top}\{\dot{m}(X_i,\Tilde{\beta})-\dot{m}(X_i,\beta_0)\}]t}_{\widehat{U}_{n42}(t)}
\end{align*}
where $\Tilde{\beta}$ lies between $\widehat{\beta}_n$ and $\beta_0$. For the term $\widehat{U}_{n41}(t)$, we have
\begin{align*}
	&\int_{\mathbb{R}}	|\widehat{U}_{n41}(t)|^2 \varphi(t)dt \\
    =& \int_{\mathbb{R}}\bigg|\frac{1}{\sqrt{n}}\sum_{i=1}^{n} g_0(X_i) \frac{1}{n} \sum_{i=1}^{n} \sin(t\Tilde{\varepsilon}_i)(\widehat{\beta}_n-\beta_0)^{\top}\dot{m}(X_i,\beta_0)\bigg|^2t^2 \varphi(t)dt\\
	=& \int_{\mathbb{R}}\bigg|\frac{1}{\sqrt{n}}\sum_{i=1}^{n} g_0(X_i)\bigg|^2\cdot\bigg|\frac{1}{n} \sum_{i=1}^{n} \sin(t\Tilde{\varepsilon}_i)(\widehat{\beta}_n-\beta_0)^{\top}\dot{m}(X_i,\beta_0)\bigg|^2t^2\varphi(t)dt\\
	 \leq &\underbrace{3\int_{\mathbb{R}}\bigg|\frac{1}{\sqrt{n}}\sum_{i=1}^{n} g_0(X_i)\bigg|^2\cdot\bigg|(\widehat{\beta}_n-\beta_0)^{\top}\bigg[\frac{1}{n} \sum_{i=1}^{n} \sin(t\varepsilon_i)\dot{m}(X_i,\beta_0)-W_4(t)\bigg]\bigg|^2t^2\varphi(t)dt}_{\widehat{U}_{n411}(t)}\\
     &+\underbrace{3\int_{\mathbb{R}}\bigg|\frac{1}{\sqrt{n}}\sum_{i=1}^{n} g_0(X_i)\bigg|^2\cdot\bigg|\frac{1}{n} \sum_{i=1}^{n} \{\sin(t\varepsilon_i)-\sin(t\Tilde{\varepsilon}_i)\}(\widehat{\beta}_n-\beta_0)^{\top}\dot{m}(X_i,\beta_0)\bigg|^2t^2\varphi(t)dt}_{\widehat{U}_{n412}(t)}\\
	&+   \underbrace{3\int_{\mathbb{R}}\bigg|\frac{1}{\sqrt{n}}\sum_{i=1}^{n} g_0(X_i)\bigg|^2\cdot|(\widehat{\beta}_n-\beta_0)^{\top}W_4(t)|^2t^2\varphi(t)dt}_{\widehat{U}_{n413}(t)},
\end{align*}
where we use the inequality $(a+b+c)^2\leq 3a^2+3b^2+3c^2$ in the third inequality and we introduce the notation $W_4(t):=\mathbb{E}[\sin(t\varepsilon)\dot{m}(X,\beta_0)]$. For the term $\widehat{U}_{n411}(t)$, we have
\begin{align*}
\mathbb{E}[\widehat{U}_{n411}(t)]&\leq O(\|\widehat{\beta}_n-\beta_0\|_2^2)\times\int_{\mathbb{R}}\mathbb{E}\bigg\|\frac{1}{n} \sum_{i=1}^{n} \sin(t\varepsilon_i)\dot{m}(X_i,\beta_0)-W_4(t)\bigg\|^2t^2\varphi(t)dt\\
&\leq O\bigg(\frac{p}{n}\bigg)\times\sum_{j=1}^{p}\frac{1}{n}\mathbb{V}\{ \sin(t\varepsilon_i)\dot{m}_j(X_i,\beta_0)\}\times\int_{\mathbb{R}}t^2\varphi(t)dt=o(1),
\end{align*}
where we use the Cauchy–Schwarz inequality in the first inequality, we use Lemma \ref{lem:beta} and $\dot{m}_j(X_i,\beta_0)$ to denote $j$th coordinate of $\dot{m}(X_i,\beta_0)$ in the second inequality, and we use Assumption \ref{as:weight} in the last inequality. For the term $\widehat{U}_{n412}(t)$, we have
\begin{align*}
&\mathbb{E}[\widehat{U}_{n412}(t)]\\
\leq& O(\|\widehat{\beta}_n-\beta_0\|_2^2)\times\int_{\mathbb{R}}\mathbb{E}\bigg\|\frac{1}{n} \sum_{i=1}^{n} \{\sin(t\varepsilon_i)-\sin(t\Tilde{\varepsilon}_i)\}\dot{m}(X_i,\beta_0)\bigg\|^2t^2\varphi(t)dt\\
\leq &O(\|\widehat{\beta}_n-\beta_0\|_2^2)\times\sum_{j=1}^{p}\int_{\mathbb{R}}\mathbb{E}\bigg|\frac{1}{n} \sum_{i=1}^{n} \{\sin(t\varepsilon_i)-\sin(t\Tilde{\varepsilon}_i)\}\dot{m}_j(X_i,\beta_0)\bigg|^2t^2\varphi(t)dt\\
\leq &O(\|\widehat{\beta}_n-\beta_0\|_2^2)\times\sum_{j=1}^{p}\int_{\mathbb{R}}\mathbb{E}\bigg|\frac{1}{n} \sum_{i=1}^{n} 2\cdot F(X_i) \bigg|^2t^2\varphi(t)dt\\
\leq &O(\|\widehat{\beta}_n-\beta_0\|_2^2)\times\sum_{j=1}^{p}\int_{\mathbb{R}}\mathbb{E}\{4F(X)^2\}t^2\varphi(t)dt=o(1),
\end{align*}
where we use the Cauchy–Schwarz inequality in the first inequality, we use Assumption \ref{as:model} in the third and fourth inequality, and we use Assumption \ref{as:weight} and Lemma \ref{lem:beta} in the last equality.

For the term $\widehat{U}_{n413}(t)$, we have
\begin{align*}
&\mathbb{E}[\widehat{U}_{n413}(t)]\\
\leq& O(\|\widehat{\beta}_n-\beta_0\|_2^2)\times\int_{\mathbb{R}}\|W_4(t)\|^2t^2\varphi(t)dt\\
\leq& O(\|\widehat{\beta}_n-\beta_0\|_2^2)\times p\int_{\mathbb{R}}t^2\varphi(t)dt=o(1),
\end{align*}
where we use the Cauchy–Schwarz inequality in the first inequality, and we use Assumption \ref{as:weight} and Lemma \ref{lem:beta} in the last equality.

For the term $\widehat{U}_{n42}(t)$, we have
\begin{align*}
	&\int_{\mathbb{R}}	|\widehat{U}_{n42}(t)|^2 \varphi(t)dt \\
   =&\int_{\mathbb{R}}\bigg|\frac{1}{\sqrt{n}}\sum_{i=1}^{n} g_0(X_i)\bigg|^2\times\bigg| \frac{1}{n} \sum_{i=1}^{n} \sin(t\Tilde{\varepsilon}_i)[(\widehat{\beta}_n-\beta_0)^{\top}\{\dot{m}(X_i,\Tilde{\beta})-\dot{m}(X_i,\beta_0)\}]\bigg|^2t^2\varphi(t)dt\\ 
    \leq&\int_{\mathbb{R}}\bigg|\frac{1}{\sqrt{n}}\sum_{i=1}^{n} g_0(X_i)\bigg|^2\times\|\widehat{\beta}_n-\beta_0\|_2^2\times \bigg\| \frac{1}{n} \sum_{i=1}^{n} \sin(t\Tilde{\varepsilon}_i)\{\dot{m}(X_i,\Tilde{\beta})-\dot{m}(X_i,\beta_0)\}\bigg\|^2t^2\varphi(t)dt \\
    \leq&\int_{\mathbb{R}}\bigg|\frac{1}{\sqrt{n}}\sum_{i=1}^{n} g_0(X_i)\bigg|^2\times\|\widehat{\beta}_n-\beta_0\|_2^2\times p\bigg| \frac{1}{n} \sum_{i=1}^{n} \sqrt{p}\|\Tilde{\beta}-\beta_0\|_2 F(X_i)\bigg|^2t^2\varphi(t)dt \\
    \leq&\int_{\mathbb{R}}\bigg|\frac{1}{\sqrt{n}}\sum_{i=1}^{n} g_0(X_i)\bigg|^2\times\|\widehat{\beta}_n-\beta_0\|_2^4\times p^2\bigg| \frac{1}{n} \sum_{i=1}^{n} F(X_i)\bigg|^2t^2\varphi(t)dt \\
    =&O_p\bigg(\frac{p^2}{n^2}\times p^2\bigg)=o_p(1),
\end{align*}
where we use the Cauchy–Schwarz inequality in the second inequality, we use Assumption \ref{as:lip} in the third inequality, we use Lemma \ref{lem:beta} and Assumption \ref{as:weight} in the last inequality. Thus, we have $\int_{\mathbb{R}}	|\widehat{U}_{n4}(t)|^2 \varphi(t)dt=o_p(1)$. 

Similarly, the proof for $\widehat{U}_{n4}(t)$ carries over to $\widehat{U}_{n5}(t)$ by replacing $\cos(\cdot)$ with $\sin(\cdot)$, which yields 
\begin{equation*}
\int_{\mathbb{R}} |\widehat{U}_{n5}(t)|^2 \varphi(t)\,dt = o_p(1).    
\end{equation*}

\textit{(3) Third}, we deal with the term $\widehat{U}_{n3}(t)$. Note, that
\begin{align*}
	\widehat{U}_{n3}(t)  =& \frac{1}{\sqrt{n}}\sum_{i=1}^{n} g_0(X_i) \frac{1}{n} \sum_{i=1}^{n} \{\cos(t \varepsilon_i)+\sin(t \varepsilon_i)\}\\
	=& \underbrace{\frac{1}{\sqrt{n}}\sum_{i=1}^{n} g_0(X_i) \frac{1}{n} \sum_{i=1}^{n}[\cos(t \varepsilon_i)+\sin(t \varepsilon_i)-\mathbb{E}\{\cos(t \varepsilon_i)+\sin(t \varepsilon_i)\}]}_{\widehat{U}_{n31}(t)}\\
	& + \underbrace{\frac{1}{\sqrt{n}}\sum_{i=1}^{n} g_0(X_i) \mathbb{E}\{\cos(t \varepsilon_i)+\sin(t \varepsilon_i)\}}_{\widehat{U}_{n32}(t)}.
\end{align*}
For the term $\widehat{U}_{n31}(t)$, we have
\begin{align*}
	&\mathbb{E}\int_{\mathbb{R}}	|\widehat{U}_{n31}(t)|^2 \varphi(t)dt \\
    =&\int_{\mathbb{R}}	\mathbb{E}\bigg|\frac{1}{\sqrt{n}}\sum_{i=1}^{n} g_0(X_i) \bigg|^2\times\bigg|\frac{1}{n} \sum_{i=1}^{n}[\cos(t \varepsilon_i)+\sin(t \varepsilon_i)-\mathbb{E}\{\cos(t \varepsilon_i)+\sin(t \varepsilon_i)\}]\bigg|^2 \varphi(t)dt\\
    =&O(1)\times\frac{1}{n}\int_{\mathbb{R}}	\mathbb{V}\{\cos(t \varepsilon)+\sin(t \varepsilon)\} \varphi(t)dt=o(1).
\end{align*}

Altogether we obtain
\begin{eqnarray*}
	\widehat{U}_{n}(t)
	&=&  \frac{1}{\sqrt{n}}\sum_{i=1}^{n}\big(g_0(X_i)[\cos(t\varepsilon_i)+\sin(t\varepsilon_i) -  \mathbb{E}\{\cos(t \varepsilon_i)+\sin(t \varepsilon_i)\} ]\\ 
    &&+ W(t)^{\top}\varepsilon_i\Sigma^{-1}\dot{m}(X_i,\beta_0)t \big)+R_n(t)  \\
	&=:& U_n(t)+R_n(t),
\end{eqnarray*}
where the remainder $R_n(t)$ satisfies $\int_{\mathbb{R}}|R_n(t)|^2 \varphi(t)dt =o_p(1).$

\textbf{Step 2}. Next, we establish that
\begin{equation*}
    U_n \Rightarrow U_\infty \quad \text{in } \mathbb{C}(\Pi),
\end{equation*}
where $\Rightarrow$ denotes weak convergence of the whole stochastic process in the function space $\mathbb{C}(\Pi)$, $\Pi$ is an arbitrary compact subset of $\mathbb{R}$, $\mathbb{C}(\Pi)$ is the space of real-valued continuous functions on $\Pi$, and $U_\infty$ is a mean-zero Gaussian process with covariance function $K(s,t)$. Thus, by the Continuous Mapping Theorem, we obtain
\begin{eqnarray*}
	\int_{\Pi}	|U_n(t)|^2 \varphi(t)dt \stackrel{d}{\to}
	\int_{\Pi}	|U_\infty (t)|^2 \varphi(t)dt,
\end{eqnarray*}
where $\stackrel{d}{\to}$ denotes convergence in distribution.

First, we show that $U_n(t)$ is asymptotically tight. Recall that
\begin{align*}
U_n(t)=& \frac{1}{\sqrt{n}}\sum_{i=1}^{n}[g_0(X_i)(\cos(t\varepsilon_i)+\sin(t\varepsilon_i) -  \mathbb{E}\{\cos(t \varepsilon_i)+\sin(t \varepsilon_i)\} ] \\
&+ W(t)^{\top}\Sigma^{-1} \frac{1}{\sqrt{n}}\sum_{i=1}^{n}  \varepsilon_i\dot{m}(X_i,\beta_0)t
\end{align*}
For any $s,t \in \Pi $, we have
\begin{align*}
	&\mathbb{E}|U_n(s)-U_n(t)|^2  \\
	 \leq &  3\mathbb{E}\bigg|\frac{1}{\sqrt{n}}\sum_{i=1}^{n}\big(g_0(X_i)[\cos(s\varepsilon_i)-\mathbb{E}\{\cos(s \varepsilon_i)\}-\cos(t\varepsilon_i) + \mathbb{E}\{\cos(t \varepsilon_i)\}]\big)\bigg|^2 \\
	&+3\mathbb{E}\bigg|\frac{1}{\sqrt{n}}\sum_{i=1}^{n}\big(g_0(X_i)[\sin(s\varepsilon_i)-\mathbb{E}\{\sin(s \varepsilon_i)\}-\sin(t\varepsilon_i) +  \mathbb{E}\{\sin(t \varepsilon_i)\}]\big)\bigg|^2 \\
	&+ 3 \mathbb{E}\bigg|\{W(s)-W(t)\}^{\top} \Sigma^{-1} \frac{1}{\sqrt{n}}\sum_{i=1}^{n}  \varepsilon_i \dot{m}(X_i,\beta_0) \bigg|^2\\
=& 3\mathbb{E}\bigg|\frac{1}{\sqrt{n}}\sum_{i=1}^{n}\big(g_0(X_i)[\cos(s\varepsilon_i)-\mathbb{E}\{\cos(s \varepsilon_i)\}-\cos(t\varepsilon_i) + \mathbb{E}\{\cos(t \varepsilon_i)\}]\big)\bigg|^2 \\
	&+3\mathbb{E}\bigg|\frac{1}{\sqrt{n}}\sum_{i=1}^{n}\big(g_0(X_i)[\sin(s\varepsilon_i)-\mathbb{E}\{\sin(s \varepsilon_i)\}-\sin(t\varepsilon_i) +  \mathbb{E}\{\sin(t \varepsilon_i)\}]\big)\bigg|^2 \\
	& + 3 \{W(s)s-W(t)t\}^{\top}\Sigma^{-1} \mathbb{E}\{\varepsilon^2 \dot{m}(X,\beta_0) \dot{m}(X,\beta_0)^{\top}\} \Sigma^{-1}\{W(s)s-W(t)t\}\\
	 \leq & C\mathbb{E}\{g^2_0(X)(s \varepsilon-t \varepsilon)^2\}+ C \| W(s)s - W(t)t \|^2 \\
	 \leq & C(s-t)^2\mathbb{E}\{g^2_0(X) \varepsilon ^2\}+ C \bigg\| \frac{\partial B(\xi)}{\partial t }(s-t) \bigg\|^2,
\end{align*}
where we use the inequality $(a+b+c)^2\leq 3a^2+3b^2+3c^2$ in the first inequality, Taylor's expansion and Assumption \ref{as:eigen_F} in the third inequality, we use Taylor's expansion in the fourth inequality, and $\xi $ lies between $s$ and $t$. Note that
\begin{eqnarray*}
	\frac{\partial W(\xi)}{\partial t }= \mathbb{E}[g_0(X)\cos( \xi \varepsilon  )\dot{m}(X,\beta_0) ] \xi  + \mathbb{E}[ g_0(X)\sin( \xi \varepsilon  )\dot{m}(X,\beta_0) ] \xi,
\end{eqnarray*}
where is bounded by some constant. Thus, the coefficient to be controlled is $B(t)=tW_0(t)$. Its derivative is
$$
 B'(t)=W_0(t)+t\,\mathbb E[g_0(X)\varepsilon\{\cos(t\varepsilon)+\sin(t\varepsilon)\}\dot m(X,\beta_0)].
$$
On every compact $\Pi$, by the Cauchy--Schwarz inequality, the independence of $X$ and $\varepsilon$ under $H_0$, and Assumption~\ref{as:eigen_F}, we have $\sup_{t\in\Pi}\|B'(t)\|_2<\infty$. Consequently $\|B(s)-B(t)\|_2\le C_\Pi|s-t|$. Consequently, we have
$$ \mathbb{E}|U_n(s)-U_n(t)|^2 \leq C|s-t|^2 .$$
By Kolmogorov's tightness criterion, we have that $U_n(t)$ is asymptotically tight.

Now, we consider the convergence of finite-dimensional distributions of $U_n(t)$. For any $t_1,\cdots, t_m\in \Pi $, let $H_{ni}=(H_{ni}(t_1),\cdots, H_{ni}(t_m))^{\top}$, where
\begin{align*}
	H_{ni}(t)=& \frac{1}{\sqrt{n}}\bigg(g_0(X_i)[\cos(t\varepsilon_i)+\sin(t\varepsilon_i) -  \mathbb{E}\{\cos(t \varepsilon_i)+\sin(t \varepsilon_i)\} ] \\
    &+ W(t)^{\top}\varepsilon_i\Sigma^{-1}\dot{m}(X_i,\beta_0)t \bigg)
\end{align*}
It follows from the same arguments as in Theorem 3 of \cite{tan2025weighted} that $\{H_{ni}\}_{i=1}^{n}$ satisfies the conditions of the Lindeberg--Feller central limit theorem. Hence, $\{U_n(t_1),\dots,U_n(t_m)\}$ converges in distribution to a multivariate Gaussian distribution. Then, by Theorem 13.1 of \cite{billingsley2013convergence}, together with the convergence of the covariance matrices, we obtain that
\begin{eqnarray*}
	 U_n \Rightarrow U_\infty\text{ in } \mathbb{C}(\Pi),\quad\text{and}\int_{\Pi}	|U_n(t)|^2 \varphi(t)dt \stackrel{d}{\to}
	\int_{\Pi}	|U_\infty (t)|^2 \varphi(t)dt.
\end{eqnarray*}

For the covariance function of the Gaussian process $U(t)$, we only need to calculate $ \sum_{j=1}^n\text{Cov}\{H_{nj}(s),H_{ni}(t)\}$. Thus, we have
\begin{align*}
	&K_n (s,t)=n\text{Cov}\{H_{nj}(s),H_{nj}(t)\}\\
	&= \mathbb{E} \big( g^2_0(X) [\cos(s\varepsilon)+\sin(s\varepsilon) -  \mathbb{E}\{\cos(s \varepsilon)+\sin(s \varepsilon)\} ]\\
    &\times[\cos(t\varepsilon)+\sin(t\varepsilon) -  \mathbb{E}\{\cos(t \varepsilon)+\sin(t \varepsilon)\} ] \big) \\
	& +W(s)^{\top}\Sigma^{-1} \mathbb{E}\big( g_0(X)[\cos(t\varepsilon)+\sin(t\varepsilon) -  \mathbb{E}\{\cos(t \varepsilon)+\sin(t \varepsilon)\} ] \varepsilon  \dot{m}(X,\beta_0) \big)s \\
	&  +W(t)^{\top}\Sigma^{-1} E\big( g_0(X)[\cos(s\varepsilon)+\sin(s\varepsilon) -  \mathbb{E}\{\cos(s \varepsilon)+\sin(s \varepsilon)\} ] \varepsilon  \dot{m}(X,\beta_0) \big)t\\
	&  +W(s)^{\top}\Sigma^{-1} E\{ \varepsilon^2   \dot{m}(X,\beta_0) \dot{m}(X,\beta_0)^{\top}\}\Sigma^{-1} W(t)st.
\end{align*}

By conditions of Theorem \ref{theorem:1}, we have $K_n(s,t)\rightarrow K(s,t)$ point-wisely in $(s,t)$ for $(s,t)\in\Pi$. Then, it remains to show that the convergence from the compact set to $\mathbb R$, which needs a uniform tail bound.  Since the trigonometric part is bounded, $\sup_t\|W_0(t)\|_2<\infty$, and $\Sigma^{-1}$ is uniformly bounded, the covariance satisfies
$$
 \sup_n K_n(t,t)\le C(1+t^2).
$$
Assumption~\ref{as:weight} then gives
$$
 \lim_{M\to\infty}\sup_n\int_{|t|>M}K_n(t,t)\varphi(t)\,dt=0.
$$

\textbf{Step 3}. We will show that
\begin{eqnarray*}
	\int_{\mathbb{R}}	|U_n(t)|^2 \varphi(t)dt \rightarrow
	\int_{\mathbb{R}}	|U_\infty (t)|^2 \varphi(t)dt
	\quad {\rm in \ distribution  }.
\end{eqnarray*}
The proof follows the same line as that of Theorem 3 in \cite{tan2025weighted}. By Assumption \ref{as:model} and \ref{as:weight}, we have $\int_{\mathbb{R}} K(t,t)\varphi(t)dt< \infty$. Thus, fix any $\epsilon > 0$, there exists a compact subset $\Pi \subset \mathbb{R}$ such that
\begin{eqnarray*}
	\int_{\Pi^c} K(t,t)\varphi(t)dt< \frac{\epsilon^2}{2},
\end{eqnarray*}
where $\Pi^c$ is the complementary set of $\Pi$ in $\mathbb{R}$. Since $K(t,t)=\mathbb{E}|U_\infty (t)|^2$, it follows that
\begin{eqnarray*}
	\mathbb{E}\big(  \int_{\Pi^c} U_\infty (t)^2 \varphi(t)dt  \big) = 	\int_{\Pi^c} \mathbb{E} | U _\infty (t)|^2  \varphi(t)dt  < \frac{\epsilon^2}{2}.
\end{eqnarray*}
Moreover, since
\begin{eqnarray*}
	\mathbb{E} | U _n (t)|^2 =K_n(t,t) \quad {\rm and } \quad   K_n(t,t) \stackrel{n\to\infty}{\rightarrow} K(t,t),
\end{eqnarray*}
we have
\begin{eqnarray*}
	\mathbb{E}\bigg(  \int_{\Pi^c} U_n (t)^2 \varphi(t)dt  \bigg) = 	\int_{\Pi^c} K_n(t,t) \varphi(t)dt< \epsilon^2,
\end{eqnarray*}
for $n$ large enough. 

Now define
\begin{equation*}
J_n := \int_{\mathbb{R}} |U_n(t)|^2 \varphi(t)\,dt,
\qquad
J := \int_{\mathbb{R}} |U_\infty(t)|^2 \varphi(t)\,dt,
\end{equation*}
and decompose
\begin{equation*}
J_n = J_{n,\Pi}+R_{n,\Pi}, \qquad
J = J_{\Pi}+R_{\Pi},
\end{equation*}
where
\begin{equation*}
J_{n,\Pi} := \int_{\Pi} |U_n(t)|^2 \varphi(t)\,dt,
\qquad
R_{n,\Pi} := \int_{\Pi^c} |U_n(t)|^2 \varphi(t)\,dt,
\end{equation*}
and
\begin{equation*}
J_{\Pi} := \int_{\Pi} |U_\infty(t)|^2 \varphi(t)\,dt,
\qquad
R_{\Pi} := \int_{\Pi^c} |U_\infty(t)|^2 \varphi(t)\,dt.
\end{equation*}
Since $U_n \Rightarrow U_\infty$ in $\mathbb{C}(\Pi)$ and the map $f \mapsto \int_{\Pi} |f(t)|^2 \varphi(t)dt$ is continuous on $\mathbb{C}(\Pi)$, the continuous mapping theorem yields $J_{n,\Pi} \Rightarrow J_{\Pi}$. On the other hand, by Markov's inequality,
\begin{equation*}
\mathbb{P}(R_{n,\Pi}>\epsilon)
\le \frac{\mathbb{E}(R_{n,\Pi})}{\epsilon}
< \epsilon,
\qquad
\mathbb{P}(R_{\Pi}>\epsilon)
\le \frac{\mathbb{E}(R_{\Pi})}{\epsilon}
< \epsilon/2.
\end{equation*}

Let $x$ be any continuity point of the distribution function of $J$. Since $R_{n,\Pi}\ge 0$, we have
\begin{equation*}
\mathbb{P}(J_n\le x)
\le \mathbb{P}(J_{n,\Pi}\le x),
\end{equation*}
and
\begin{equation*}
\mathbb{P}(J_n\le x)
\ge \mathbb{P}(J_{n,\Pi}\le x-\epsilon)-\mathbb{P}(R_{n,\Pi}>\epsilon).
\end{equation*}
Therefore,
\begin{equation*}
\limsup_{n\to\infty}\mathbb{P}(J_n\le x)
\le \limsup_{n\to\infty}\mathbb{P}(J_{n,\Pi}\le x)
= \mathbb{P}(J_{\Pi}\le x),
\end{equation*}
and
\begin{equation*}
\liminf_{n\to\infty}\mathbb{P}(J_n\le x)
\ge \liminf_{n\to\infty}\mathbb{P}(J_{n,\Pi}\le x-\epsilon)-\epsilon
= \mathbb{P}(J_{\Pi}\le x-\epsilon)-\epsilon.
\end{equation*}
Since $R_{\Pi}\ge 0$, we also have
\begin{equation*}
\mathbb{P}(J\le x-\epsilon)\le \mathbb{P}(J_{\Pi}\le x-\epsilon),
\end{equation*}
and
\begin{equation*}
\mathbb{P}(J_{\Pi}\le x)
\le \mathbb{P}(J\le x+\epsilon)+\mathbb{P}(R_{\Pi}>\epsilon).
\end{equation*}
Hence
\begin{equation*}
\mathbb{P}(J\le x-\epsilon)-\epsilon
\le
\liminf_{n\to\infty}\mathbb{P}(J_n\le x)
\le
\limsup_{n\to\infty}\mathbb{P}(J_n\le x)
\le
\mathbb{P}(J\le x+\epsilon)+\epsilon/2.
\end{equation*}
Letting $\epsilon\to 0$ yields
\begin{equation*}
\lim_{n\to\infty}\mathbb{P}(J_n\le x)=\mathbb{P}(J\le x),
\end{equation*}
for every continuity point $x$ of the law of $J$. Consequently,
\begin{eqnarray*}
	\int_{\mathbb{R}}| {U}_n(t) |^2 \varphi(t)dt  \rightarrow  \int_{\mathbb{R}}| U_ \infty (t) |^2  \varphi(t)dt  \quad {\rm in \ distribution. }
\end{eqnarray*}

In Step 1, we have shown that $\widehat{U}_n(t)=U_n(t)+R_n(t)$, where
\begin{eqnarray*}
	\int_{\mathbb{R}} |R_n(t)|^2 \varphi(t)dt=o_p(1).
\end{eqnarray*}
Finally, the transfer from \(U_n\) to \(\widehat U_n=U_n+R_n\) follows
from
\[
 \left|\|\widehat U_n\|_\varphi^2-\|U_n\|_\varphi^2\right|
 \le \|R_n\|_\varphi
      \{2\|U_n\|_\varphi+\|R_n\|_\varphi\}=o_p(1),
\]
because \(\|U_n\|_\varphi=O_p(1)\) by Step 3. Thus we have
\begin{eqnarray*}
	\int_{\mathbb{R}}| \widehat{U}_n(t) |^2 \varphi(t)dt  \stackrel{d}{\to}\int_{\mathbb{R}}| U_ \infty (t) |^2  \varphi(t)dt.
\end{eqnarray*}
Hence, we complete the proof of Theorem \ref{theorem:1}.     
\end{proof}

\begin{proof}[Proof of Theorem \ref{thm:alt}]
Recall that
$$ 	
\widehat{U}_n(t)=\frac{1}{\sqrt{n}}\sum_{i=1}^{n}\{g(X_i)-\bar{g}\}\{\cos(t\widehat{e}_i)+\sin(t\widehat{e}_i)\},$$
where $\bar{g}=n^{-1}\sum_{i=1}^{n}g(X_i)$, $ t \in \mathbb{R}$. Under the alternative $H_{n1}$, we have
$$\widehat{e}_i=Y_i-m(X_i,\widehat{\beta}_n)=\varepsilon_i+m(X_i,\beta_0)+r_nS(X_i)-m(X_i,\widehat{\beta}_n).$$
Throughout the proof, the occurrence of $\beta_0$ in the preceding display is understood as $\widetilde\beta_0$.  Let
$$
 e_{n,i}=\varepsilon_i+r_nS(X_i),\qquad
 \|f\|_\varphi^2=\int_{\mathbb R}|f(t)|^2\varphi(t)\,dt .
$$
By Assumption~\ref{as:model}, we have $|S(X)|\le F(X)$ almost surely. The strict positivity statements also require the detectability conditions
$$
 \|K^{(1)}\|_\varphi>0\quad\text{and}\quad\|tK^{(2)}\|_\varphi>0
$$
in the corresponding cases. These conditions are necessary because the regularity assumptions alone do not rule out a weight $g(\cdot)$ that is orthogonal to the departure $S(\cdot)$.  By Lemma~\ref{lem:beta} and $\mathbb E\{S(X)\dot m(X,\widetilde\beta_0)\}=0$, we have
$$
 \|\widehat\beta_n-\widetilde\beta_0\|_2=O_p(\sqrt{p/n}),
$$
$$
 \sqrt n(\widehat\beta_n-\widetilde\beta_0)
 =\frac1{\sqrt n}\sum_{i=1}^n
 e_{n,i}\Sigma^{-1}\dot m(X_i,\widetilde\beta_0)
 +O_p(\sqrt{p^3/n}).
$$
Thus the parameter-estimation remainder bounds invoked from
Theorem~\ref{theorem:1} remain valid under \(H_{1n}\).

\textbf{Case 1}: We first consider the fixed alternative, corresponding to $\alpha=0$ and hence $r_n=1$. Similar to the arguments for $\widehat{U}_n$ in the proof of Theorem \ref{theorem:1}, $\widehat{U}_n$ can be decomposed into five parts:
\begin{align*}
	&\frac{1}{\sqrt{n}} \widehat{U}_n(t)
	= \underbrace{\frac{1}{n}\sum_{i=1}^{n}g_0(X_i)\{\cos(t\varepsilon_i)+\sin(t\varepsilon_i)\}}_{\widehat{U}^{(1)}_{n1}(t) }\\
    &+ \underbrace{\frac{1}{n}\sum_{i=1}^{n}g_0(X_i)\{\cos(t\widehat{e}_i)-\cos(t\varepsilon_i)+\sin(t\widehat{e}_i)-\sin(t\varepsilon_i)\}}_{\widehat{U}^{(1)}_{n2}(t) } \\
	& - \underbrace{\frac{1}{n}\sum_{i=1}^{n} g_0(X_i) \frac{1}{n} \sum_{i=1}^{n} \{\cos(t \varepsilon_i)+\sin(t \varepsilon_i)\}}_{\widehat{U}^{(1)}_{n3}(t)}\\
    &-\underbrace{\frac{1}{n}\sum_{i=1}^{n} g_0(X_i) \frac{1}{n} \sum_{i=1}^{n} \{\cos(t\widehat{e}_i)-\cos(t \varepsilon_i)\}}_{\widehat{U}^{(1)}_{n4}(t)}\\
	&- \underbrace{\frac{1}{n}\sum_{i=1}^{n} g_0(X_i) \frac{1}{n}\sum_{i=1}^{n}\{
	\sin(t\widehat{e}_i)- \sin(t \varepsilon_i)\}}_{\widehat{U}^{(1)}_{n5}(t)}.
\end{align*}

First, we consider the term $\widehat{U}^{(1)}_{n1} (t)$. By an argument analogous to that used for $\widehat{U}_{n1} (t)$ in Theorem \ref{theorem:1}, we have
\begin{eqnarray*}
	\int_{\mathbb{R}} |\widehat{U}^{(1)}_{n1} (t)| ^2 \varphi(t)dt = O_p(1/n).
\end{eqnarray*}
Similarly, we can obtain
\begin{eqnarray*}
	\int_{\mathbb{R}} |\widehat{U}^{(1)}_{n3} (t)| ^2 \varphi(t)dt = O_p(1/n).
\end{eqnarray*}

It remains to consider the remaining terms. For the second term $\widehat{U}^{(1)}_{n2}(t)$, we have
\begin{align*}
	\widehat{U}^{(1)}_{n2}(t)=& \frac{1}{n}\sum_{i=1}^{n}g_0(X_i)\{\cos(t\widehat{e}_i)-\cos(t\varepsilon_i)+\sin(t\widehat{e}_i)-\sin(t\varepsilon_i)\} \\
    =&\underbrace{\frac{1}{n}\sum_{i=1}^{n}g_0(X_i)\{\cos(t\widehat{e}_i)-\cos(te_i)+\sin(t\widehat{e}_i)-\sin(te_i)\}}_{\widehat{U}^{(1)}_{n21}(t)}\\
    &+\underbrace{\frac{1}{n}\sum_{i=1}^{n}g_0(X_i)\{\cos(te_i)-\cos(t\varepsilon_i)+\sin(te_i)-\sin(t\varepsilon_i)\}}_{\widehat{U}^{(1)}_{n22}(t)}.
\end{align*}
For the term $\widehat{U}^{(1)}_{n21}(t)$, by the proof of $\widehat{U}_{n2}(t)$ in Theorem \ref{theorem:1}, we have
\begin{eqnarray*}
	\int_{\mathbb{R}} |\widehat{U}^{(1)}_{n21} (t)| ^2 \varphi(t)dt = O_p(1/n).
\end{eqnarray*}

For the terms $\widehat{U}^{(1)}_{n4}(t)$ and $\widehat{U}^{(1)}_{n5}(t)$, we have
\begin{align*}
	\widehat{U}^{(1)}_{n4}(t)=& \frac{1}{n}\sum_{i=1}^{n}g_0(X_i)\frac{1}{n}\sum_{i=1}^{n}\{\cos(t\widehat{e}_i)-\cos(t\varepsilon_i)\} \\
    =&\underbrace{\frac{1}{n}\sum_{i=1}^{n}g_0(X_i)\frac{1}{n}\sum_{i=1}^{n}\{\cos(t\widehat{e}_i)-\cos(te_i)\}}_{\widehat{U}^{(1)}_{n41}(t)}\\
    &+\underbrace{\frac{1}{n}\sum_{i=1}^{n}g_0(X_i)\frac{1}{n}\sum_{i=1}^{n}\{\cos(te_i)-\cos(t\varepsilon_i)\}}_{\widehat{U}^{(1)}_{n42}(t)}.
\end{align*}
and
\begin{align*}
	\widehat{U}^{(1)}_{n5}(t)=& \frac{1}{n}\sum_{i=1}^{n}g_0(X_i)\frac{1}{n}\sum_{i=1}^{n}\{\sin(t\widehat{e}_i)-\sin(t\varepsilon_i)\} \\
    =&\underbrace{\frac{1}{n}\sum_{i=1}^{n}g_0(X_i)\frac{1}{n}\sum_{i=1}^{n}\{\sin(t\widehat{e}_i)-\sin(te_i)\}}_{\widehat{U}^{(1)}_{n51}(t)}\\
    &+\underbrace{\frac{1}{n}\sum_{i=1}^{n}g_0(X_i)\frac{1}{n}\sum_{i=1}^{n}\{\sin(te_i)-\sin(t\varepsilon_i)\}}_{\widehat{U}^{(1)}_{n52}(t)}.
\end{align*}
For the terms $\widehat{U}^{(1)}_{n41}(t)$ and $\widehat{U}^{(1)}_{n51}(t)$, by the proof of $\widehat{U}_{n4}(t)$ and $\widehat{U}_{n5}(t)$ in Theorem \ref{theorem:1}, we have
\begin{eqnarray*}
	\int_{\mathbb{R}} |\widehat{U}^{(1)}_{n41} (t)| ^2 \varphi(t)dt = O_p(1/n),
\end{eqnarray*}
and
\begin{eqnarray*}
	\int_{\mathbb{R}} |\widehat{U}^{(1)}_{n51} (t)| ^2 \varphi(t)dt = O_p(1/n).
\end{eqnarray*}

Altogether, we obtain that
\begin{eqnarray*}
	\frac{1}{\sqrt{n}}\widehat{U}_{n}(t)=K_n^{(1)}(t)+R_{n}^{(1)}(t),
\end{eqnarray*}
where $R_{n1}^{(1)}(t)$ is a remainder satisfying $\int_{\mathbb{R}} |R_{n1}^{(1)}(t)|^2 \varphi(t)dt = o_p(1)$, and  
\begin{equation*}
    K_n^{(1)}(t):=\widehat{U}^{(1)}_{n22}(t)+\widehat{U}^{(1)}_{n42}(t)+\widehat{U}^{(1)}_{n52}(t).
\end{equation*}
Furthermore, by the law of large numbers and Assumption \ref{as:model}, we have
\begin{equation*}
    \sup_{t\in\mathbb{R}}|\widehat{U}^{(1)}_{n42}(t)|=o_p(1)\text{ and }\sup_{t\in\mathbb{R}}|\widehat{U}^{(1)}_{n52}(t)|=o_p(1),
\end{equation*}
where we have used the boundedness of the cosine and sine functions. Thus, 
\begin{equation*}
    \int_{\mathbb{R}}|\widehat{U}^{(1)}_{n42}(t)|^2\varphi(t)dt=o_p(1)\text{ and } \int_{\mathbb{R}}|\widehat{U}^{(1)}_{n52}(t)|^2\varphi(t)dt=o_p(1).
\end{equation*}
Let $\Delta_n(t):=\widehat{U}^{(1)}_{n22}(t)-K^{(1)}(t)$, where 
$$K^{(1)}(t):=\mathbb{E}[g_0(X_i)\{\cos(te_i)-\cos(t\varepsilon_i)+\sin(te_i)-\sin(t\varepsilon_i)\}].$$
Then, by Fubini's theorem,
$$\mathbb E\int_{\mathbb R}|\Delta_n(t)|^2\varphi(t)dt=\int_{\mathbb R}\mathbb E|\Delta_n(t)|^2\varphi(t)dt$$
Since $\{(X_i,e_i,\varepsilon_i)\}_{i=1}^n$ are i.i.d., we have
$$\mathbb E|\Delta_n(t)|^2=\frac1n\mathrm{Var}[g_0(X_i)\{\cos(te_i)-\cos(t\varepsilon_i)+\sin(te_i)-\sin(t\varepsilon_i)\}].$$
Hence
\begin{align*}
&\mathbb E\int_{\mathbb R}|\Delta_n(t)|^2\varphi(t)dt\\
=&
\frac1n\int_{\mathbb R}\mathrm{Var}[g_0(X_i)\{\cos(te_i)-\cos(t\varepsilon_i)+\sin(te_i)-\sin(t\varepsilon_i)\}]\varphi(t)dt.    
\end{align*}
By Assumption \ref{as:model} and \ref{as:weight}, we have that 
$$\mathrm{Var}[g_0(X_i)\{\cos(te_i)-\cos(t\varepsilon_i)+\sin(te_i)-\sin(t\varepsilon_i)\}]\leq C,$$
for a positive constant. Therefore,
$$
\mathbb E\int_{\mathbb R}|\Delta_n(t)|^2\varphi(t)dt\to 0.
$$
It follows from Markov's inequality that
$$
\int_{\mathbb R}|\widehat{U}^{(1)}_{n22}(t)-K^{(1)}(t)|^2\varphi(t)dt=o_p(1).
$$
Next, note that
\begin{align*}
&\bigg|\int_{\mathbb R}|K_n^{(1)}(t)|^2\varphi(t)dt-\int_{\mathbb R}|K^{(1)}(t)|^2\varphi(t)dt\bigg|\\
\leq&
\int_{\mathbb R}|K_n^{(1)}(t)-K^{(1)}(t)|\times|K_n^{(1)}(t)+K^{(1)}(t)|\varphi(t)dt\\
\leq&\sqrt{\int_{\mathbb R}|K_n^{(1)}(t)-K^{(1)}(t)|^2\varphi(t)dt}\times\sqrt{\int_{\mathbb R}K_n^{(1)}(t)+K^{(1)}(t)|^2\varphi(t)dt},
\end{align*}
where we use the Cauchy--Schwarz inequality. By the previous proof, we have
$$\mathbb{E}\bigg|\int_{\mathbb R}|K_n^{(1)}(t)|^2\varphi(t)dt-\int_{\mathbb R}|K^{(1)}(t)|^2\varphi(t)dt\bigg|=o(1),$$
and, hence,
$$
\int_{\mathbb R}|K_n^{(1)}(t)|^2\varphi(t)dt-\int_{\mathbb R}|K^{(1)}(t)|^2\varphi(t)dt=o_p(1).
$$
Thus, we have
\begin{equation*}
    \frac{1}{n}\int_{\mathbb{R}}|\widehat{U}_n(t)|^2\varphi(t)dt\stackrel{p}{\to}\int_{\mathbb{R}}|K^{(1)}(t)|^2\varphi(t)dt=:C_1>0,
\end{equation*}
which completes the proof under the global alternative.

\textbf{Case 2}: We then consider the local alternative, corresponding to $\alpha\in(0,1/2)$ and hence $r_n=n^{-\alpha}$. Similar to the arguments for $\widehat{U}_n$ in the proof of Step 1, $\widehat{U}_n$ can be decomposed into five parts:
\begin{align*}
\frac{1}{r_n\sqrt{n}} \widehat{U}_n(t)
	=& \underbrace{\frac{1}{r_nn}\sum_{i=1}^{n}g_0(X_i)\{\cos(t\varepsilon_i)+\sin(t\varepsilon_i)\}}_{\widehat{U}^{(2)}_{n1}(t) }\\
    &+ \underbrace{\frac{1}{r_nn}\sum_{i=1}^{n}g_0(X_i)\{\cos(t\widehat{e}_i)-\cos(t\varepsilon_i)+\sin(t\widehat{e}_i)-\sin(t\varepsilon_i)\}}_{\widehat{U}^{(2)}_{n2}(t) } \\
	& - \underbrace{\frac{1}{r_nn}\sum_{i=1}^{n} g_0(X_i) \frac{1}{n} \sum_{i=1}^{n} \{\cos(t \varepsilon_i)+\sin(t \varepsilon_i)\}}_{\widehat{U}^{(2)}_{n3}(t)}\\
    &-\underbrace{\frac{1}{r_nn}\sum_{i=1}^{n} g_0(X_i) \frac{1}{n} \sum_{i=1}^{n} \{\cos(t\widehat{e}_i)-\cos(t \varepsilon_i)\}}_{\widehat{U}^{(2)}_{n4}(t)}\\
	&- \underbrace{\frac{1}{r_nn}\sum_{i=1}^{n} g_0(X_i) \frac{1}{n}\sum_{i=1}^{n}\{
	\sin(t\widehat{e}_i)- \sin(t \varepsilon_i)\}}_{\widehat{U}^{(2)}_{n5}(t)}.
\end{align*}

First, we consider the term $\widehat{U}^{(2)}_{n1} (t)$. By an argument analogous to that used for $\widehat{U}_{n1} (t)$ in Theorem \ref{theorem:1}, we have
\begin{eqnarray*}
	\int_{\mathbb{R}} |\widehat{U}^{(2)}_{n1} (t)| ^2 \varphi(t)dt = O_p\bigg(\frac{1}{r_n^2n}\bigg)=o_p(1).
\end{eqnarray*}
Similarly, we can obtain
\begin{eqnarray*}
	\int_{\mathbb{R}} |\widehat{U}^{(2)}_{n3} (t)| ^2 \varphi(t)dt = O_p\bigg(\frac{1}{r_n^2n}\bigg)=o_p(1).
\end{eqnarray*}

It remains to consider the remaining terms. For the second term $\widehat{U}^{(2)}_{n2}(t)$, we have
\begin{align*}
	\widehat{U}^{(2)}_{n2}(t)=& \frac{1}{r_nn}\sum_{i=1}^{n}g_0(X_i)\{\cos(t\widehat{e}_i)-\cos(t\varepsilon_i)+\sin(t\widehat{e}_i)-\sin(t\varepsilon_i)\} \\
    =&\underbrace{\frac{1}{r_nn}\sum_{i=1}^{n}g_0(X_i)\{\cos(t\widehat{e}_i)-\cos(te_i)+\sin(t\widehat{e}_i)-\sin(te_i)\}}_{\widehat{U}^{(2)}_{n21}(t)}\\
    &+\underbrace{\frac{1}{r_nn}\sum_{i=1}^{n}g_0(X_i)\{\cos(te_i)-\cos(t\varepsilon_i)+\sin(te_i)-\sin(t\varepsilon_i)\}}_{\widehat{U}^{(2)}_{n22}(t)}.
\end{align*}
For the term $\widehat{U}^{(2)}_{n21}(t)$, by the proof of $\widehat{U}_{n2}(t)$ in Theorem \ref{theorem:1}, we have
\begin{eqnarray*}
	\int_{\mathbb{R}} |\widehat{U}^{(2)}_{n21} (t)| ^2 \varphi(t)dt = O_p\bigg(\frac{1}{r_n^2n}\bigg)=o_p(1).
\end{eqnarray*}
Note that by definition, we have
\begin{equation*}
    e_i=Y_i- m(X_i,\widetilde\beta_0)=m(X_i, \widetilde\beta_0) + r_nS(X_i) + \varepsilon_i- m(X_i,\widetilde\beta_0)=r_nS(X_i) + \varepsilon_i.
\end{equation*}
Thus, we have
\begin{align*}
    &\widehat{U}^{(2)}_{n22}(t)\\
    =&\frac{1}{r_nn}\sum_{i=1}^{n}g_0(X_i)[\cos\{tr_nS(X_i) + t\varepsilon_i\}-\cos(t\varepsilon_i)+\sin\{tr_nS(X_i) + t\varepsilon_i\}-\sin(t\varepsilon_i)]\\
    =&\underbrace{\frac{tr_n}{r_nn}\sum_{i=1}^{n}g_0(X_i)S(X_i)\{\cos(t\varepsilon_i)-\sin(t\varepsilon_i)\}}_{\widehat{U}^{(2)}_{n221}}\\
&-\underbrace{\frac{t^2r_n^2}{2r_nn}\sum_{i=1}^{n}g_0(X_i)S(X_i)^2\{\cos(t\widetilde\varepsilon_i)+\sin(t\widetilde\varepsilon_i)\}}_{\widehat{U}^{(2)}_{n222}},
\end{align*}
where we use the Taylor's expansion and $\widetilde{\varepsilon}_i$ is between $\varepsilon_i$ and $r_nS(X_i) + \varepsilon_i$. For $\widehat{U}^{(2)}_{n222}(t)$, using the boundedness of the cosine and sine functions, we have
\begin{align*}
|\widehat{U}^{(2)}_{n222}(t)|&\leq\frac{|t|^2r_n}{2n}\sum_{i=1}^{n}2|g_0(X_i)S^2(X_i)| \\
&=|t|^2r_n\frac{1}{n}\sum_{i=1}^{n}|g_0(X_i)S^2(X_i)|,
\end{align*}
where we use the triangle inequality. Therefore,
\begin{align*}
\int_{\mathbb{R}}|\widehat{U}^{(2)}_{n222}(t)|^2\phi(t)\,dt
&\le
r_n^2
\Big(\frac{1}{n}\sum_{i=1}^{n}|g_0(X_i)S^2(X_i)|\Big)^2
\int_{\mathbb{R}} t^4\phi(t)\,dt \\
&= O_p(r_n^2)
= o_p(1),
\end{align*}
where we use the law of large numbers and the Assumptions \ref{as:model} and \ref{as:weight}.

For the terms $\widehat{U}^{(2)}_{n4}(t)$ and $\widehat{U}^{(2)}_{n5}(t)$, we have
\begin{align*}
	\widehat{U}^{(2)}_{n4}(t)=& \frac{1}{n}\sum_{i=1}^{n}g_0(X_i)\frac{1}{nr_n}\sum_{i=1}^{n}\{\cos(t\widehat{e}_i)-\cos(t\varepsilon_i)\} \\
    =&\underbrace{\frac{1}{n}\sum_{i=1}^{n}g_0(X_i)\frac{1}{r_nn}\sum_{i=1}^{n}\{\cos(t\widehat{e}_i)-\cos(te_i)\}}_{\widehat{U}^{(2)}_{n41}(t)}\\
    &+\underbrace{\frac{1}{n}\sum_{i=1}^{n}g_0(X_i)\frac{1}{r_nn}\sum_{i=1}^{n}\{\cos(te_i)-\cos(t\varepsilon_i)\}}_{\widehat{U}^{(2)}_{n42}(t)}.
\end{align*}
and
\begin{align*}
	\widehat{U}^{(2)}_{n5}(t)=& \frac{1}{n}\sum_{i=1}^{n}g_0(X_i)\frac{1}{r_nn}\sum_{i=1}^{n}\{\sin(t\widehat{e}_i)-\sin(t\varepsilon_i)\} \\
    =&\underbrace{\frac{1}{n}\sum_{i=1}^{n}g_0(X_i)\frac{1}{r_nn}\sum_{i=1}^{n}\{\sin(t\widehat{e}_i)-\sin(te_i)\}}_{\widehat{U}^{(2)}_{n51}(t)}\\
    &+\underbrace{\frac{1}{n}\sum_{i=1}^{n}g_0(X_i)\frac{1}{r_nn}\sum_{i=1}^{n}\{\sin(te_i)-\sin(t\varepsilon_i)\}}_{\widehat{U}^{(2)}_{n52}(t)}.
\end{align*}
For the terms $\widehat{U}^{(2)}_{n41}(t)$ and $\widehat{U}^{(2)}_{n51}(t)$, by the proof of $\widehat{U}_{n4}(t)$ and $\widehat{U}_{n5}(t)$ in Theorem \ref{theorem:1}, we have
\begin{eqnarray*}
	\int_{\mathbb{R}} |\widehat{U}^{(2)}_{n41} (t)| ^2 \varphi(t)dt = O_p\bigg(\frac{1}{r_nn}\bigg)=o_p(1),
\end{eqnarray*}
and
\begin{eqnarray*}
	\int_{\mathbb{R}} |\widehat{U}^{(2)}_{n51} (t)| ^2 \varphi(t)dt = O_p\bigg(\frac{1}{r_nn}\bigg)=o_p(1).
\end{eqnarray*}
For the term $\widehat{U}^{(2)}_{n52}(t)$, we have
\begin{align*}
    \widehat{U}^{(2)}_{n52}(t)=&\frac{1}{n}\sum_{i=1}^{n}g_0(X_i)\frac{1}{r_nn}\sum_{i=1}^{n}[\sin\{tr_nS(X_i)+t\varepsilon_i\}-\sin(t\varepsilon_i)]\\
    =&\underbrace{\frac{1}{n}\sum_{i=1}^{n}g_0(X_i)\frac{1}{r_nn}\sum_{i=1}^{n}tr_nS(X_i)\cos(t\varepsilon_i)}_{\widehat{U}^{(2)}_{n521}(t)}\\
    &-\underbrace{\frac{1}{n}\sum_{i=1}^{n}g_0(X_i)\frac{1}{2r_nn}\sum_{i=1}^{n}\{tr_nS(X_i)\}^2\sin(t\widetilde\varepsilon_i)}_{\widehat{U}^{(2)}_{n522}(t)},
\end{align*}
where we use the Taylor's expansion and $\widetilde{\varepsilon}_i$ is between $\varepsilon_i$ and $r_nS(X_i) + \varepsilon_i$. For $\widehat{U}^{(2)}_{n522}(t)$, using the boundedness of the sine functions, we have
\begin{align*}
|\widehat{U}^{(2)}_{n522}(t)|&\leq\frac{|t|^2r_n}{2n}\sum_{i=1}^{n}|S^2(X_i)|\times\bigg|\frac{1}{n}\sum_{i=1}^{n}g_0(X_i)\bigg|,
\end{align*}
where we use the triangle inequality. Therefore,
\begin{align*}
\int_{\mathbb{R}}|\widehat{U}^{(2)}_{n522}(t)|^2\phi(t)\,dt
&\le
r_n^2
\Big(\frac{1}{n}\sum_{i=1}^{n}|g_0(X_i)|\times\frac{1}{n}\sum_{i=1}^{n}|S^2(X_i)|\Big)^2
\int_{\mathbb{R}} t^4\phi(t)\,dt \\
&= O_p(r_n^2)
= o_p(1),
\end{align*}
where we use the law of large numbers and the Assumptions \ref{as:model} and \ref{as:weight}. Similarly, for the term $\widehat{U}^{(2)}_{n42}(t)$, we have
\begin{align*}
    \widehat{U}^{(2)}_{n42}(t)=&\frac{1}{n}\sum_{i=1}^{n}g_0(X_i)\frac{1}{r_nn}\sum_{i=1}^{n}[\cos\{tr_nS(X_i)+t\varepsilon_i\}-\cos(t\varepsilon_i)]\\
    =&\underbrace{-\frac{1}{n}\sum_{i=1}^{n}g_0(X_i)\frac{1}{r_nn}\sum_{i=1}^{n}tr_nS(X_i)\sin(t\varepsilon_i)}_{\widehat{U}^{(2)}_{n421}(t)}\\
    &-\underbrace{\frac{1}{n}\sum_{i=1}^{n}g_0(X_i)\frac{1}{2r_nn}\sum_{i=1}^{n}\{tr_nS(X_i)\}^2\cos(t\widetilde\varepsilon_i)}_{\widehat{U}^{(2)}_{n422}(t)},
\end{align*}
where
\begin{equation*}
    \int_{\mathbb{R}}|\widehat{U}^{(2)}_{n422}(t)|^2\phi(t)dt= o_p(1).
\end{equation*}
Furthermore, by the law of large numbers and Assumption \ref{as:model}, we have
\begin{align*}
    &\max\bigg\{\int_{\mathbb{R}}|\widehat{U}^{(2)}_{n421}(t)|^2\phi(t)dt,\int_{\mathbb{R}}|\widehat{U}^{(2)}_{n521}(t)|^2\phi(t)dt\bigg\}\\
    \leq& \bigg|\frac{1}{n}\sum_{i=1}^{n}g_0(X_i)\bigg|^2\times\bigg|\frac{1}{n}\sum_{i=1}^{n}|S_0(X_i)|\bigg|^2\times\int_{\mathbb{R}}t^2\varphi(t)dt=o_p(1),
\end{align*}
where we use the law of large numbers and the Assumptions \ref{as:model} and \ref{as:weight}

It remains to consider the term,
\begin{align*}
    \widehat{U}^{(2)}_{n221}=\frac{t}{n}\sum_{i=1}^{n}g_0(X_i)S(X_i)\{\cos(t\varepsilon_i)-\sin(t\varepsilon_i)\}.
\end{align*}
We define 
$$K^{(2)}(t):=\mathbb{E}[g_0(X_i)S(X_i)\{\cos(t\varepsilon_i)-\sin(t\varepsilon_i)\}].$$
For completeness, the required law of large numbers is
\begin{align*}
 &\mathbb E\big\|\widehat U_{n221}^{(2)}-tK^{(2)}\big\|_\varphi^2=\frac1n\int_{\mathbb R}t^2\operatorname{Var}\!\left[
 g_0(X)S(X)\{\cos(t\varepsilon)-\sin(t\varepsilon)\}\right]
 \varphi(t)\,dt\\
 &\le \frac{2\,\mathbb E\{g_0(X)^2S(X)^2\}}{n}
       \int_{\mathbb R}t^2\varphi(t)\,dt=o(1).
\end{align*}
Together with the preceding remainder bounds, this yields
$$
 \bigg\|\frac{\widehat U_n}{r_n\sqrt n}-tK^{(2)}\bigg\|_\varphi=o_p(1).
$$
Thus, by the similar arguments of Case 1, we have
\begin{equation*}
    \frac{1}{r_n^2n}\int_{\mathbb{R}}|\widehat{U}_n(t)|^2\varphi(t)dt\stackrel{p}{\to}\int_{\mathbb{R}}|\widehat{U}_{n221}^{(2)}(t)|^2\varphi(t)dt\stackrel{p}{\to}\int_{\mathbb{R}}|K^{(2)}(t)|^2t^2\varphi(t)dt=:C_2>0,
\end{equation*}
which completes the proof under the local alternative.

\textbf{Case 3}: We then consider the local alternative, corresponding to $\alpha=1/2$ and hence $r_n=n^{-1/2}$. Similar to the arguments for $\widehat{U}_n$ in the proof of Step 1, $\widehat{U}_n$ can be decomposed into five parts:
\begin{align*}
\widehat{U}_n(t)
	=& \underbrace{\frac{1}{\sqrt{n}}\sum_{i=1}^{n}g_0(X_i)\{\cos(t\varepsilon_i)+\sin(t\varepsilon_i)\}}_{\widehat{U}^{(3)}_{n1}(t) }\\
    &+ \underbrace{\frac{1}{\sqrt{n}}\sum_{i=1}^{n}g_0(X_i)\{\cos(t\widehat{\varepsilon}_i)-\cos(t\varepsilon_i)+\sin(t\widehat{\varepsilon}_i)-\sin(t\varepsilon_i)\}}_{\widehat{U}^{(3)}_{n2}(t) } \\
	& - \underbrace{\frac{1}{\sqrt{n}}\sum_{i=1}^{n} g_0(X_i) \frac{1}{n} \sum_{i=1}^{n} \{\cos(t \varepsilon_i)+\sin(t \varepsilon_i)\}}_{\widehat{U}^{(3)}_{n3}(t)}\\
    &-\underbrace{\frac{1}{\sqrt{n}}\sum_{i=1}^{n} g_0(X_i) \frac{1}{n} \sum_{i=1}^{n} \{\cos(t\widehat{\varepsilon}_i)-\cos(t \varepsilon_i)\}}_{\widehat{U}^{(3)}_{n4}(t)}\\
	&- \underbrace{\frac{1}{\sqrt{n}}\sum_{i=1}^{n} g_0(X_i) \frac{1}{n}\sum_{i=1}^{n}\{
	\sin(t\widehat{\varepsilon}_i)- \sin(t \varepsilon_i)\}}_{\widehat{U}^{(3)}_{n5}(t)}\\
    &+\underbrace{\frac{1}{\sqrt{n}}\sum_{i=1}^{n}g_0(X_i)\{\cos(t\widehat{e}_i)-\cos(t\widehat{\varepsilon}_i)+\sin(t\widehat{e}_i)-\sin(t\widehat{\varepsilon}_i)\}}_{\widehat{U}^{(3)}_{n6}(t) }\\
    &-\underbrace{\frac{1}{\sqrt{n}}\sum_{i=1}^{n} g_0(X_i) \frac{1}{n} \sum_{i=1}^{n} \{\cos(t\widehat{e}_i)-\cos(t \widehat{\varepsilon}_i)\}}_{\widehat{U}^{(3)}_{n7}(t)}\\
	&- \underbrace{\frac{1}{\sqrt{n}}\sum_{i=1}^{n} g_0(X_i) \frac{1}{n}\sum_{i=1}^{n}\{
	\sin(t\widehat{e}_i)- \sin(t \widehat{\varepsilon}_i)\}}_{\widehat{U}^{(3)}_{n8}(t)},
\end{align*}
where we introduce the notation
$$\widehat{\varepsilon}_i:=\varepsilon_i+m(X_i,\widetilde{\beta}_0)-m(X_i,\widehat{\beta}_n).$$
It can be seen that $\widehat{\varepsilon}_i$ serves as a pseudo-estimated residual. By the proof of Theorem \ref{theorem:1}, we have
\begin{equation*}
    \widehat{U}^{(3)}_{n1}(t)+\widehat{U}^{(3)}_{n2}(t)+\widehat{U}^{(3)}_{n3}(t)+\widehat{U}^{(3)}_{n4}(t)+\widehat{U}^{(3)}_{n5}(t)=U_n(t)+R_n(t),
\end{equation*}
where the remainder $R_n(t)$ satisfies $\int_{\mathbb{R}}|R_n(t)|^2\varphi(t)dt=o_p(1)$. 

It remains to consider the terms $\widehat{U}^{(3)}_{n6}(t),\widehat{U}^{(3)}_{n7}(t),\widehat{U}^{(3)}_{n8}(t)$. For the second term $\widehat{U}^{(3)}_{n6}(t)$, we have
\begin{align*}
	\widehat{U}^{(3)}_{n6}(t)=& \frac{1}{\sqrt{n}}\sum_{i=1}^{n}g_0(X_i)\{\cos(t\widehat{e}_i)-\cos(t\widehat{\varepsilon}_i)+\sin(t\widehat{e}_i)-\sin(t\widehat{\varepsilon}_i)\} \\
    =&\underbrace{\frac{1}{\sqrt n}\sum_{i=1}^{n}g_0(X_i)\{-\sin(t\varepsilon_i)t(\widehat{e}_i-\widehat{\varepsilon}_i)+\cos(t\varepsilon_i)t(\widehat{e}_i-\widehat{\varepsilon}_i)\}}_{\widehat{U}^{(3)}_{n61}(t)}\\
    &-\underbrace{\frac{1}{2\sqrt{n}}\sum_{i=1}^{n}g_0(X_i)\{\cos(t\widetilde{e}_i)(\widehat{e}_i-\varepsilon_i)^2-\cos(t\widetilde{\varepsilon}_i)(\widehat{\varepsilon}_i-\varepsilon_i)^2\}}_{\widehat{U}^{(3)}_{n62}(t)}\\
    &-\underbrace{\frac{1}{2\sqrt{n}}\sum_{i=1}^{n}g_0(X_i)\{\sin(t\widetilde{e}_i)(\widehat{e}_i-\varepsilon_i)^2-\sin(t\widetilde{\varepsilon}_i)(\widehat{\varepsilon}_i-\varepsilon_i)^2\}}_{\widehat{U}^{(3)}_{n63}(t)},
\end{align*}
where we use the second order Taylor expansion, $\widetilde{e}_i$ is in the line segment between $\widehat{e}_i$ and $\varepsilon_i$, and $\widetilde{\varepsilon}_i$ is in the line segment between $\widehat{\varepsilon}_i$ and $\varepsilon_i$. For the term $\widehat{U}^{(3)}_{n61}(t)$, by the fact
\begin{equation*}
    \widehat{e}_i-\widehat{\varepsilon}_i=r_n S(X_i),
\end{equation*}
we have
\begin{align*}
	\widehat{U}^{(3)}_{n61} (t)=&\frac{1}{n}\sum_{i=1}^{n}g_0(X_i)S(X_i)\{-\sin(t\varepsilon) +\cos(t\varepsilon)\}t\\
    &\stackrel{p}{\to}\mathbb{E}[g_0(X)S(X)\{-\sin(t\varepsilon) +\cos(t\varepsilon)]t,
\end{align*}
where we use the Assumption \ref{as:model} and the law of large numbers. For the term $\widehat{U}^{(3)}_{n62}(t)$, we have
\begin{align*}
    &\frac{1}{2\sqrt{n}}\sum_{i=1}^{n}g_0(X_i)\cos(t\widetilde{e}_i)(\widehat{e}_i-\varepsilon_i)^2\\
    =&\frac{1}{2\sqrt{n}}\sum_{i=1}^{n}g_0(X_i)\cos(t\widetilde{e}_i)\{r_nS(X_i)+m(X_i,\widetilde{\beta}_0)-m(X_i,\widehat{\beta}_n)\}^2\\
    =&\underbrace{\frac{1}{2\sqrt{n}}\sum_{i=1}^{n}g_0(X_i)\cos(t\widetilde{e}_i)r_n^2S(X_i)^2}_{\widehat{U}^{(3)}_{n621}(t)}\\
    &+\underbrace{\frac{1}{\sqrt{n}}\sum_{i=1}^{n}g_0(X_i)\cos(t\widetilde{e}_i)r_nS(X_i)\{m(X_i,\widetilde{\beta}_0)-m(X_i,\widehat{\beta}_n)\}}_{\widehat{U}^{(3)}_{n622}(t)}\\
    &+\underbrace{\frac{1}{2\sqrt{n}}\sum_{i=1}^{n}g_0(X_i)\cos(t\widetilde{e}_i)\{m(X_i,\widetilde{\beta}_0)-m(X_i,\widehat{\beta}_n)\}^2}_{\widehat{U}^{(3)}_{n623}(t)}.
\end{align*}
For the term $\widehat{U}^{(3)}_{n621}(t)$, we have
\begin{equation*}
    |\widehat{U}^{(3)}_{n621}(t)|\leq\frac{1}{\sqrt{n}}\frac{1}{n}\sum_{i=1}^{n}|g_0(X_i)S(X_i)|=o_p(1),
\end{equation*}
where we use the Assumption \ref{as:model} and the law of large numbers. For the term $\widehat{U}^{(3)}_{n623}(t)$, by the similar proof of $\widehat{U}_{22}(t)$ in Theorem \ref{theorem:1}, we have $\int_{\mathbb{R}}|\widehat{U}^{(3)}_{n623}(t)|^2\varphi(t)dt=o_p(1)$. For the term, $\widehat{U}^{(3)}_{n622}(t)$, we have
\begin{align*}
    \widehat{U}^{(3)}_{n622}(t)=&\underbrace{\frac{1}{n}\sum_{i=1}^{n}g_0(X_i)\cos(t\varepsilon_i)S(X_i)\{m(X_i,\widetilde{\beta}_0)-m(X_i,\widehat{\beta}_n)\}}{\widehat{U}^{(3)}_{n6221}(t)}\\
    &+\underbrace{\frac{1}{n}\sum_{i=1}^{n}g_0(X_i)\{\cos(t\widetilde{e}_i)-\cos(t\varepsilon_i)\}S(X_i)\{m(X_i,\widetilde{\beta}_0)-m(X_i,\widehat{\beta}_n)\}}_{\widehat{U}^{(3)}_{n6222}(t)}.
\end{align*}
By the similar proof of $\widehat{U}_{n21}(t)$ of Theorem \ref{theorem:1}, we have $\int_{\mathbb{R}}|\widehat{U}^{(3)}_{n6221}(t)|^2\varphi(t)dt=o_p(1)$. For the term $\widehat{U}^{(3)}_{n6222}(t)$, we have
\begin{align*}
    |\widehat{U}^{(3)}_{n6222}(t)|\leq&\frac{1}{n}\sum_{i=1}^{n}|g_0(X_i)S(X_i)|\cdot|m(X_i,\widetilde{\beta}_0)-m(X_i,\widehat{\beta}_n)|\\
    &\times|r_nS(X_i)+m(X_i,\widetilde{\beta}_0)-m(X_i,\widehat{\beta}_n)|t\\
    \leq&\frac{1}{n^{1.5}}\sum_{i=1}^{n}|g_0(X_i)S(X_i)^2|\cdot|m(X_i,\widetilde{\beta}_0)-m(X_i,\widehat{\beta}_n)|\\
    &+\frac{1}{n}\sum_{i=1}^{n}|g_0(X_i)S(X_i)|\cdot|m(X_i,\widetilde{\beta}_0)-m(X_i,\widehat{\beta}_n)|^2\\
    \leq&\frac{1}{n^{1.5}}\sum_{i=1}^{n}|g_0(X_i)S(X_i)^2|\cdot 2|F(X_i)|\\
    &+\frac{1}{n}\sum_{i=1}^{n}|g_0(X_i)S(X_i)|\cdot pF(X_i)^2\|\widetilde{\beta}_0-\widehat{\beta}_n\|_2^2=o_p(1)\cdot t,
\end{align*}
where we use the triangle inequality in the second inequality, we use Assumption \ref{as:model} in the third inequality, and we use the Assumption \ref{as:model} and Lemma \ref{lem:beta} in the last equality. Thus, by using similar arguments, we have
\begin{equation*}
    \int_{\mathbb{R}}|\widehat{U}^{(3)}_{n62} (t)+\widehat{U}^{(3)}_{n63} (t)|^2\varphi(t)dt=o_p(1).
\end{equation*}
For clarity, both quadratic terms in the preceding second-order Taylor expansion carry a factor $t^2$. A direct expansion records these factors and upgrades the pointwise limit to $L^2(\varphi)$. Since we have
$$\widehat e_i=\widehat\varepsilon_i+n^{-1/2}S(X_i),$$
and
$$
 \widehat U_{n6}^{(3)}(t)=\frac{t}{n}\sum_{i=1}^n g_0(X_i)S(X_i)\{\cos(t\widehat\varepsilon_i)-\sin(t\widehat\varepsilon_i)\} +\rho_{n6}(t),
$$
where Taylor's theorem gives $\|\rho_{n6}\|_\varphi=o_p(1)$. Writing \(d_{i,n}=\widehat\varepsilon_i-\varepsilon_i\), the Lipschitz bounds for sine and cosine, the envelope condition, and Lemma~\ref{lem:beta} yield
$$
 \bigg\|\frac{t}{n}\sum_{i=1}^n g_0(X_i)S(X_i)\bigg[\{\cos(t\widehat\varepsilon_i)\sin(t\widehat\varepsilon_i)\}-\{\cos(t\varepsilon_i)-\sin(t\varepsilon_i)\}\bigg]\bigg\|_\varphi=o_p(1).
$$
The preceding $L^2(\varphi)$ law of large numbers therefore implies
$$
 \|\widehat U_{n6}^{(3)}-tK^{(2)}\|_\varphi=o_p(1).
$$

It remains to consider the terms $\widehat{U}^{(3)}_{n7}(t)$ and $\widehat{U}^{(3)}_{n8}(t)$. For the term $\widehat{U}^{(3)}_{n7}(t)$, we have
\begin{align*}
\widehat{U}^{(3)}_{n7}(t)&=\underbrace{\frac{-1}{\sqrt{n}}\sum_{i=1}^{n} g_0(X_i) \frac{1}{n} \sum_{i=1}^{n}\sin(t\varepsilon_i)(t\widehat{e}_i-t \widehat{\varepsilon}_i)}_{\widehat{U}^{(3)}_{n71}(t)} \\
&-\underbrace{\frac{1}{2\sqrt{n}}\sum_{i=1}^{n} g_0(X_i) \frac{1}{n} \sum_{i=1}^{n}\{\cos(t\widetilde{e}_i)(t\widehat{e}_i-t\varepsilon_i)^2-\cos(t \widetilde{\varepsilon}_i)(t\widehat{\varepsilon}_i-t\varepsilon_i)^2\}}_{\widehat{U}^{(3)}_{n72}(t)}
\end{align*}
For the term $\widehat{U}^{(3)}_{n71}(t)$, we have
\begin{equation*}
    \widehat{U}^{(3)}_{n71}(t)=-\frac{1}{\sqrt{n}}\bigg\{\frac{1}{\sqrt{n}}\sum_{i=1}^{n} g_0(X_i)\bigg\} \times\bigg\{\frac{1}{n} \sum_{i=1}^{n}\sin(t\varepsilon_i)S(X_i)\bigg\}t=o_p(1)\times t,
\end{equation*}
where we use the fact that $\sum_{i=1}^{n} g_0(X_i)/\sqrt{n}=O_p(1)$ and the law of large number. For the term $\widehat{U}^{(3)}_{n72}(t)$, it remains to consider the first part,
\begin{align*}
    \frac{1}{2\sqrt{n}}\sum_{i=1}^{n} g_0(X_i) \frac{1}{n} \sum_{i=1}^{n}\cos(t\widetilde{e}_i)(t\widehat{e}_i-t\varepsilon_i)^2.
\end{align*}
By the proof of the term $\widehat{U}^{(3)}_{n62}(t)$, we have
\begin{equation*}
    \frac{1}{\sqrt{n}}\sum_{i=1}^{n}\cos(t\widetilde{e}_i)(t\widehat{e}_i-t\varepsilon_i)^2=o_p(1)\times t^2,
\end{equation*}
and thus we have
\begin{equation*}
    \int_{\mathbb{R}}|\widehat{U}^{(3)}_{n71} (t)+\widehat{U}^{(3)}_{n72} (t)|^2\varphi(t)dt=o_p(1).
\end{equation*}
It remains to consider the term $\widehat U_{n8}^{(3)}$. By using similar arguments to $\widehat U_{n7}^{(3)}$, we have 
$$\int_{\mathbb{R}}|\widehat U_{n8}^{(3)}|^2\varphi(t)dt=o_p(1).$$  
Thus, we have
$$
 \|\widehat U_{n7}^{(3)}+\widehat U_{n8}^{(3)}\|_\varphi=o_p(1).
$$
Combining all previous bounds and the result for $\widehat U_{n6}^{(3)}$ gives
$$
 \|\widehat U_n-U_n-tK^{(2)}\|_\varphi=o_p(1).
$$
Since $U_n\stackrel{d}{\to} U_\infty$ in $L^2(\varphi)$, Slutsky's theorem and the continuous mapping theorem yield
$$
 \int_{\mathbb R}|\widehat U_n(t)|^2\varphi(t)\,dt\stackrel{d}{\to}\int_{\mathbb R}|U_\infty(t)+tK^{(2)}(t)|^2\varphi(t)\,dt.
$$
Hence, we complete the proof of Theorem \ref{thm:alt}.
\end{proof}

We then present the proof of Theoretical Results for the Bootstrap.

\begin{proof}[Proof of Theorem \ref{thm:bootstrap}]

Let $\mathbb{P}^*$ denote the probability measure induced by the smoothed residual bootstrap conditional on the original sample $\{(X_i, Y_i)\}_{i=1}^{n}$ and let $\mathbb{E}^*$ denote the expectation under $\mathbb{P}^*$. Recall that it remains to consider the random variable
$$ \widehat{U}^*_n(t)
=\frac{1}{\sqrt{n}}\sum_{i=1}^{n}(g(X_i)-\bar{g})\{\cos(t\widehat{e}^*_i)+\sin(t\widehat{e}^*_i)\} ,$$ where $\bar{g}=n^{-1}\sum_{i=1}^{n}g(X_i)$, $ t \in \mathbb{R}$ and
$\widehat{e}^*_i=Y^*_i-m(X_i,\widehat{\beta}^*_n)=m(X_i,\widehat{\beta}_n)-m(X_i,\widehat{\beta}^*_n)+\varepsilon^*_i.$ Let
$$
 \mathcal F_n=\sigma\{(X_i,Y_i):1\le i\le n\},\qquad
 \|f\|_\varphi^2=\int_{\mathbb R}|f(t)|^2\varphi(t)\,dt,
$$
where $\mathcal F_n$ is the $\sigma$-field generated by the data. All bootstrap probability statements below are conditional on $\mathcal F_n$. Thus, the notion $o_{P^*}(1)$ denotes convergence to zero in $\mathbb P^*$-probability, in outer probability, and conditional weak convergence is understood in bounded-Lipschitz distance in probability. All limiting integrals below are with respect to $\varphi(t)\,dt$. To avoid ambiguity about the factor $t$, we use the notation
$$
 W_0(t,\beta)=\mathbb E[g_0(X)\{\sin(te)-\cos(te)\}\dot m(X,\beta)],\qquad B(t,\beta)=tW_0(t,\beta).
$$
Whenever the covariance is written with explicit factors $s$ and $t$, the symbols $W(s)$ and $W(t)$ refer to $W_0(s,\beta)$ and $W_0(t,\beta)$.

The centering terms must be handled jointly. We introduce the following notations:
$$
 h_t(u)=\cos(tu)+\sin(tu),\qquad
 a_{n,i}=g(X_i)-\bar g,\qquad
 \mu_n^*(t)=\mathbb E^*\{h_t(\varepsilon_1^*)\}.
$$
Since $\sum_i a_{n,i}=0$, the cancellation is exact. Similar to the arguments in Theorem \ref{theorem:1}, we decompose $\widehat{U}^*_n(t)$ as follows,
\begin{align*}
	\widehat{U}^*_n(t)=&\frac{1}{\sqrt{n}}\sum_{i=1}^{n}a_{n,i}\{\cos(t\widehat{e}_i^*)+\sin(t\widehat{e}_i^*)\}\\
    =&\underbrace{\frac{1}{\sqrt{n}}\sum_{i=1}^{n}a_{n,i}\{\cos(t\varepsilon_i^*)+\sin(t\varepsilon_i^*)\}}_{\widehat{U}_{n1}^*(t)}\\
    &+\underbrace{\frac{1}{\sqrt{n}}\sum_{i=1}^{n}a_{n,i}\{\cos(t\widehat{e}_i^*)-\cos(t\varepsilon_i^*)+\sin(t\widehat{e}_i^*)-\sin(t\varepsilon_i^*)\}}_{\widehat{U}_{n2}^*(t)}
\end{align*}
Let
$$
 \dot m_{n,i}=\dot m(X_i,\widehat\beta_n),\qquad
 \widehat A_n=\frac1n\sum_{i=1}^n
 \dot m_{n,i}\dot m_{n,i}^{\top}.
$$
A conditional Taylor expansion of the bootstrap least-squares equation gives
$$
 \sqrt n(\widehat\beta_n^*-\widehat\beta_n)=\widehat A_n^{-1}\frac1{\sqrt n}\sum_{i=1}^n\dot m_{n,i}\varepsilon_i^*+r_{\beta,n}^*,\qquad \|r_{\beta,n}^*\|_2=o_{P^*}(1).
$$
For the term $\widehat{U}_{n2}^*(t)$, with
$$
 \widehat B_n^*(t)=t\bigg\{\frac1n\sum_{i=1}^na_{n,i}\dot m_{n,i}\bigg\}\mathbb E^*\{\sin(t\varepsilon_1^*)-\cos(t\varepsilon_1^*)\},
$$
by the proof of Theorem \ref{theorem:1}, we have
\begin{align*}
\widehat{U}_{n2}^*(t) =&\frac{1}{\sqrt{n}}
\sum_{i=1}^n\widehat B_n^*(t)^\top\widehat A_n^{-1}
  \dot m_{n,i}\varepsilon_i^*+R_{n2}^*(t),
\end{align*}
where $\|R_{n2}^*\|_\varphi=o_{P^*}(1)$. Thus, we have
$$
 \widehat U_n^*(t)=\frac1{\sqrt n}\sum_{i=1}^n\psi_{n,i}^*(t)+R_{n,0}^*(t),\qquad \|R_{n,0}^*\|_\varphi=o_{P^*}(1),
$$
where
$$
 \psi_{n,i}^*(t)
 =a_{n,i}\{h_t(\varepsilon_i^*)-\mu_n^*(t)\}
 +\widehat B_n^*(t)^\top\widehat A_n^{-1}
  \dot m_{n,i}\varepsilon_i^* .
$$
The empirical matrices and coefficients are used because the covariates are fixed under the conditional bootstrap law.

We then move from the empirical matrices to the population matrices. Define
$$
\widetilde W_n^*(t)=t\bigg\{\frac1n\sum_{i=1}^ng_0(X_i)\dot m(X_i,\widehat\beta_n)\bigg\}\mathbb E^*\{\sin(t\varepsilon_1^*)-\cos(t\varepsilon_1^*)\}.
$$
Recall that
$$
a_{n,i}=g(X_i)-\bar g,\qquad
g_0(X_i)=g(X_i)-\mathbb E\{g(X)\}.
$$
Since $a_{n,i}-g_0(X_i)=\mathbb E\{g(X)\}-\bar g$, we have
\begin{align*}
D_{n1}^*(t)&:=\frac1{\sqrt n}\sum_{i=1}^n\{a_{n,i}-g_0(X_i)\}\{h_t(\varepsilon_i^*)-\mu_n^*(t)\}\\
&=\{\mathbb E[g(X)]-\bar g\}\times\frac1{\sqrt n}\sum_{i=1}^n\{h_t(\varepsilon_i^*)-\mu_n^*(t)\}.
\end{align*}
Because $|\bar g-\mathbb E\{g(X)\}|=O_p(n^{-1/2})$ and
$$
\mathbb E^*\bigg\|
\frac1{\sqrt n}\sum_{i=1}^n
\{h_t(\varepsilon_i^*)-\mu_n^*(\cdot)\}
\bigg\|_\varphi^2
\leq C\int_{\mathbb R}\varphi(t)\,dt=O(1),
$$
it follows that $\|D_{n1}^*\|_\varphi=o_{P^*}(1)$. Similarly, we have
$$
\widehat B_n^*(t)-\widetilde W_n^*(t)=t\{\mathbb E[g(X)]-\bar g\}\bigg\{\frac1n\sum_{i=1}^n\dot m(X_i,\widehat\beta_n)\bigg\}\mathbb E^*\{\sin(t\varepsilon_1^*)-\cos(t\varepsilon_1^*)\}.
$$
The moment assumptions imply
$$
\int_{\mathbb R}\|\widehat B_n^*(t)-\widetilde W_n^*(t)\|_2^2\varphi(t)\,dt=O_p(p/n).
$$
Moreover,
$$
\bigg\|\frac1{\sqrt n}\sum_{i=1}^n\dot m(X_i,\widehat\beta_n)\varepsilon_i^*\bigg\|_2
=O_{P^*}(\sqrt p).
$$
Hence, since $p^2/n\to0$,
$$
\bigg\|\{\widehat B_n^*-\widetilde W_n^*\}^{\top}\widehat A_n^{-1}\frac1{\sqrt n}\sum_{i=1}^n\dot m(X_i,\widehat\beta_n)\varepsilon_i^*\bigg\|_\varphi=o_{P^*}(1).
$$
Finally, let
$$
d_0(X)=\dot m(X,\widetilde\beta_0),\qquad
A_0=\mathbb E\{d_0(X)d_0(X)^\top\}.
$$
The empirical-matrix convergence and the inverse perturbation identity give
$$
\|\widehat A_n^{-1}-A_0^{-1}\|_{\mathrm{op}}=O_p(p/\sqrt n).
$$
Together with the preceding bound, Assumption~\ref{as:eigen_F}, and the condition $p^3/n\to0$, this yields
$$
\bigg\|\widetilde W_n^*(\cdot)^\top(\widehat A_n^{-1}-A_0^{-1})\frac1{\sqrt n}\sum_{i=1}^n\dot m(X_i,\widehat\beta_n)\varepsilon_i^*\bigg\|_\varphi=o_{P^*}(1).
$$
Absorbing these three negligible terms into the remainder gives
\begin{eqnarray*}
	\widehat{U}_{n}^*(t)
	&=&  \frac{1}{\sqrt{n}}\sum_{i=1}^{n}\big(g_0(X_i)[\cos(t\varepsilon_i^*)+\sin(t\varepsilon_i^*) -  \mathbb{E}^*\{\cos(t \varepsilon_i^*)+\sin(t \varepsilon_i^*)\} ]\\ 
    &&+ \widetilde{W}^*(t)^{\top}\varepsilon_i^*A_0^{-1}\dot{m}(X_i,\widehat{\beta}_n) \big)+R^*_n(t)  \\
	&=:& \widetilde{U}_n^*(t)+R_n^*(t),
\end{eqnarray*}
where the remainder $R_n^*(t)$ satisfies $\int_{\mathbb{R}}	|R_n^*(t)|^2 \varphi(t)dt =o_{P^*}(1).$

By Assumption \ref{as:kernel}, similar to the arguments for proving Lemma S3 in \cite{tan2025weighted} (or Lemma 2 in \cite{neumeyer2009smooth}), we have
\begin{equation*}
    \sup_{t\in\mathbb{R}}|f_e(t)-f_{\varepsilon^*}(t)|=o_p(1),
\end{equation*}
where $f_e(\cdot),f_{\varepsilon^*}(\cdot)$ are the probability density functions of the population-regression residual and smoothed bootstrap residual, respectively. Thus, for any integrable function $g(\cdot)$, we have
\begin{equation*}
    \mathbb{E}^*\{g(\varepsilon^*)\}=\mathbb{E}\{g(e)\}+o_p(1).
\end{equation*}
Then, define $\eta_n^*(t)=\mathbb E^*\{\sin(t\varepsilon_1^*)-\cos(t\varepsilon_1^*)\}$ and $\eta_e(t)=\mathbb E\{\sin(te)-\cos(te)\}$. Then $|\eta_n^*(t)-\eta_e(t)|\leq 2|t|W_2(F_n^*,F_e)$, where $F_n^*=\mathcal L^*(\varepsilon_1^*)$ denote the conditional law of a bootstrap residual, and let $F_e$ denote the distribution of the population residual $e$, $W_2(\cdot,\cdot)$ denotes the Wasserstein distance of order two. Hence, by Assumption~\ref{as:weight},
$$
\int_{\mathbb R}t^2|\eta_n^*(t)-\eta_e(t)|^2\varphi(t)\,dt\leq 4W_2^2(F_n^*,F_e)\int_{\mathbb R}t^4\varphi(t)\,dt=o_p(1).
$$
Define the product-law coefficient
$$
B_e(t,\beta)=t\,\mathbb E\{g_0(X)\dot m(X,\beta)\}
\,\eta_e(t).
$$
The empirical laws of large numbers and the preceding residual-law convergence imply
$$
\bigg\|
\bigg\{\widetilde W_n^*B_e(\cdot,\widehat\beta_n)\bigg\}^{\top}A_0^{-1}\frac1{\sqrt n}\sum_{i=1}^n\dot m(X_i,\widehat\beta_n)\varepsilon_i^*\bigg\|_\varphi=o_{P^*}(1)
$$
in outer probability. Therefore,
\begin{align*}
\widetilde U_n^*(t)=&\frac1{\sqrt n}\sum_{i=1}^n\bigg[g_0(X_i)\{\cos(t\varepsilon_i^*)+\sin(t\varepsilon_i^*)-\mu_e(t)\}\\
&+B_e(t,\widehat\beta_n)^\top A_0^{-1}\dot m(X_i,\widehat\beta_n)\varepsilon_i^*\bigg]+r_n^*(t),
\end{align*}
where $\|r_n^*\|_\varphi=o_{P^*}(1)$. Then, we consider the term 
\begin{align*}
    &\frac{1}{\sqrt{n}}\sum_{i=1}^{n}B_e(t,\widehat{\beta}_n)^{\top}\varepsilon_i^*A_0^{-1}\dot{m}(X_i,\widehat{\beta}_n)-\frac{1}{\sqrt{n}}\sum_{i=1}^{n}B_e(t,\widetilde{\beta}_0)^{\top}\varepsilon_i^*A_0^{-1}\dot{m}(X_i,\widetilde{\beta}_0)\\
    =&\underbrace{\frac{1}{\sqrt{n}}\sum_{i=1}^{n}\{B_e(t,\widehat{\beta}_n)-B_e(t,\widetilde{\beta}_0)\}^{\top}A_0^{-1}\dot{m}(X_i,\widehat{\beta}_n)\varepsilon_i^*}_{\widetilde{U}_{n1}^*(t)}\\
    &+\underbrace{\frac{1}{\sqrt{n}}\sum_{i=1}^{n}B_e(t,\widetilde{\beta}_0)^{\top}A_0^{-1}\{\dot{m}(X_i,\widehat{\beta}_n)-\dot{m}(X_i,\widetilde{\beta}_0)\}\varepsilon_i^*}_{\widetilde{U}_{n2}^*(t)}.
\end{align*}
By the fact that the expected residual $\varepsilon^*$ is $0$, it remains to consider the variance. Thus, we have
\begin{align*}
    &\text{Var}\{\widetilde{U}_{n1}^*(t)+\widetilde{U}_{n2}^*(t)\}\\
    \leq&C\|B_e(t,\widehat{\beta}_n)-B_e(t,\widetilde{\beta}_0)\|_2^2\times\frac{1}{n}\sum_{i=1}^{n}\|\dot{m}(X_i,\widehat{\beta}_n)\|_2^2\cdot\text{Var}(\varepsilon_i^*)\\
    &+C\|B_e(t,\widetilde{\beta}_0)\|_2^2\times\frac{1}{n}\sum_{i=1}^{n}\|\dot{m}(X_i,\widehat{\beta}_n)-\dot{m}(X_i,\widetilde{\beta}_0)\|_2^2\cdot\text{Var}(\varepsilon_i^*)=o_p(1),
\end{align*}
where we use the similar proof of $\widehat{U}_{n2}(t)$ in Theorem \ref{theorem:1} in the last inequality. Thus, it remains to consider the term,
\begin{align*}
    U_n^*(t)=&\frac{1}{\sqrt{n}}\sum_{i=1}^{n}\big(g_0(X_i)[\cos(t\varepsilon_i^*)+\sin(t\varepsilon_i^*) -  \mathbb{E}\{\cos(t e)+\sin(t e)\} ]\\ 
    &+ B_e(t,\widetilde{\beta}_0)^{\top}\varepsilon_i^*A_0^{-1}\dot{m}(X_i,\widetilde{\beta}_0) \big).
\end{align*}
Note that the preceding expansions and expressions do not depend on whether the null model is correctly specified. It therefore remains to analyze the null and alternative cases separately.

\textbf{Case 1:} We first consider the null hypothesis. Under $H_0$, we have $\widetilde\beta_0=\beta_0$ and $e=\varepsilon$, and $\varepsilon$ is independent of $X$. Independence of \(X\) and \(\varepsilon\) gives
$$
B(t)=tW(t)=t\mathbb E\{g_0(X)\dot m(X,\beta_0)\}\mathbb E\{\sin(t\varepsilon)-\cos(t\varepsilon)\}.
$$
Consequently, the process obtained in the preceding expansion becomes
\begin{align*}
U_n^*(t)=\frac1{\sqrt n}\sum_{i=1}^n\bigg[g_0(X_i)\{h_t(\varepsilon_i^*)-\mu(t)\}+B(t)^\top\Sigma^{-1}\dot m(X_i,\beta_0)\varepsilon_i^*\bigg].
\end{align*}
To obtain conditionally centered summands, let $\mu_n^*(t)=\mathbb E^*\{h_t(\varepsilon_1^*)\}$ and define
\begin{align*}
V_n^*(t)=\frac1{\sqrt n}\sum_{i=1}^n\xi_{n,i}^*(t),
\end{align*}
where
$$
\xi_{n,i}^*(t)=g_0(X_i)\{h_t(\varepsilon_i^*)-\mu_n^*(t)\}+B(t)^\top\Sigma^{-1}\dot m(X_i,\beta_0)\varepsilon_i^*.
$$
Conditional on $\mathcal F_n$, the functions $\{\xi_{n,i}^*\}_{i=1}^n$ are independent and satisfy $\mathbb E^*\{\xi_{n,i}^*(t)\}=0$. Note that we have $\|\mu_n^*-\mu\|_\varphi=o_p(1)$. Since
$$
\frac1{\sqrt n}\sum_{i=1}^n g_0(X_i)=O_p(1),
$$
we obtain
$$
\|U_n^*-V_n^*\|_\varphi\leq\bigg|\frac1{\sqrt n}\sum_{i=1}^n g_0(X_i)\bigg|\times\|\mu_n^*-\mu\|_\varphi=o_p(1).
$$
We next calculate the conditional covariance function of $V_n^*$. Define $c_n^*(s,t)=\mathbb E^*[\{h_s(\varepsilon_1^*)-\mu_n^*(s)\}\{h_t(\varepsilon_1^*)-\mu_n^*(t)\}]$ and $d_n^*(t)=\mathbb E^*[\varepsilon_1^*\{h_t(\varepsilon_1^*)-\mu_n^*(t)\}]$, and $v_n^*=\mathbb E^*(\varepsilon_1^{*2})$. To simplify the notation, we denote $\mathbb E_nq=\sum_{i=1}^n q(X_i)/n$ and $\dot m_i=\dot m(X_i,\beta_0)$. Then, the conditional covariance is
\begin{align*}
K_n^*(s,t):=&\operatorname{Cov}^*\{V_n^*(s),V_n^*(t)\}\\
=&c_n^*(s,t)\mathbb E_n\{g_0^2(X)\}\\
&+d_n^*(t)B(s)^\top\Sigma^{-1}\mathbb E_n\{g_0(X)\dot m(X,\beta_0)\}\\
&+d_n^*(s)B(t)^\top\Sigma^{-1}\mathbb E_n\{g_0(X)\dot m(X,\beta_0)\}\\
&+v_n^*B(s)^\top\Sigma^{-1}\mathbb P_n\{\dot m(X,\beta_0)\dot m(X,\beta_0)^\top\}\Sigma^{-1}B(t).
\end{align*}
Let $c(s,t)=\operatorname{Cov}\{h_s(\varepsilon),h_t(\varepsilon)\}$, $d(t)=\mathbb E[\varepsilon\{h_t(\varepsilon)-\mu(t)\}]$ and $v=\mathbb E(\varepsilon^2)$.
The $W_2$-convergence of the bootstrap residual law, together with uniform integrability of the bootstrap second moments, gives
$$
c_n^*(s,t)\stackrel{p}{\longrightarrow}c(s,t),\qquad
d_n^*(t)\stackrel{p}{\longrightarrow}d(t),\qquad
v_n^*\stackrel{p}{\longrightarrow}v.
$$
The empirical laws of large numbers also give
$$
\mathbb E_n\{g_0^2(X)\}\stackrel{p}{\longrightarrow}\mathbb E\{g_0^2(X)\},
$$
$$
\mathbb E_n\{g_0(X)\dot m(X,\beta_0)\}\stackrel{p}{\longrightarrow}\mathbb E\{g_0(X)\dot m(X,\beta_0)\},
$$
and
$$
\mathbb E_n\{\dot m(X,\beta_0)\dot m(X,\beta_0)^\top\}\stackrel{p}{\longrightarrow}\Sigma.
$$
Therefore, for every $s,t\in\Pi$, we have $K_n^*(s,t)-K_n(s,t)\stackrel{p}{\longrightarrow}0$, where
\begin{align*}
K_n(s,t)=&c(s,t)\mathbb E\{g_0^2(X)\}\\
&+d(t)B(s)^\top\Sigma^{-1}\mathbb E\{g_0(X)\dot m(X,\beta_0)\}\\
&+d(s)B(t)^\top\Sigma^{-1}\mathbb E\{g_0(X)\dot m(X,\beta_0)\}\\
&+vB(s)^\top\Sigma^{-1}B(t).
\end{align*}
This is the covariance function in Theorem~\ref{theorem:1}, written using $B(t)=tW(t)$. By assumption, we have $K_n(s,t)\longrightarrow K(s,t).$

It remains to verify conditional weak convergence of the process. Using the similar arguments as the proof of Theorem \ref{theorem:1}, we have
$$
\mathrm{WICM}_n^*\stackrel{d}{\to}\int_{\mathbb R}|\widehat U_n^*(t)|^2\varphi(t)\,dt\stackrel{d}{\to}\int_{\mathbb R}|U_\infty^*(t)|^2\varphi(t)\,dt.
$$

\textbf{Case 2:} We next consider the local alternatives. Under $H_{1n}$, we have $e_n=\varepsilon+r_nS(X)$, where $r_n=o(1)$. Let $F_{e,n}$ denote the marginal distribution of $e_n$, and let $E_n^\dagger\sim F_{e,n}$ be independent of $X$. Define $\mu_{e,n}(t)=\mathbb E\{h_t(E_n^\dagger)\}$ and $\mu_0(t)=\mathbb E\{h_t(\varepsilon)\}$,
and
$$
B_{e,n}(t)=t\,\mathbb E\{g_0(X)\dot m(X,\widetilde\beta_0)\}\mathbb E\{\sin(tE_n^\dagger)-\cos(tE_n^\dagger)\}.
$$
The corresponding null coefficient is
$$
B_0(t)=t\,\mathbb E\{g_0(X)\dot m(X,\widetilde\beta_0)\}\mathbb E\{\sin(t\varepsilon)-\cos(t\varepsilon)\}.
$$
The residual-bootstrap consistency established above gives $W_2(F_n^*,F_{e,n})=o_p(1).$ Moreover, by coupling $E_n^\dagger$ and $\varepsilon$ through $E_n^\dagger=\varepsilon+r_nS(X)$, we have
$$
W_2(F_{e,n},F_\varepsilon)\leq r_n\{\mathbb E S(X)^2\}^{1/2}=o(1).
$$
Therefore, we have $W_2(F_n^*,F_\varepsilon)\leq W_2(F_n^*,F_{e,n})+W_2(F_{e,n},F_\varepsilon)=o_p(1)$. Thus, we have
$$
\int_{\mathbb R}\|B_{e,n}(t)-B_0(t)\|_2^2\varphi(t)\,dt=o(1).
$$
The general bootstrap expansion obtained above gives
\begin{align*}
\widehat U_n^*(t)
=&
\frac1{\sqrt n}\sum_{i=1}^n\bigg[g_0(X_i)\{h_t(\varepsilon_i^*)-\mu_{e,n}(t)\}\\
&+B_{e,n}(t)^\top \Sigma^{-1}\dot m(X_i,\widetilde\beta_0)\varepsilon_i^*\bigg]+R_{n,\mathrm{loc}}^*(t),
\end{align*}
where $\|R_{n,\mathrm{loc}}^*\|_\varphi=o_{P^*}(1)$. Define the null-equivalent bootstrap process
\begin{align*}
V_n^*(t)=\frac1{\sqrt n}\sum_{i=1}^n\bigg[g_0(X_i)\{h_t(\varepsilon_i^*)-\mu_0(t)\}+B_0(t)^\top \Sigma^{-1}\dot m(X_i,\widetilde\beta_0)\varepsilon_i^*\bigg].
\end{align*}
The difference between the two leading processes can be written as
$$
\widehat U_n^*(t)-V_n^*(t)=D_{n1}^*(t)+D_{n2}^*(t),
$$
where
$$
D_{n1}^*(t)=
\bigg\{\frac1{\sqrt n}\sum_{i=1}^n g_0(X_i)\bigg\}\times\{\mu_0(t)-\mu_{e,n}(t)\},
$$
and
$$
D_{n2}^*(t)=\{B_{e,n}(t)-B_0(t)\}^\top \Sigma^{-1}\frac1{\sqrt n}\sum_{i=1}^n\dot m(X_i,\widetilde\beta_0)\varepsilon_i^*.
$$
Since
$$
\frac1{\sqrt n}\sum_{i=1}^n g_0(X_i)=O_p(1),
$$
the preceding bound for the centering functions yields $\|D_{n1}^*\|_\varphi=o_p(1)$. Conditional on $\mathcal F_n$, we know that $D_{n2}^*(t)$ has mean zero. Furthermore,
\begin{align*}
\mathbb E^*\|D_{n2}^*\|_\varphi^2\leq&
C\operatorname{Var}^*(\varepsilon_1^*)\int_{\mathbb R}\|B_{e,n}(t)-B_0(t)\|_2^2\varphi(t)\,dt=o_p(1),
\end{align*}
where Assumption~\ref{as:eigen_F} and the empirical law of large numbers are used in the inequality. Conditional Markov's inequality therefore gives $\|D_{n2}^*\|_\varphi=o_{P^*}(1)$. Consequently, $\|\widehat U_n^*-V_n^*\|_\varphi=o_{P^*}(1)$.

Then, by using the identical arguments used in Case 1, we therefore obtain
$$
\mathrm{WICM}_n^*\stackrel{d}{\to}\int_{\mathbb R}|\widehat U_n^*(t)|^2\varphi(t)\,dt\stackrel{d}{\to}\int_{\mathbb R}|U^*_\infty(t)|^2\varphi(t)\,dt.
$$

\textbf{Case 3:} We finally consider the global alternative $H_1$. In this case, $e=S(X)+\varepsilon$. Let $F_e$ denote the marginal distribution of $e$. Since we have $\mathbb E\{S(X)\}=0$ and $\mathbb E(\varepsilon)=0$, thus the residual $e$ is centered. The consistency of the smoothed residual bootstrap established above gives $W_2(F_n^*,F_e)=o_p(1)$, where $F_n^*=\mathcal L^*(\varepsilon_1^*)$.

An important distinction from the null and local-alternative cases is that, conditionally on the data, the bootstrap residuals are generated independently of the fixed covariates. Therefore, under $H_1$, the limiting bootstrap law is the product law $P_X\otimes F_e$, rather than the original joint law of $(X,e)$. Let
$$
d_0(X)=\dot m(X,\widetilde\beta_0),\qquad
A_0=\mathbb E\{d_0(X)d_0(X)^\top\},
\qquad
M_0=\mathbb E\{g_0(X)d_0(X)\}.
$$
For the bootstrap least-squares expansion, note that $A_0$ is uniformly nonsingular; that is, for some constants $0<c<C<\infty$,
$$
c\leq\lambda_{\min}(A_0)\leq\lambda_{\max}(A_0)\leq C.
$$
Let $E^\dagger\sim F_e$ be independent of $X$, and define
$$
h_t(u)=\cos(tu)+\sin(tu),\qquad
\mu_e(t)=\mathbb E\{h_t(E^\dagger)\},
$$
$$
\ell_e(t)=\mathbb E\{\sin(tE^\dagger)-\cos(tE^\dagger)\},
\qquad
B_e(t)=tM_0\ell_e(t).
$$
Recall the preceding conditional representation
$$
\widehat U_n^*(t)=\frac1{\sqrt n}\sum_{i=1}^n\psi_{n,i}^*(t)+R_n^*(t),
\qquad
\|R_n^*\|_\varphi=o_{P^*}(1),
$$
where
$$
\psi_{n,i}^*(t)=a_{n,i}\{h_t(\varepsilon_i^*)-\mu_n^*(t)\}+\widehat B_n^*(t)^\top A_0^{-1}d_{n,i}\varepsilon_i^*,
$$
with
$$
a_{n,i}=g(X_i)-\bar g,\qquad
d_{n,i}=\dot m(X_i,\widehat\beta_n),\qquad
\mu_n^*(t)=\mathbb E^*\{h_t(\varepsilon_1^*)\},
$$
and
$$
\widehat B_n^*(t)=t\bigg\{\frac1n\sum_{i=1}^na_{n,i}d_{n,i}\bigg\}\mathbb E^*\{\sin(t\varepsilon_1^*)-\cos(t\varepsilon_1^*)\}.
$$
The consistency of $\widehat\beta_n$, the empirical laws of large numbers, and $W_2(F_n^*,F_e)=o_p(1)$ imply
$$
\frac1n\sum_{i=1}^na_{n,i}d_{n,i}-M_0=o_p(1),
$$
and thus
$$
\int_{\mathbb R}\|\widehat B_n^*(t)-B_e(t)\|_2^2\varphi(t)\,dt=o_p(1).
$$
The limiting summand under the product law is
$$
\psi_t^e=g_0(X)\{h_t(E^\dagger)-\mu_e(t)\}+B_e(t)^\top A_0^{-1}d_0(X)E^\dagger.
$$
Consequently, we define $K_n^e(s,t)=\mathbb E\{\psi_s^e\psi_t^e\}$, where possible dependence on $n$ through $p=p_n$ is suppressed. For an explicit expression, let
$$
c_e(s,t)=\mathbb E\big[\{h_s(E^\dagger)-\mu_e(s)\}\{h_t(E^\dagger)-\mu_e(t)\}\big],
$$
$$
d_e(t)=\mathbb E[\mathbb E^\dagger\{h_t(E^\dagger)-\mu_e(t)\}],
\qquad
\sigma_e^2=\mathbb E\{(E^\dagger)^2\}.
$$
Using the independence of $E^\dagger$ and $X$, we obtain
\begin{align*}
K_n^e(s,t)=&\mathbb E\{g_0(X)^2\}\,c_e(s,t)+B_e(t)^\top A_0^{-1}M_0\,d_e(s)+B_e(s)^\top A_0^{-1}M_0\,d_e(t)\\
&+\sigma_e^2B_e(s)^\top A_0^{-1}B_e(t).
\end{align*}

Let
$$
C_n^*(s,t)=\frac1n\sum_{i=1}^n\mathbb E^*\{\psi_{n,i}^*(s)\psi_{n,i}^*(t)\}.
$$
The preceding convergence results imply, for every fixed $s,t$, we have $C_n^*(s,t)-K_n^e(s,t)\to_p 0$. Assume, as required for the stated Gaussian limit, that $K_n^e(s,t)\longrightarrow K^e(s,t)$ pointwise, where $K^e$ is the covariance function of an $L^2(\varphi)$-valued mean-zero Gaussian process $U_\infty^e$. Then, by using the identical arguments used in Case 1, we can obtain,
$$
\mathrm{WICM}_n^*\stackrel{d}{\to}\int_{\mathbb R}|U_\infty^e(t)|^2\varphi(t)\,dt.
$$
In particular, $\mathrm{WICM}_n^*=O_{P^*}(1).$
\end{proof}

\begin{proof}[Proof of Corollary~\ref{cor:bootstrap_test}]
Let
$$
 T_0=\int_{\mathbb R}|U_\infty(t)|^2\varphi(t)\,dt
$$
and let $q_{1-\alpha}$ be its $(1-\alpha)$-quantile. Conditional bootstrap convergence gives
$$
 \widehat c_\alpha\to_p q_{1-\alpha}.
$$
Under $H_0$, Theorem~\ref{theorem:1} and Slutsky's theorem imply
$$
 \mathbb P(\mathrm{WICM}_n>\widehat c_\alpha)\to\alpha.
$$
Under a fixed detectable alternative, $\mathrm{WICM}_n/n\to_pc_1>0$, while $\widehat c_\alpha=O_p(1)$; hence the rejection probability tends to one. If $r_n=n^{-\gamma}$, for $0<\gamma<1/2$, then
$$
 \frac{\mathrm{WICM}_n}{nr_n^2}\to_pc_2>0,\qquad nr_n^2\to\infty,
$$
and again $\widehat c_\alpha=O_p(1)$. Finally, when $r_n=n^{-1/2}$,
$$
 \mathbb P(\mathrm{WICM}_n>\widehat c_\alpha)\longrightarrow
 \mathbb P\bigg\{
 \int_{\mathbb R}|U_\infty(t)+tK^{(2)}(t)|^2
 \varphi(t)\,dt>q_{1-\alpha}\bigg\}>0.
$$
This proves all three assertions.
\end{proof}

\section{Additional Numerical Results}
\label{sec:additional_num}
This section presents additional simulation results under the covariance structure $\Sigma_2=(2^{-|i-j|})_{p\times p}$, with all other settings kept the same as in the main text. The results are qualitatively similar and support the same conclusions as those reported in the main paper.

\begin{table}[ht!]
\caption{Empirical sizes and powers of $\text{WICM}_n^{(1)}$, $\text{WICM}_n^{(2)}$, $\text{CM}_n$, $\text{TCM}_n$, $\text{ICM}_n$, and $\text{PCvM}_n$ for model $H_{1}$ and covariance structure $\Sigma_2$.}
\label{table:h1_2}
	\centering
	{\small\tiny\hspace{5cm}
		\renewcommand{\arraystretch}{1}\tabcolsep 0.2cm
		\begin{tabular}{cccccccccccc}
			\hline
			&\multicolumn{1}{c}{a} &\multicolumn{1}{c}{n=100} &\multicolumn{1}{c}{n=100} &\multicolumn{1}{c}{n=100}&\multicolumn{1}{c}{n=100} &\multicolumn{1}{c}{n=100}  &\multicolumn{1}{c}{n=200} &\multicolumn{1}{c}{n=400}
			&\multicolumn{1}{c}{n=600}  \\
			&&\multicolumn{1}{c}{p=2} &\multicolumn{1}{c}{p=4}
			&\multicolumn{1}{c}{p=6}
			&\multicolumn{1}{c}{p=8} &\multicolumn{1}{c}{p=10} &\multicolumn{1}{c}{p=14} &\multicolumn{1}{c}{p=19}  &\multicolumn{1}{c}{p=22}\\
			\hline	
			$\text{WICM}_n^{(1)}$
			&0.0       &0.062 &0.049 &0.054 &0.052 &0.055 &0.051 &0.061 &0.046\\
			&0.1       &0.086 &0.099 &0.099 &0.105 &0.124 &0.155 &0.269 &0.368\\
			&0.2       &0.249 &0.256 &0.257 &0.297 &0.272 &0.502 &0.771 &0.891\\
			&0.3       &0.462 &0.462 &0.521 &0.541 &0.584 &0.823 &0.973 &0.997\\
			&0.4       &0.699 &0.730 &0.753 &0.783 &0.778 &0.974 &1.000 &1.000\\
			&0.5       &0.876 &0.907 &0.929 &0.924 &0.938 &0.998 &1.000 &1.000\\	
			\hline
			$\text{WICM}_n^{(2)}$
			&0.0       &0.046 &0.049 &0.052 &0.043 &0.047 &0.054 &0.048 &0.059\\
			&0.1       &0.058 &0.069 &0.059 &0.057 &0.058 &0.066 &0.071 &0.063\\
			&0.2       &0.106 &0.077 &0.078 &0.073 &0.078 &0.075 &0.097 &0.096\\
			&0.3       &0.226 &0.156 &0.108 &0.117 &0.100 &0.104 &0.127 &0.161\\
			&0.4       &0.366 &0.227 &0.168 &0.120 &0.149 &0.147 &0.187 &0.191\\
			&0.5       &0.521 &0.291 &0.247 &0.216 &0.180 &0.212 &0.262 &0.355\\
			\hline
            $\text{CM}_n$
            & 0.0 & 0.045 & 0.061 & 0.063 & 0.061 & 0.071 & 0.046 & 0.062 & 0.078 \\
            & 0.1 & 0.056 & 0.067 & 0.054 & 0.053 & 0.058 & 0.065 & 0.068 & 0.060 \\
            & 0.2 & 0.086 & 0.096 & 0.081 & 0.096 & 0.074 & 0.084 & 0.088 & 0.095 \\
            & 0.3 & 0.159 & 0.113 & 0.091 & 0.102 & 0.086 & 0.130 & 0.130 & 0.138 \\
            & 0.4 & 0.247 & 0.176 & 0.144 & 0.166 & 0.133 & 0.140 & 0.169 & 0.231 \\
            & 0.5 & 0.386 & 0.305 & 0.201 & 0.180 & 0.198 & 0.216 & 0.231 & 0.287 \\
            \hline
            $\text{TCM}_n$
            & 0.0   & 0.052 & 0.066 & 0.063 & 0.057 & 0.053 & 0.065 & 0.062 & 0.045 \\
            & 0.1   & 0.060 & 0.063 & 0.070  & 0.044 & 0.080  & 0.057 & 0.049 & 0.044 \\
            & 0.2   & 0.076 & 0.066 & 0.070  & 0.063 & 0.065 & 0.057 & 0.058 & 0.066 \\
            & 0.3   & 0.078 & 0.071 & 0.069 & 0.060  & 0.070  & 0.055 & 0.058 & 0.051 \\
            & 0.4   & 0.086 & 0.093 & 0.074 & 0.085 & 0.072 & 0.064 & 0.053 & 0.068 \\
            & 0.5   & 0.105 & 0.069 & 0.074 & 0.082 & 0.069 & 0.049 & 0.058 & 0.050  \\
            \hline
			$ \text{ICM}_n $
			&0.0       &0.047 &0.060 &0.037 &0.004 &0.000 &0.000 &0.000 &0.000\\
			&0.1       &0.062 &0.080 &0.034 &0.006 &0.000 &0.000 &0.000 &0.000\\
			&0.2       &0.084 &0.093 &0.043 &0.008 &0.002 &0.000 &0.000 &0.000\\
			&0.3       &0.129 &0.097 &0.064 &0.008 &0.001 &0.000 &0.000 &0.000\\
			&0.4       &0.227 &0.178 &0.091 &0.021 &0.002 &0.000 &0.000 &0.000\\
			&0.5       &0.346 &0.238 &0.131 &0.004 &0.002 &0.000 &0.000 &0.000\\
			\hline
			$\text{PCvM}_n$
	       &0.0       &0.040 &0.060 &0.056 &0.080 &0.068 &0.057 &0.057 &0.063\\
           &0.1       &0.045 &0.056 &0.056 &0.059 &0.063 &0.062 &0.075 &0.058\\
           &0.2       &0.070 &0.048 &0.060 &0.067 &0.052 &0.066 &0.053 &0.051\\
           &0.3       &0.075 &0.051 &0.049 &0.063 &0.054 &0.057 &0.058 &0.047\\
           &0.4       &0.067 &0.047 &0.049 &0.059 &0.060 &0.062 &0.065 &0.046\\
           &0.5       &0.081 &0.058 &0.062 &0.057 &0.077 &0.053 &0.042 &0.051\\
			\hline
	\end{tabular}}
\end{table}

\begin{table}[ht!]\caption{Empirical sizes and powers of $\text{WICM}_n^{(1)}$, $\text{WICM}_n^{(2)}$, $\text{CM}_n$, $\text{TCM}_n$, $\text{ICM}_n$, and $\text{PCvM}_n$ for model $H_{2}$ and covariance structure $\Sigma_2$.}
\label{table:h2_2}
	\centering
	{\small\tiny\hspace{5cm}
		\renewcommand{\arraystretch}{1}\tabcolsep 0.2cm
		\begin{tabular}{cccccccccccc}
			\hline
			&\multicolumn{1}{c}{a} &\multicolumn{1}{c}{n=100} &\multicolumn{1}{c}{n=100} &\multicolumn{1}{c}{n=100}&\multicolumn{1}{c}{n=100} &\multicolumn{1}{c}{n=100}  &\multicolumn{1}{c}{n=200} &\multicolumn{1}{c}{n=400}
			&\multicolumn{1}{c}{n=600}  \\
			&&\multicolumn{1}{c}{p=2} &\multicolumn{1}{c}{p=4}
			&\multicolumn{1}{c}{p=6}
			&\multicolumn{1}{c}{p=8} &\multicolumn{1}{c}{p=10} &\multicolumn{1}{c}{p=14} &\multicolumn{1}{c}{p=19}  &\multicolumn{1}{c}{p=22}\\
			\hline	
			$\text{WICM}_n^{(1)}$
			&0.0    &0.052 &0.054 &0.050 &0.058 &0.073  &0.065 &0.061 &0.056 \\
			&0.1    &0.356 &0.489 &0.547 &0.608 &0.633  &0.902 &0.998 &1.000 \\
			&0.2    &0.805 &0.933 &0.975 &0.979  &0.981  &1.000 &1.000 &1.000\\
			&0.3    &0.977 &0.997 &1.000 &1.000 &1.000  &1.000 &1.000 &1.000\\
			&0.4    &0.998 &1.000 &1.000 &1.000 &1.000  &1.000 &1.000 &1.000\\
			&0.5    &0.999 &1.000 &1.000 &1.000 &1.000  &1.000 &1.000 &1.000\\	
			\hline
			$\text{WICM}_n^{(2)}$
            & 0.0    & 0.049 & 0.047 & 0.041 & 0.056 & 0.060 & 0.052 & 0.047 & 0.057 \\
            & 0.1    & 0.119 & 0.237 & 0.247 & 0.311 & 0.333 & 0.669 & 0.943 & 0.998 \\
            & 0.2    & 0.419 & 0.664 & 0.716 & 0.760 & 0.756 & 0.980 & 1.000 & 1.000 \\
            & 0.3    & 0.731 & 0.884 & 0.900 & 0.913 & 0.904 & 0.996 & 1.000 & 1.000 \\
            & 0.4    & 0.882 & 0.957 & 0.938 & 0.912 & 0.911 & 0.998 & 1.000 & 1.000 \\
            & 0.5    & 0.943 & 0.959 & 0.939 & 0.930 & 0.906 & 0.996 & 1.000 & 1.000  \\
            \hline
            $\text{CM}_n$
            & 0.0    & 0.058 & 0.041 & 0.062 & 0.061 & 0.050 & 0.064 & 0.065 & 0.078 \\
            & 0.1    & 0.129 & 0.247 & 0.296 & 0.327 & 0.355 & 0.674 & 0.920 & 0.995 \\
            & 0.2    & 0.429 & 0.643 & 0.706 & 0.737 & 0.745 & 0.967 & 0.999 & 1.000  \\
            & 0.3    & 0.719 & 0.860  & 0.871 & 0.878 & 0.858 & 0.981 & 0.999 & 1.000  \\
            & 0.4    & 0.879 & 0.935 & 0.921 & 0.893 & 0.869 & 0.978 & 1.000 & 1.000  \\
            & 0.5    & 0.93  & 0.950  & 0.921 & 0.875 & 0.854 & 0.983 & 0.999 & 1.000  \\
             \hline
            $\text{TCM}_n$
            & 0.0   & 0.052 & 0.066 & 0.050  & 0.055 & 0.055 & 0.065 & 0.062 & 0.045 \\
            & 0.1   & 0.111 & 0.124 & 0.134 & 0.171 & 0.153 & 0.196 & 0.269 & 0.381 \\
            & 0.2   & 0.215 & 0.242 & 0.245 & 0.273 & 0.235 & 0.310  & 0.391 & 0.575 \\
            & 0.3   & 0.302 & 0.338 & 0.328 & 0.311 & 0.256 & 0.374 & 0.464 & 0.623 \\
            & 0.4   & 0.363 & 0.397 & 0.354 & 0.346 & 0.322 & 0.395 & 0.472 & 0.645 \\
            & 0.5   & 0.388 & 0.464 & 0.408 & 0.352 & 0.339 & 0.433 & 0.522 & 0.663 \\
            \hline
			$ \text{ICM}_n $
			&0.0       &0.047 &0.041 &0.026 &0.011 &0.001 &0.000 &0.000 &0.000\\
			&0.1       &0.101 &0.134 &0.123 &0.029 &0.002 &0.000 &0.000 &0.000\\
			&0.2       &0.299 &0.495 &0.427 &0.148 &0.015 &0.000 &0.000 &0.000\\
			&0.3       &0.636 &0.831 &0.761 &0.328 &0.032 &0.000 &0.000 &0.000\\
			&0.4       &0.837 &0.958 &0.868 &0.477 &0.038 &0.000 &0.000 &0.000\\
			&0.5       &0.953 &0.976 &0.919 &0.499 &0.055 &0.000 &0.000 &0.000\\
			\hline
			$\text{PCvM}_n$
			&0.0       &0.049 &0.057 &0.076 &0.065 &0.068 &0.058 &0.057 &0.059\\
			&0.1       &0.148 &0.274 &0.382 &0.447 &0.528 &0.894 &0.995 &1.000\\
			&0.2       &0.452 &0.735 &0.862 &0.925 &0.945 &1.000 &1.000 &1.000\\
			&0.3       &0.710 &0.957 &0.988 &0.995 &0.996 &1.000 &1.000 &1.000\\
			&0.4       &0.918 &0.996 &0.999 &1.000 &1.000 &1.000 &1.000 &1.000\\
			&0.5       &0.977 &1.000 &1.000 &1.000 &1.000 &1.000 &1.000 &1.000\\
			\hline
	\end{tabular}}
\end{table}

\begin{table}[ht!]\caption{Empirical sizes and powers of $\text{WICM}_n^{(1)}$, $\text{WICM}_n^{(2)}$, $\text{CM}_n$, $\text{TCM}_n$, $\text{ICM}_n$, and $\text{PCvM}_n$ for model $H_{3}$ and covariance structure $\Sigma_2$.}
\label{table:h3_2}
	\centering
	{\small\tiny\hspace{5cm}
		\renewcommand{\arraystretch}{1}\tabcolsep 0.2cm
		\begin{tabular}{cccccccccccc}
			\hline
			&\multicolumn{1}{c}{a} &\multicolumn{1}{c}{n=100} &\multicolumn{1}{c}{n=100} &\multicolumn{1}{c}{n=100}&\multicolumn{1}{c}{n=100} &\multicolumn{1}{c}{n=100}  &\multicolumn{1}{c}{n=200} &\multicolumn{1}{c}{n=400}
			&\multicolumn{1}{c}{n=600}  \\
			&&\multicolumn{1}{c}{p=2} &\multicolumn{1}{c}{p=4}
			&\multicolumn{1}{c}{p=6}
			&\multicolumn{1}{c}{p=8} &\multicolumn{1}{c}{p=10} &\multicolumn{1}{c}{p=14} &\multicolumn{1}{c}{p=19}  &\multicolumn{1}{c}{p=22}\\
			\hline	
			$\text{WICM}_n^{(1)}$
			& 0.0 & 0.053 & 0.047 & 0.051 & 0.053 & 0.061 & 0.057 & 0.064 & 0.059 \\
            & 0.05 & 0.310 & 0.405 & 0.438 & 0.482 & 0.504 & 0.775 & 0.976 & 0.999 \\
            & 0.10 & 0.692 & 0.844 & 0.913 & 0.926 & 0.937 & 0.999 & 1.000 & 1.000 \\
            & 0.15 & 0.897 & 0.986 & 0.996 & 0.997 & 0.999 & 1.000 & 1.000 & 1.000 \\
            & 0.20 & 0.976 & 1.000 & 1.000 & 1.000 & 1.000 & 1.000 & 1.000 & 1.000 \\
            & 0.25 & 0.994 & 1.000 & 1.000 & 1.000 & 1.000 & 1.000 & 1.000 & 1.000 \\	
			\hline
			$\text{WICM}_n^{(2)}$
             & 0.0      & 0.049  & 0.041  & 0.051  & 0.042  & 0.053  & 0.052  & 0.047  & 0.058  \\
             & 0.05   & 0.118  & 0.240   & 0.318  & 0.381  & 0.403  & 0.766  & 0.981  & 1.000       \\
             & 0.10    & 0.439  & 0.735  & 0.829  & 0.880   & 0.865  & 0.995  & 1.000       & 1.000       \\
             & 0.15   & 0.804  & 0.946  & 0.964  & 0.963  & 0.962  & 1.000      & 1.000       & 1.000       \\
             & 0.20    & 0.922  & 0.994  & 0.986  & 0.980   & 0.978  & 1.000   & 1.000       & 1.000       \\
             & 0.25   & 0.991  & 0.993  & 0.990   & 0.991  & 0.976  &1.000       & 1.000       & 1.000       \\
            \hline
            $\text{CM}_n$
            & 0.0    & 0.059  & 0.055  & 0.054  & 0.064  & 0.064  & 0.061  & 0.056  & 0.063  \\
            & 0.05   & 0.147  & 0.257  & 0.333  & 0.429  & 0.389  & 0.764  & 0.976  & 1.000       \\
            & 0.10   & 0.444  & 0.732  & 0.814  & 0.839  & 0.858  & 0.994  & 1.000       & 1.000       \\
            & 0.15   & 0.797  & 0.936  & 0.948  & 0.971  & 0.958  & 0.999  & 1.000       & 1.000       \\
            & 0.20   & 0.906  & 0.982  & 0.977  & 0.972  & 0.965  & 0.998  & 1.000       & 1.000       \\
            & 0.25   & 0.984  & 0.991  & 0.989  & 0.97   & 0.967  & 1.000       & 1.000       & 1.000       \\
            \hline
            $ \text{TCM}_n $
             & 0.0      & 0.068  & 0.059  & 0.053  & 0.060   & 0.073  & 0.060   & 0.066  & 0.063  \\
             & 0.05   & 0.128  & 0.169  & 0.162  & 0.183  & 0.173  & 0.291  & 0.362  & 0.528  \\
             & 0.10    & 0.205  & 0.305  & 0.313  & 0.302  & 0.298  & 0.453  & 0.613  & 0.827  \\
             & 0.15   & 0.312  & 0.421  & 0.452  & 0.411  & 0.377  & 0.523  & 0.739  & 0.920   \\
             & 0.20    & 0.400    & 0.476  & 0.480   & 0.418  & 0.443  & 0.575 & 0.784  & 0.951  \\
             & 0.25   & 0.453  & 0.532  & 0.516  & 0.494  & 0.491  & 0.605  & 0.841  & 0.959  \\
            \hline
			$ \text{ICM}_n $
            & 0.0    & 0.054  & 0.048  & 0.039  & 0.013  & 0.012  & 0.000      & 0.000      & 0.000      \\
            & 0.1    & 0.137  & 0.162  & 0.180   & 0.131  & 0.072  & 0.034  & 0.011  & 0.018  \\
            & 0.2    & 0.429  & 0.571  & 0.647  & 0.557  & 0.458  & 0.487  & 0.676  & 0.872  \\
            & 0.3    & 0.784  & 0.922  & 0.928  & 0.863  & 0.765  & 0.817  & 0.927  & 0.989  \\
            & 0.4    & 0.954  & 0.986  & 0.987  & 0.959  & 0.86   & 0.891  & 0.954  & 0.989  \\
            & 0.5    & 0.997  & 1.000      & 0.998  & 0.963  & 0.883  & 0.907  & 0.961  & 0.988  \\
			\hline
			$\text{PCvM}_n $
		& 0.0      & 0.056  & 0.061  & 0.070   & 0.068  & 0.069  & 0.052  & 0.059  & 0.059  \\
 & 0.05   & 0.195  & 0.298  & 0.405  & 0.439  & 0.484  & 0.806  & 0.983  & 1.000      \\
 & 0.10   & 0.576  & 0.773  & 0.883  & 0.904  & 0.927  & 0.999  & 1.000  &1.000        \\
 & 0.15   & 0.843  & 0.974  & 0.987  & 0.99   & 0.997  & 1.000       & 1.000       & 1.000       \\
 & 0.20   & 0.975  & 0.998  & 1.000   & 1.000       & 1.000       & 1.000      & 1.000       & 1.000       \\
 & 0.25   & 0.997  & 1.000       & 1.000       & 1.000       & 1.000       & 1.000       & 1.000       & 1.000       \\
			\hline
	\end{tabular}}
\end{table}

\begin{table}[ht!]\caption{Empirical sizes and powers of $\text{WICM}_n^{(1)}$, $\text{WICM}_n^{(2)}$, $\text{CM}_n$, $\text{TCM}_n$, $\text{ICM}_n$, and $\text{PCvM}_n$ for model $H_{4}$ and covariance structure $\Sigma_2$.}
\label{table:h4_2}
	\centering
	{\tiny\hspace{5cm}
		\renewcommand{\arraystretch}{1}\tabcolsep 0.6cm
		\begin{tabular}{cccccccccccc}
			\hline
			&\multicolumn{1}{c}{a} &\multicolumn{1}{c}{n=100}  &\multicolumn{1}{c}{n=200} &\multicolumn{1}{c}{n=400}
			&\multicolumn{1}{c}{n=600}  \\
			&&\multicolumn{1}{c}{p=10} &\multicolumn{1}{c}{p=14} &\multicolumn{1}{c}{p=19}  &\multicolumn{1}{c}{p=22}\\
			\hline
			$\text{WICM}_n^{(1)}$
           &0.0  & 0.046 & 0.052 & 0.059 & 0.053 \\ 
           &0.1  & 0.319 & 0.379 & 0.481 & 0.506 \\ 
           &0.2  & 0.490 & 0.552 & 0.631 & 0.697 \\ 
           &0.3  & 0.541 & 0.593 & 0.677 & 0.696 \\ 
           &0.4  & 0.594 & 0.659 & 0.684 & 0.717 \\ 
           &0.5  & 0.633 & 0.677 & 0.706 & 0.740 \\ 	
			\hline
			$\text{WICM}_n^{(2)}$
           & 0.0  & 0.050  & 0.051 & 0.057 & 0.049 \\
           & 0.1  & 0.074 & 0.081 & 0.074 & 0.061  \\
           & 0.2  & 0.088 & 0.128 & 0.128 & 0.193 \\
           & 0.3  & 0.180  & 0.236 & 0.373 & 0.469  \\
           & 0.4  & 0.263 & 0.363 & 0.570  & 0.719  \\
           & 0.5  & 0.341 & 0.524 & 0.756 & 0.862  \\
            \hline
            $\text{CM}_n$
          & 0.0   & 0.074 & 0.059 & 0.057 & 0.060   \\
          & 0.1   & 0.068 & 0.062 & 0.066 & 0.079 \\
          & 0.2   & 0.093 & 0.122 & 0.144 & 0.169  \\
          & 0.3   & 0.188 & 0.240 & 0.305 & 0.410  \\
          & 0.4   & 0.272 & 0.395 & 0.548 & 0.698  \\
          & 0.5   & 0.361 & 0.482 & 0.726 & 0.846 \\
            \hline
            $\text{TCM}_n$
            & 0.0    & 0.058 & 0.062 & 0.060 & 0.049 \\
            & 0.1    & 0.065 & 0.066 & 0.069 & 0.090 \\
            & 0.2    & 0.089 & 0.114 & 0.117 & 0.142 \\
            & 0.3    & 0.130  & 0.183 & 0.213 & 0.215 \\
            & 0.4    & 0.182  & 0.223 & 0.282 & 0.331 \\
            & 0.5    & 0.210  & 0.268 & 0.350  & 0.430  \\
            \hline
			$ \text{ICM}_n $
			& 0.0    & 0.006 & 0.000 & 0.000 & 0.000 \\
            & 0.1    & 0.013 & 0.000 & 0.000 & 0.000 \\
            & 0.2    & 0.023 & 0.000 & 0.000 & 0.000 \\
            & 0.3    & 0.047 & 0.000 & 0.000 & 0.000 \\
            & 0.4    & 0.054 & 0.005 & 0.000 & 0.000 \\
            & 0.5    & 0.072 & 0.000 & 0.000 & 0.000  \\
			\hline
			$\text{PCvM}_n$
            & 0.0  & 0.067 & 0.068 & 0.057 & 0.055  \\
            & 0.1  & 0.077 & 0.109 & 0.151 & 0.184  \\
            & 0.2  & 0.143 & 0.206 & 0.364 & 0.450   \\
            & 0.3  & 0.191 & 0.291 & 0.506 & 0.691   \\
            & 0.4  & 0.239 & 0.382 & 0.615 & 0.792   \\
            & 0.5  & 0.254 & 0.432 & 0.698 & 0.820    \\
			\hline			
	\end{tabular}}
\end{table}

\section{Technical Lemmas}
\label{sec:lemma}

\begin{lemma}[Proposition S1 in \cite{tan2025weighted}]
\label{lem:beta}
Under Assumptions \ref{as:model}--\ref{as:inter} in the main text. If \(p^2=o(n)\), then
\begin{equation*}
\|\widehat{\beta}-\widetilde{\beta}_0\|_2
=
O_p\!\left(\sqrt{\frac{p}{n}}\right).
\end{equation*}
Furthermore, if \(p^3\log n=o(n)\), then
\begin{equation*}
\sqrt{n}\,(\widehat{\beta}-\widetilde{\beta}_0)
=
\frac{1}{\sqrt{n}}
\sum_{i=1}^n
e_i \,\Sigma^{-1}\dot{m}(X_i,\widetilde{\beta}_0)
+
O_p\!\left(\sqrt{\frac{p^3}{n}}\right).
\end{equation*}
\end{lemma}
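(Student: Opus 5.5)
The lemma is a $Z$-estimation statement, so I would work with the least-squares estimating equation
\[
\Psi_n(\beta)=\frac1n\sum_{i=1}^n\{Y_i-m(X_i,\beta)\}\dot m(X_i,\beta),
\qquad \Psi_n(\widehat\beta)=0,
\]
whose population counterpart $\Psi(\beta)=\mathbb E\{\phi(X,\beta)\}$ vanishes at $\widetilde\beta_0$ by Assumption~\ref{as:inter} and the first-order condition. A key simplification is that
\[
\Psi_n(\widetilde\beta_0)=\frac1n\sum_{i=1}^n e_i\,\dot m(X_i,\widetilde\beta_0),
\]
with $e_i=Y_i-m(X_i,\widetilde\beta_0)$. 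This is exactly the leading term in the representation once premultiplied by $\Sigma^{-1}$.

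\textbf{Step 1 (score size).} I would show $\mathbb E\|\Psi_n(\widetilde\beta_0)\|_2^2\le Cp/n$. The summands are i.i.d. with mean zero. Each coordinate has variance bounded by $\mathbb E\{e^2F^2(X)\}$, which is finite by Assumptions~\ref{as:model}--\ref{as:var} and $\mathbb E F^4<\infty$.

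\textbf{Step 2 (Jacobian control).} Let $D_n(\beta)=\partial\Psi_n(\beta)/\partial\beta^\top=n^{-1}\sum_i\{\dot\phi(X_i,\beta)+e_i(\beta)\ddot m(X_i,\beta)\}$, whose mean at $\beta$ is $-\Sigma(\beta)$. Writing $B_r$ for the ball $\{\|\beta-\widetilde\beta_0\|_2\le r\}$, I aim for
\[
\sup_{\beta\in B_{C\sqrt{p/n}}}\|D_n(\beta)+\Sigma(\widetilde\beta_0)\|_{\rm op}=o_p(1),
\]
with a sharper rate available for Step 4. I would split this into three pieces.
\begin{itemize}
\item The fluctuation $D_n(\widetilde\beta_0)-\mathbb E D_n(\widetilde\beta_0)$ is bounded in Frobenius norm by $O_p(p/\sqrt n)$, since each entry has variance at most $\mathbb E F^2/n$.
\item The Lipschitz bounds in Assumption~\ref{as:lip} give $\|D_n(\beta)-D_n(\widetilde\beta_0)\|_{\rm op}\le p\sqrt p\,\|\beta-\widetilde\beta_0\|_2\,n^{-1}\sum_iF(X_i)$ (up to the $\ddot m$ terms, handled identically), which is $O_p(p^2/\sqrt n)$ on the ball.
\item The eigenvalue bound on $\dot\Sigma_l$ in Assumption~\ref{as:eigen} gives $\|\Sigma(\beta)-\Sigma(\widetilde\beta_0)\|_{\rm op}\le C\sqrt p\,\|\beta-\widetilde\beta_0\|_2$.
\end{itemize}
When $p^2=o(n)$ the first and third pieces are $o_p(1)$, which is what Step 3 needs. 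The second piece is only $o_p(1)$ under the stronger $p^4=o(n)$, so the crude Frobenius-times-envelope bound is not good enough here. This is the obstacle I expect to be hardest.

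\textbf{Step 3 (existence and rate).} By Assumption~\ref{as:eigen}, $\lambda_{\min}\{\Sigma(\beta)\}\ge C_1$ on the neighbourhood. On the sphere $\|\beta-\widetilde\beta_0\|_2=C\sqrt{p/n}$, the mean value theorem gives
\[
(\beta-\widetilde\beta_0)^\top\Psi_n(\beta)\le \|\beta-\widetilde\beta_0\|_2\,\|\Psi_n(\widetilde\beta_0)\|_2-(C_1-o_p(1))\|\beta-\widetilde\beta_0\|_2^2,
\]
which is negative with high probability for $C$ large, by Step 1. A Brouwer/Ortega--Rheinboldt argument then yields a root inside the ball. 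Identifiability (Assumption~\ref{as:inter}) together with compactness of $\Theta$ identifies this root with $\widehat\beta$. This gives $\|\widehat\beta-\widetilde\beta_0\|_2=O_p(\sqrt{p/n})$.

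\textbf{Step 4 (Bahadur representation).} From $0=\Psi_n(\widetilde\beta_0)+D_n(\bar\beta)(\widehat\beta-\widetilde\beta_0)$ I would write
\[
\sqrt n(\widehat\beta-\widetilde\beta_0)=\Sigma^{-1}\sqrt n\,\Psi_n(\widetilde\beta_0)+\Sigma^{-1}\{D_n(\bar\beta)+\Sigma\}\sqrt n(\widehat\beta-\widetilde\beta_0),
\]
with $\Sigma=\Sigma(\widetilde\beta_0)$. The first term is exactly $n^{-1/2}\sum_i e_i\Sigma^{-1}\dot m(X_i,\widetilde\beta_0)$. In the remainder, $\sqrt n\|\widehat\beta-\widetilde\beta_0\|_2=O_p(\sqrt p)$. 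The $\Sigma(\bar\beta)-\Sigma$ piece contributes $\sqrt p\cdot\sqrt{p/n}\cdot\sqrt p=\sqrt{p^3/n}$, which is the stated rate. The stochastic fluctuation of $D_n$ must therefore be shown to be $O_p(p/\sqrt n)$ in operator norm uniformly over the ball; the Frobenius bound of Step 2 only gives this at the single point $\widetilde\beta_0$.

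For the uniform operator-norm bound I would first try matrix Bernstein after truncating $F$ at level $\sqrt{n/\log n}$, using $\mathbb E F^4<\infty$ to make the truncation error negligible. I would then take a union bound over an $n^{-1}$-net of the ball, with Assumption~\ref{as:lip} controlling the discretization error; the $\log n$ factor this introduces is absorbed by the hypothesis $p^3\log n=o(n)$.
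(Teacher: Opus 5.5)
The paper imports this lemma without proof, so I judge your argument on its own terms. Your outline (score bound, Jacobian control on $B_r$ with $r=C\sqrt{p/n}$, Brouwer, one-step expansion) is the standard one. However, the step that carries all the weight, namely uniform control of $D_n$ over $B_r$, is not established, and the remedy you sketch would not establish it.

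\textbf{The Jacobian bound in Steps 2--3.} Step~3 uses $\sup_{\beta\in B_r}\|D_n(\beta)+\Sigma(\widetilde\beta_0)\|_{\mathrm{op}}=o_p(1)$. By your own count, your bound gives this only when $p^4=o(n)$. So the rate is unproved under $p^2=o(n)$, and also under the condition $p^3\log n=o(n)$ used in the paper. Your three bullets are also not a decomposition. The second bounds the \emph{uncentered} increment $D_n(\beta)-D_n(\widetilde\beta_0)$ by the envelope, and that increment already contains the mean shift $\Sigma(\widetilde\beta_0)-\Sigma(\beta)$. So the third bullet, the only place where the $\dot\Sigma_l$ condition of Assumption~\ref{as:eigen} enters, does no work. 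The missing idea is the split
\[
D_n(\beta)+\Sigma(\widetilde\beta_0)=\{D_n(\widetilde\beta_0)+\Sigma(\widetilde\beta_0)\}+G_n(\beta)+\{\Sigma(\widetilde\beta_0)-\Sigma(\beta)\},
\]
where $G_n(\beta)=\{D_n(\beta)-\mathbb{E}D_n(\beta)\}-\{D_n(\widetilde\beta_0)-\mathbb{E}D_n(\widetilde\beta_0)\}$ is the centered increment. The last term is $O(\sqrt p\,r)=O(p/\sqrt n)$ in operator norm by Assumption~\ref{as:eigen}. By Assumption~\ref{as:lip}, each entry of $G_n(\beta)$ averages centered variables $\zeta_i-\mathbb{E}\zeta_i$ with $|\zeta_i|\le\sqrt p\,\|\beta-\widetilde\beta_0\|_2F(X_i)(1+|\varepsilon_i|)$. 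So for fixed $\beta\in B_r$ you get $\|G_n(\beta)\|_F=O_p(p^2/n)$ instead of your $p^2/\sqrt n$. (Also, the second summand of $D_n$ must carry $\varepsilon_i$, not $Y_i-m(X_i,\beta)$. Otherwise $\{m(X_i)-m(X_i,\beta)\}\ddot m$ is counted twice and the mean is not $-\Sigma(\beta)$.)

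\textbf{The union bound in Step 4.} The same issue sinks your Step~4 plan. An $n^{-1}$-net of $B_r\subset\mathbb{R}^p$ has about $(C\sqrt{pn})^p$ points, so a union bound inflates deviations by $\sqrt{p\log n}$, not $\sqrt{\log n}$. You propose applying it to the level $D_n(\beta)-\mathbb{E}D_n(\beta)$, whose matrix-Bernstein variance proxy the assumptions bound only at order $p^2/n$ (the $\ddot m$ terms are controlled solely through the entrywise envelope). This yields at best $p^{3/2}\sqrt{\log n/n}$ uniformly, hence a remainder of order $p^2\sqrt{\log n/n}$. That is neither $O_p(\sqrt{p^3/n})$ nor $o_p(1)$ under $p^3\log n=o(n)$. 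Even a loss of only $\sqrt{\log n}$ would leave $\sqrt{p^3\log n/n}$, not the stated rate.

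The maximal inequality has to be applied to $G_n$, whose summands are small in $L_2$. This class is Lipschitz in $\beta$ with Lipschitz function $\sqrt p\,F(x)(1+|\varepsilon|)$ entrywise, so a bracketing bound gives $\mathbb{E}\sup_{B_r}|G_{n,kl}|\lesssim pr/\sqrt n$. Hence $\sup_{B_r}\|G_n\|_F=O_p(p^{5/2}/n)$, which is $O_p(p/\sqrt n)$ once $p^3\lesssim n$. Your net-plus-Bernstein device also works on $G_n$, provided the envelope $F(X)(1+|\varepsilon|)$ is truncated at order $p$ times a slowly diverging factor rather than at $\sqrt{n/\log n}$. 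Its leading term $p^{5/2}\sqrt{\log n}/n$ is then $o(p/\sqrt n)$ precisely when $p^3\log n=o(n)$. With this, Steps~3 and~4 go through under $p^3\log n=o(n)$, with remainder $O_p(\sqrt p\cdot p/\sqrt n)=O_p(\sqrt{p^3/n})$.

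\textbf{What remains open.} First, the bound $O_p(p^{5/2}/n)$ is $o_p(1)$ only when $p^{5/2}=o(n)$. So the rate claim under the bare hypothesis $p^2=o(n)$ needs an idea beyond a Jacobian-on-the-ball argument driven by the entrywise Lipschitz bounds. Second, ``identifiability and compactness identify the root with $\widehat\beta$'' is not automatic as $p\to\infty$. You must show that the global least-squares minimizer lies in $U(\widetilde\beta_0)$ with probability tending to one, which needs a consistency argument uniform in $p$ that the pointwise uniqueness in Assumption~\ref{as:inter} does not provide. You must also show that $\Psi_n$ has only one zero there.
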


\bibliographystyle{apalike}
\bibliography{ref}

\end{document}

%% file: definition.tex
\usepackage{amsmath,amssymb}

\usepackage{mathrsfs}
\usepackage{bm, xcolor}
\usepackage{mathrsfs}
\usepackage{verbatim}
\usepackage{longtable}
\usepackage{threeparttable}
\usepackage{color}
\usepackage{amsmath}

\usepackage{enumerate}
\usepackage{amssymb,bbm}
\usepackage{amsbsy}
\usepackage{amsmath}
\usepackage{amsthm}
\usepackage{amsfonts}
\usepackage{latexsym}
\usepackage{xifthen} 
\usepackage{color}
\usepackage{manfnt}
\usepackage{ifthen}
\usepackage{bm}
\usepackage{caption}
\usepackage{subcaption}

\makeatletter
\def\singlespace{\def\baselinestretch{1}\@normalsize}

\newtheorem{lemma}{Lemma}
\newtheorem{theorem}{Theorem}

\newtheorem{corollary}{Corollary}
\@addtoreset{equation}{section}

\newtheorem{assumption}{Assumption}

\def\singlespace{\def\baselinestretch{1}\@normalsize}

\makeatother

\def\newpage{\vfill\eject}

\newdimen\biblioindent    \biblioindent=30pt

 at 7truept

\def\beqr{\begin{eqnarray}}
	\def\eeqr{\end{eqnarray}}
\def\beqrs{\begin{eqnarray*}}
	\def\eeqrs{\end{eqnarray*}}

\def\beq{\begin{equation}}
\def\eeq{\end{equation}}
\def\beqn{\begin{eqnarray}}
\def\eeqn{\end{eqnarray}}
\def\beqnn{\begin{eqnarray*}}
\def\eeqnn{\end{eqnarray*}}

